\newcommand{\CC}{\mathbb{C}}
\newcommand{\RR}{\mathbb{R}}
\newcommand{\ZZ}{\mathbb{Z}}
\def\S{\mathbb{S}}
\newcommand{\HH}{\mathcal{H}}
\newcommand{\dom}{\mathop{\mathcal{D}}}
\newcommand{\im}{\mathop{\mathrm{im}}}
\newcommand{\tr}{\mathop{\mathrm{tr}}}
\newcommand{\vol}{\mathop{\mathrm{vol}}}
\def\Tr{\mathrm{Tr}}
\def\Y{Y}
\def\X{X}
\def\E{\mathcal{E}}
\def\J{\mathcal{J}}
\def\Q{\mathcal{Q}}
\def\A{\mathcal{A}}
\def\B{\mathcal{B}}
\def\ind{\mathrm{ind}}
\newcommand{\bR}{\overline{\R}}
\newcommand{\bRp}{\overline{\R_+}}
\def\K{\mathcal K}
\newcommand{\R}{{\mathbb R}}
\newcommand{\T}{{\mathbb T}}
\newcommand{\Z}{{\mathbb Z}}
\newcommand{\N}{{\mathbb N}}
\newcommand{\C}{{\mathbb C}}
\def\f{\mathfrak{f}}
\def\Var{\mathrm{Var}}
\def\sign{\;\!\mathrm{sign}}
\def\ud{{\textstyle \frac{1}{2}}}
\def\i{{\tt i }}
\def\q{{\tt q}}
\def\b{{\tt b}}
\def\d{{\tt d}}
\def\j{{\tt j}}
\def\wind{\mathrm{wind}}
\def\ch{\mathrm{ch}}
\def\dd{\mathrm{d}}
\def\det{\mathrm{det}}
\def\AB{A\!B}
\def\CD{C\!D}
\def\F21{{}_2F_1}
\def\2{\mathfrak{int}}
\def\P{{\mathrm P}}
\def\Arg{\theta}
\newcommand{\bOmega}{{\bf \Omega}}
\newcommand{\bGamma}{{\bf \Gamma}}
\newcommand{\bP}{{\bf P}}
\newtheorem{theorem}{Theorem}
\newtheorem{prop}[theorem]{Proposition}
\newtheorem{lemma}[theorem]{Lemma}
\newtheorem{corol}[theorem]{Corollary}
\newtheorem{defin}[theorem]{\bf Definition}
\theoremstyle{definition}
\newtheorem{rem}[theorem]{\bf Remark}
\newtheorem{example}[theorem]{\bf Example}
\begin{document}

\title{\bf Levinson's theorem and higher degree traces \\
for Aharonov-Bohm operators}

\author{Johannes Kellendonk$^1$, Konstantin Pankrashkin$^2$ and Serge Richard$^3$}
\date{\small}
\maketitle

\begin{quote}
\begin{itemize}
\item[$^1$] Universit\'e de Lyon, Universit\'e Lyon I, CNRS UMR5208, Institut Camille Jordan, 43 blvd du 11 novembre 1918, 69622 Villeurbanne Cedex, France; \\
E-mail: {\tt kellendonk@math.univ-lyon1.fr}
\item[$^2$] Laboratoire de Math\'ematiques d'Orsay, CNRS UMR 8628, Universit\'e Paris-Sud XI, B\^atiment 425, 91405 Orsay Cedex, France;\\
E-mail: {\tt konstantin.pankrashkin@math.u-psud.fr}
\item[$^3$] Graduate School of Pure and Applied Sciences,
University of Tsukuba, 1-1-1 Tennodai, Tsukuba,
Ibaraki 305-8571, Japan; \\
E-mail: {\tt richard@math.univ-lyon1.fr}
\\[\smallskipamount]
On leave from Universit\'e de Lyon, Universit\'e Lyon I, CNRS UMR5208, Institut Camille Jordan, 43 blvd du 11 novembre 1918, 69622 Villeurbanne Cedex, France
\end{itemize}
\end{quote}

\begin{abstract}
We study Levinson type theorems for the family of Aharonov-Bohm models from different perspectives.
The first one is purely analytical involving the explicit calculation of the wave-operators and allowing
to determine precisely the various contributions to the left hand side of Levinson's theorem, namely those
due to the scattering operator, the terms at $0$-energy and at energy $+\infty$. The second one is based on non-commutative topology revealing the topological nature of Levinson's theorem. We then include the parameters of the family into the topological description obtaining a new type of Levinson's theorem, a higher degree Levinson's theorem. In this context, the Chern number of a bundle defined by a family of projections on bound states is explicitly computed and related to the result of a $3$-trace applied on the scattering part of the model.
\end{abstract}

\textbf{Key Words:}  Aharonov-Bohm operators, scattering theory, wave operators, index theorem, higher degree traces

\section{Introduction}

In recent work \cite{KR1,KRx,KR1/2,KR3D,RT} it was advocated that
Levinson's theorem is of topological nature, namely that it should be viewed as an index theorem.
The relevant index theorem occurs naturally in the framework of non-commutative  topology, that is,
$C^*$-algebras, their $K$-theory and higher traces (unbounded cyclic cocycles). The analytical hypothesis
which has to be fulfilled for the index theoretic formulation to hold is that
 the wave operators of the scattering system lie in a certain $C^*$-algebra.  In the examples considered until now, the index theorem substantially extends the usual Levinson's theorem which relates the number of bound states of a physical system to an expression depending on the scattering part of the system. In particular it sheds new light on the corrections due to resonances and on the regularization which are often involved in the proof of this relation. It also emphasizes the influence of the restriction of the waves operators at thresholds energies.

In the present paper we extend these investigations in two directions. On the one hand, we apply the general idea for the first time to a magnetic system.
Indeed, the Aharonov-Bohm operators describe a two-dimensional physical system involving a singular magnetic field located at the origin and perpendicular to the plane of motion. On the other hand, due to the large number of parameters present in this model, we can develop a new topological equality involving higher degree traces. Such an equality, which we call a {\em higher degree Levinson's theorem}, extends naturally the usual Levinson's theorem (which corresponds to a relation between an $0$-trace and a $1$-trace) and it is apparently the first time that a relation between a $2$-trace and a $3$-trace is put into evidence in a physical context. While the precise physical meaning of this equality deserves more investigations, we have no doubt that it can play a role in the theory of topological transport and/or of adiabatic pumping \cite{Graf}.

Let us describe more precisely the content of this paper. In Section \ref{recall} we recall the contruction of the Aharonov-Bohm operators and present part of the results obtained in \cite{PR}. Earlier references for the basic properties of these operators are \cite{AT,AB,DS,R,Rui}. In particular, we recall the explicit expressions for the wave operators in terms of functions of the free Laplacian and of the generator of the dilation group in $\R^2$. Let us mention that the theory of boundary triples, as presented in \cite{BGP} was extensively used in reference \cite{PR} for the computation of these explicit expressions.

In Section \ref{secLev} we state and prove a version of Levinson's theorem adapted to our model, see Theorem \ref{Lev0}. It will become clear at that moment that a naive approach of this theorem involving only the scattering operator would lead to a completely wrong result. Indeed, the corrections due to the restriction of the wave operators at $0$-energy and at energy equal to $+\infty$ will be explicitly computed. Adding these different contributions leads to a first proof of Levinson's theorem. All the various situations, which depend on the parameters related to the flux of the magnetic field and to the description of the self-adjoint extensions, are summarized in Section \ref{Sectionfinal}. Let us stress that this proof is rather lengthy but that it leads to a very precise result. Note that up to this point, no $C^*$-algebraic knowledge is required, all proofs are purely analytical.

The last two sections of the paper contain the necessary algebraic framework, the two topological statements and their proofs. So Section \ref{secK} contains a very short introduction to $K$-theory, cyclic cohomology, $n$-traces, Connes' pairing and the dual boundary maps. Obviously, only the very few necessary information on these subjects is presented, and part of the constructions are over-simplified. However, the authors tried to give a flavor of this necessary background for non-experts, but any reader familiar with these constructions can skip Section \ref{secK} without any lost of understanding in the last part of the paper.

In the first part of Section \ref{secAlgebra}, we construct a suitable $C^*$-algebra $\E$ which contains the wave operators. For computational reasons, this algebra should neither be too small nor too large. In the former case, the computation of its quotient by the ideal of compact operators would be too difficult and possibly not understandable, in the latter case the deducible information would become too vague. In fact, the algebra we propose is very natural once the explicit form of the wave operators is known. Once the quotient of the algebra $\E$ by the compact operators is computed, the new topological version of Levinson's can be stated. This is done in Theorem \ref{Ktheo} and in that case its proof is contained in a few lines. Note furthermore that there is a big difference between Theorem \ref{Lev0} and the topological statement (and its corollary). In the former case, the proof consisted in checking that the sum of various explicit contributions is equal to the number of bound states of the corresponding system. In the latter case, the proof involves a topological argument and it clearly shows the topological nature of Levinson's theorem. However, the statement is global, and the contributions due to the scattering operator and to the restrictions at $0$-energy and at energy $+\infty$ can not be distinguished. For that reason, both approach are complementary. Note that the topological approach opens the way towards generalisations which could hardly be guessed from the purely analytical approach.

Up to this point, the flux of the magnetic field as well as the parameters involved in the description of the self-adjoint extension were fixed. In the second topological statement, we shall consider a smooth boundaryless submanifold of the parameters space and perform some computations as these parameters vary on the manifold. More precisely, we first state an equality between a continuous family of projections on the bound states and the image through the index map of a continuous family of unitary operators deduced from the wave operators, see Theorem \ref{thm-ENN}. These unitary operators contain a continuous family of scattering operators, but also the corresponding continuous family of restrictions at energies $0$ and $+\infty$. Note that this result is still abstract, in the sense that it gives an equality between an equivalent class in the $K_0$-theory related to the bounded part of the system with an equivalent class in the $K_1$-theory related to the scattering part
  of the system, but nothing prevents this equality from being trivial in the sense that it yields $0=0$.

In the final part of the paper, we choose a $2$-dimensional submanifold and show that the second topological result is not trivial. More precisely, we explicitly compute the pairings of the $K$-equivalent classes with their respective higher degree traces. On the one hand this leads to the computation of the Chern number of a bundle defined by the family of projections. For the chosen manifold this number is equal to $1$, and thus is not trivial. By duality of the boundary maps, it follows that the natural $3$-trace applied on the family of unitary operators is also not trivial. The resulting statement is provided in Proposition \ref{propfinal}. Note that this statement is again global. A distinction of each contribution could certainly be interesting for certain applications, but its computation could be rather tedious and therefore no further investigations have been performed in that direction.

\section*{Acknowledgements}
S. Richard was supported by the Swiss National Science Foundation and is now supported by the Japan Society for the Promotion of Sciences.

\section{The Aharonov-Bohm model}\label{recall}

In this section, we briefly recall the construction of the Aharonov-Bohm operators and
present a part of the results obtained in \cite{PR} to which we refer for details.
We also mention \cite{AT,DS,Rui} for earlier works on these operators.

\subsection{The self-adjoint extensions}\label{ssec21}

Let $\HH$ denote the Hilbert space $L^2(\RR^2)$ with its scalar product
$\langle \cdot,\cdot\rangle$ and its norm $\|\cdot \|$.
For any $\alpha \in (0,1)$, we set $A_\alpha: \RR^2\setminus\{0\} \to \RR^2$ by
\begin{equation*}
A_\alpha(x,y)= -\alpha \Big(\frac{-y}{x^2+y^2},
\frac{x}{x^2+y^2}
\Big),
\end{equation*}
corresponding formally to the magnetic field $B=\alpha\delta$ ($\delta$ is the Dirac
delta function), and consider the operator
\begin{equation*}
H_\alpha:=(-i\nabla -A_\alpha)^2,
\qquad \dom(H_\alpha)=C_c^\infty\big(\RR^2\setminus\{0\}\big)\ .
\end{equation*}
Here $C_c^\infty(\Xi)$ denotes the set of smooth functions on $\Xi$ with compact support.
The closure of this operator in $\HH$, which is denoted by the same symbol,
is symmetric and has deficiency indices $(2,2)$.

We briefly recall the parametrization of the self-adjoint extensions of $H_\alpha$ from \cite{PR}.
Some elements of the domain of the adjoint operator $H_\alpha^*$ admit singularities
at the origin. For dealing with them, one defines linear functionals $\Phi_0$, $\Phi_{-1}$, $\Psi_0$, $\Psi_{-1}$
on $\dom(H_\alpha^*)$ such that for $\f\in\dom(H_\alpha^*)$ one has, with $\theta \in [0,2\pi)$ and $r \to 0_+$,
\[
2\pi \f(r\cos\theta,r\sin\theta)= \Phi_0(\f)r^{-\alpha}+\Psi_0(\f) r^\alpha
+e^{-i\theta} \Big(
\Phi_{-1}(\f)r^{\alpha-1}+\Psi_{-1}(\f) r^{1-\alpha}
\Big) +O(r).
\]
The family of all self-adjoint extensions of the operator $H_\alpha$ is then indexed by two matrices
$C,D \in M_2(\CC)$ which satisfy the following conditions:
\begin{equation}
\label{eq-mcd}
\text{(i) $CD^*$ is self-adjoint,\qquad  (ii) $\det(CC^* + DD^*)\neq 0$,}
\end{equation}
and the corresponding extensions $H^{\CD}_\alpha$ are the restrictions of $H_\alpha^*$
onto the functions $\f$ satisfying the boundary conditions
\[
C \begin{pmatrix}
\Phi_0(\f)\\ \Phi_{-1}(\f)
\end{pmatrix}
=2D \begin{pmatrix}
\alpha \Psi_0(\f)\\  (1-\alpha)\Psi_{-1}(\f)
\end{pmatrix}.
\]
For simplicity, we call \emph{admissible} a pair of
matrices $(C,D)$  satisfying the above conditions.

\begin{rem}\label{1to1}
The parametrization of the self-adjoint extensions of $H_\alpha$ with all admissible
pairs $(C,D)$ is very convenient but highly none unique.
At a certain point, it will be useful to have a one-to-one parametrization of all
self-adjoint extensions.
So, let us consider $U \in U(2)$ and set
\begin{equation*}
C(U) := {\textstyle \frac{1}{2}}(1-U) \quad \hbox{ and }
\quad D(U) = {\textstyle \frac{i}{2}}(1+U).
\end{equation*}
It is easy to check that $C(U)$ and $D(U)$ satisfy both conditions \eqref{eq-mcd}.
In addition, two different elements $U,U'$ of $U(2)$ lead to two different self-adjoint
operators $H_\alpha^{C(U)\;\!D(U)}$ and $H_\alpha^{C(U')\;\!D(U')}$, {\it cf.}~\cite{Ha}.
Thus, without ambiguity we can write $H_\alpha^U$ for the operator $H_\alpha^{C(U)\;\!D(U)}$.
Moreover, the set $\{H_\alpha^U\mid U \in U(2)\}$ describes all
self-adjoint extensions of $H_\alpha$.
Let us also mention that the normalization of the above maps has been chosen such that
$H_\alpha^{-1}\equiv H_\alpha^{10}= H_\alpha^{\AB}$ which corresponds to the standard Aharonov-Bohm operator
studied in \cite{AB,Rui}.
\end{rem}

The essential spectrum of $H^{\CD}_\alpha$ is absolutely continuous and covers the positive half line $[0,+\infty)$.
The discrete spectrum consists of at most two negative eigenvalues. More precisely, the number of negative eigenvalues
of $H^{\CD}_\alpha$ coincides with the number of negative eigenvalues of the matrix $CD^*$.

The negative eigenvalues are the real negative solutions of the equation
\[
\det\big(
DM(z)-C\big)=0
\]
where $M(z)$ is, for $z<0$,
\[
M(z)=- \frac{2}{\pi} \sin (\pi \alpha)\,
\begin{pmatrix}
\Gamma(1-\alpha)^2 \Big( -\dfrac{z}{4}\Big)^\alpha & 0 \\
0& \Gamma(\alpha)^2 \Big( -\dfrac{z}{4}\Big)^{1-\alpha}
\end{pmatrix},
\]
and there exists an injective map $\gamma(z):\CC^2\to\HH$ depending continuously on $z \in \C \setminus [0,+\infty)$ and calculated explicitly in \cite{PR}
such that for each $z<0$ one has $\ker(H^{\CD}_\alpha-z)=\gamma(z)\ker\big(
DM(z)-C\big)$.

\subsection{Wave and scattering operators}

One of the main result of \cite{PR} is an explicit description of the wave operators.
We shall recall this result below, but we first need
to introduce the decomposition of the Hilbert space $\HH$ with respect to the
spherical harmonics.
For any $m \in \ZZ$, let $\phi_m$ be the complex function defined by
$[0,2\pi)\ni \theta \mapsto \phi_m(\theta):= \frac{e^{im\theta}}{\sqrt{2\pi}}$.
One has then the canonical isomorphism
\begin{equation}\label{decomposition}
\HH \cong \bigoplus_{m \in \ZZ} \HH_r \otimes [\phi_m] \ ,
\end{equation}
where $\HH_r:=L^2(\RR_+, r\;\!\dd r)$ and $[\phi_m]$ denotes the one dimensional space spanned by $\phi_m$.
For shortness, we write $\HH_m$ for $\HH_r \otimes [\phi_m]$, and often consider it as a subspace of $\HH$.
Let us still set $\HH_\2:=\HH_0\oplus\HH_{-1}$ which is clearly isomorphic to $\HH_r\otimes \C^2$.

Let us also recall that the unitary dilation group
$\{U_\tau\}_{\tau \in \RR}$ is defined on any $\f \in \HH$ and $x \in \RR^2$ by
\begin{equation*}
[U_\tau \f](x) = e^\tau \f(e^\tau x)\ .
\end{equation*}
Its self-adjoint generator $A$ is formally given by
$\frac{1}{2}(X\cdot (-i\nabla) + (-i\nabla)\cdot X)$, where $X$ is the position operator and $-i\nabla$
is its conjugate operator. All these operators are essentially self-adjoint on the Schwartz space on $\RR^2$.
Clearly, the group of dilations as well as its generator leave each subspace $\HH_m$ invariant.

Let us now consider the wave operators
\begin{equation*}
\Omega^{\CD}_-:=\Omega_-(H^{\CD}_\alpha,H_0)=s-\lim_{t\to - \infty}e^{itH^{\CD}_\alpha }\;\!e^{-itH_0 }\ .
\end{equation*}
where $H_0:=-\Delta$.
It is well known that for any admissible pair $(C,D)$ the operator $\Omega_\pm^{\CD}$ is reduced by
the decomposition $\HH=\HH_\2 \oplus \HH_\2^\bot$ and that
$\Omega_-^{\CD}|_{\HH_\2^\bot} = \Omega_-^{\AB}|_{\HH_\2^\bot}$.
The restriction to $\HH_\2^\bot$ is further reduced by the decomposition \eqref{decomposition}
and it is proved in \cite[Prop.~10]{PR} that the channel wave operators satisfy for each $m \in \Z$,
\begin{equation*}
\Omega_{-,m}^{\AB} = \varphi_m^-(A)\ ,
\end{equation*}
with $\varphi_m^-$ explicitly given for $x \in \R$ by
\begin{equation*}
\varphi^-_m(x):=e^{i\delta_m^\alpha}\;\!
\frac{\Gamma\big(\frac{1}{2}(|m|+1+ix)\big)}{\Gamma\big(\frac{1}{2}(|m|+1-ix)\big)}
\;\!
\frac{\Gamma\big(\frac{1}{2}(|m+\alpha|+1-ix)\big)}{\Gamma\big(\frac{1}{2}(|m+\alpha|+1+ix)\big)}
\end{equation*}
and
\begin{equation*}
\delta_m^\alpha = \hbox{$\frac{1}{2}$}\pi\big(|m|-|m+\alpha|\big)
=\left\{\begin{array}{rl}
-\hbox{$\frac{1}{2}$}\pi\alpha & \hbox{if }\ m\geq 0 \\
\hbox{$\frac{1}{2}$}\pi\alpha & \hbox{if }\ m< 0
\end{array}\right.\ .
\end{equation*}
It is also proved in \cite[Thm.~11]{PR} that
\begin{equation}\label{yoyo}
\Omega_-^{\CD}|_{\HH_\2} = \Big(
\begin{smallmatrix}\varphi^-_0(A) & 0 \\ 0 & \varphi^-_{-1}(A) \end{smallmatrix}\Big) +
\Big(
\begin{smallmatrix}\tilde{\varphi}_0(A) & 0 \\ 0 & \tilde{\varphi}_{-1}(A) \end{smallmatrix}\Big)
\widetilde{S}^{\CD}_\alpha\big(\sqrt{H_0}\big)
\end{equation}
with $\tilde{\varphi}_m(x)$ given for $m \in \{0,-1\}$ by
\begin{eqnarray*}
\frac{1}{2\pi}\;\!e^{-i\pi|m|/2} \;\!e^{\pi x/2}\;\!
\frac{\Gamma\big(\frac{1}{2}(|m|+1+ix)\big)}{\Gamma
\big(\frac{1}{2}(|m|+1-ix)\big)}  \Gamma\big(\ud(1+|m+\alpha|-ix)\big)
\;\!\Gamma\big(\ud(1-|m+\alpha|-ix)\big)\ .
\end{eqnarray*}

Clearly, the functions $\varphi^-_m$ and $\tilde{\varphi}_m$ are continuous on $\R$. Furthermore,
these functions admit limits at $\pm \infty$: $\varphi^-_m(-\infty)=1$, $\varphi^-_m(+\infty)=e^{2i\delta^\alpha_m}$,
$\tilde{\varphi}_m(-\infty)=0$ and $\tilde{\varphi}_m(+\infty)=1$.
Note also that the expression for the function $\widetilde{S}^{\CD}_\alpha(\cdot)$ is given
for $\kappa \in \R_+$ by
\begin{eqnarray*}
\widetilde{S}_\alpha^{\CD}(\kappa)
&:=& 2i\sin(\pi\alpha)
\left(\begin{matrix}
\frac{\Gamma(1-\alpha)\;\!e^{-i\pi\alpha/2}}{2^\alpha}\;\!\kappa^{\alpha} & 0 \\
0 & \frac{ \Gamma(\alpha)\;\!e^{-i\pi(1-\alpha)/2}}{2^{1-\alpha}}\;\!\kappa^{(1-\alpha)}
\end{matrix}\right)  \\
&&\cdot \left( D\,
\left(\begin{matrix}
\frac{\Gamma(1-\alpha)^2 \;\!e^{ -i\pi\alpha}}{4^\alpha}\;\!\kappa^{2\alpha} & 0 \\
0& \frac{\Gamma(\alpha)^2\;\! e^{ -i\pi(1-\alpha)}}{4^{1-\alpha}}\;\!\kappa^{2(1-\alpha)}
\end{matrix}\right)
+\frac{\pi}{2\sin(\pi\alpha)}C\right)^{-1} D \\
&& \cdot
\left(\begin{matrix}
\frac{ \Gamma(1-\alpha)\;\!e^{-i\pi\alpha/2}}{2^\alpha}\;\!\kappa^{\alpha} & 0 \\
0 & -\frac{ \Gamma(\alpha)\;\!e^{-i\pi(1-\alpha)/2}}{2^{1-\alpha}}\;\!\kappa^{(1-\alpha)}
\end{matrix}\right)\ .
\end{eqnarray*}

As usual, the scattering operator is defined by the formula
\begin{equation*}
S^{\CD}_\alpha:=\big[\Omega^{\CD}_+\big]^* \Omega^{\CD}_-.
\end{equation*}
Then, the relation between this operator and $\widetilde{S}^{\CD}_\alpha$ is of the form
\begin{equation} \label{eq-stilde}
S_\alpha^{\CD}|_{\HH_\2}=
S_\alpha^{\CD}(\sqrt{H_0})
\quad\hbox{with}\quad
S_\alpha^{\CD}(\kappa):=\begin{pmatrix}
e^{-i\pi\alpha} & 0 \\
0 & e^{i\pi\alpha}
\end{pmatrix}
+ \widetilde{S}_\alpha^{\CD}(\kappa)\ .
\end{equation}
The following result has been obtained in \cite[Prop.~13]{PR} and will be necessary further on:

\begin{prop}\label{propSurS}
The map
\begin{equation*}
\RR_+\ni \kappa \mapsto S^{\CD}_\alpha(\kappa) \in U(2)
\end{equation*}
is continuous and has explicit asymptotic values for $\kappa=0$ and $\kappa = +\infty$.
More explicitly, depending on $C, D$ and $\alpha$ one has:
\begin{enumerate}
\item[i)] If $D=0$, then $S^{\CD}_\alpha(\kappa)=\left(\begin{smallmatrix}
e^{-i\pi \alpha} & 0\\
0 & e^{i\pi\alpha}
\end{smallmatrix}\right)$,

\item[ii)] If $\det(D)\neq 0$, then
$S^{\CD}_\alpha(+\infty)=\left(\begin{smallmatrix}
e^{i\pi \alpha} & 0\\
0 & e^{-i\pi\alpha}
\end{smallmatrix}\right)$,

\item[iii)] If
$\dim[\ker(D)]=1$ and $\alpha =1/2$, then
$S^{\CD}_\alpha(+\infty)=(2\P -1)\;\!\left(\begin{smallmatrix}
i & 0\\
0 & -i
\end{smallmatrix}\right)$,
where $\P$ is the orthogonal projection onto $\ker(D)^\bot$,

\item[iv)] If $\ker(D)= \left(\begin{smallmatrix}
\CC\\ 0 \end{smallmatrix}\right)$ or if
$\dim[\ker(D)]=1$, $\alpha < 1/2$ and $\ker(D)\neq \left(\begin{smallmatrix}
0\\ \CC
\end{smallmatrix}\right)$,
then
$S^{\CD}_\alpha(+\infty)=\left(\begin{smallmatrix}
e^{-i\pi \alpha} & 0\\
0 & e^{-i\pi\alpha}
\end{smallmatrix}\right)$,

\item[v)] If $\ker(D)= \left(\begin{smallmatrix}
0\\ \CC
\end{smallmatrix}\right)$ or if
$\dim[\ker(D)]=1$, $\alpha > 1/2$ and $\ker(D)\neq \left(\begin{smallmatrix}
\CC\\ 0
\end{smallmatrix}\right)$,
then
$S^{\CD}_\alpha(+\infty)=\left(\begin{smallmatrix}
e^{i\pi \alpha} & 0\\
0 & e^{i\pi\alpha}
\end{smallmatrix}\right)$.
\end{enumerate}

Furthermore,
\begin{enumerate}

\item[a)] If  $C=0$, then
$S^{\CD}_\alpha(0)=\left(\begin{smallmatrix}
e^{i\pi \alpha} & 0\\
0 & e^{-i\pi\alpha}
\end{smallmatrix}\right)$,

\item[b)] If  $\det(C)\ne 0$, then
$S^{\CD}_\alpha(0)=\left(\begin{smallmatrix}
e^{-i\pi \alpha} & 0\\
0 & e^{i\pi\alpha}
\end{smallmatrix}\right)$,

\item[c)] If $\dim[\ker (C)]=1$ and $\alpha=1/2$, then
$S^{\CD}_\alpha(0)=
(1-2\Pi)\left(\begin{smallmatrix}
i & 0\\
0 & -i
\end{smallmatrix}\right)$,
where $\Pi$ is the orthogonal projection on $\ker(C)^\perp$.

\item[d)] If $\ker (C)=\left(\begin{smallmatrix}0\\ \CC \end{smallmatrix}\right)$ or if  $\dim[\ker(C)]=1$,
$\alpha>1/2$ and
$\ker (C)\ne \left(\begin{smallmatrix}\CC \\ 0 \end{smallmatrix}\right)$,
then
$S^{\CD}_\alpha(0)=\left(\begin{smallmatrix}
e^{-i\pi \alpha} & 0\\
0 & e^{-i\pi\alpha}
\end{smallmatrix}\right)$,

\item[e)] If $\ker (C)=\left(\begin{smallmatrix} \CC \\ 0\end{smallmatrix}\right)$ or if
$\dim[\ker (C)]=1$, $\alpha<1/2$ and $\ker (C)\ne \left(\begin{smallmatrix}0 \\ \CC \end{smallmatrix}\right)$,
then
$S^{\CD}_\alpha(0)=\left(\begin{smallmatrix}
e^{i\pi \alpha} & 0\\
0 & e^{i\pi\alpha}
\end{smallmatrix}\right)$.
\end{enumerate}
\end{prop}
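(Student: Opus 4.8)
The plan is to extract the asymptotic values of $S_\alpha^{\CD}(\kappa)$ directly from the closed formula \eqref{eq-stilde} together with the explicit expression for $\widetilde S_\alpha^{\CD}(\kappa)$, by analysing the behaviour of the middle factor $\bigl(D\,\Lambda(\kappa)+\frac{\pi}{2\sin(\pi\alpha)}C\bigr)^{-1}$ as $\kappa\to 0$ and $\kappa\to+\infty$, where $\Lambda(\kappa)=\diag\bigl(\tfrac{\Gamma(1-\alpha)^2 e^{-i\pi\alpha}}{4^\alpha}\kappa^{2\alpha},\ \tfrac{\Gamma(\alpha)^2 e^{-i\pi(1-\alpha)}}{4^{1-\alpha}}\kappa^{2(1-\alpha)}\bigr)$. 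The first observation is that continuity on $(0,\infty)$ is immediate: the entries of $\Lambda(\kappa)$ are continuous and strictly positive in modulus, and the matrix $D\Lambda(\kappa)+\frac{\pi}{2\sin\pi\alpha}C$ is invertible for every $\kappa>0$ (this is essentially the statement that $H_\alpha^{\CD}$ has no positive eigenvalues, which follows from the analysis recalled in Section~\ref{recall}), so $S_\alpha^{\CD}(\cdot)$ is a composition of continuous maps; that it takes values in $U(2)$ is a standard consequence of the intertwining property of the wave operators and their asymptotic completeness, already contained in \cite{PR}.

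For the asymptotic values, the mechanism is the following. As $\kappa\to+\infty$, the diagonal entries of $\Lambda(\kappa)$ blow up (both exponents $2\alpha$ and $2(1-\alpha)$ are positive), so $D\Lambda(\kappa)$ dominates over $\frac{\pi}{2\sin\pi\alpha}C$ whenever the corresponding column of $D$ does not vanish; conversely, when $\kappa\to 0$ both entries of $\Lambda(\kappa)$ tend to $0$ and the constant term $\frac{\pi}{2\sin\pi\alpha}C$ dominates whenever the corresponding column of $C$ is non-degenerate. Thus the six cases i)--v) (resp. a)--e)) correspond exactly to which of the two columns of $D$ (resp. $C$) survive in the limit, with the delicate borderline cases arising precisely when $\det D=0$ (resp. $\det C=0$) and $\alpha$ sits on one side of $1/2$ — because then the two entries $\kappa^{2\alpha}$ and $\kappa^{2(1-\alpha)}$ decay (or grow) at genuinely different rates, so that only the "faster" direction inside $\ker D$ (resp. $\ker C$) is felt. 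Concretely, I would: (1) treat $D=0$ first, where $\widetilde S$ vanishes identically and $S_\alpha^{\CD}$ reduces to the diagonal phase matrix, giving i); (2) treat $\det D\neq 0$ by factoring $\Lambda(\kappa)^{1/2}$ out of the inverse, letting $\kappa\to\infty$, and computing the limit of the resulting product of diagonal matrices, which collapses to $2i\sin(\pi\alpha)\,\diag\bigl(\tfrac{e^{-i\pi\alpha/2}}{\text{(stuff)}},\dots\bigr)$ times its mirror — a short residue-type computation yielding $\diag(e^{i\pi\alpha},e^{-i\pi\alpha})-\diag(e^{-i\pi\alpha},e^{i\pi\alpha})$, whence ii); (3) for $\dim\ker D=1$ write $D=\bigl(\begin{smallmatrix}*&*\\ *&*\end{smallmatrix}\bigr)$ in terms of the rank-one structure and split into the subcases according to $\ker D$ and the sign of $\alpha-1/2$, the point being which of the two monomials $\kappa^{2\alpha}$, $\kappa^{2(1-\alpha)}$ wins; the special self-dual case $\alpha=1/2$ in iii) is where the two monomials have equal rate $\kappa$ and one genuinely gets the projection $\P$ appearing. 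The computations for the $\kappa\to 0$ values a)--e) are formally identical after exchanging the roles of $C$ and $D$ and of "large $\kappa$" with "small $\kappa$", using that $S_\alpha^{\CD}(0)$ is governed by $C$ exactly as $S_\alpha^{\CD}(+\infty)$ is governed by $D$.

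The main obstacle is the bookkeeping in the rank-one cases: when $\dim\ker D=1$ (or $\dim\ker C=1$) one must invert a $2\times 2$ matrix whose two columns scale with different powers of $\kappa$, extract the leading term of the inverse (which is itself rank one in the generic subcases iv), v) but full rank in the borderline case iii) where the powers coincide), and then verify that the sandwiched product with the two outer diagonal matrices in $\widetilde S_\alpha^{\CD}(\kappa)$ produces exactly the stated limit — in particular that in iv)/v) the off-diagonal terms disappear and both diagonal entries collapse to the \emph{same} phase $e^{\mp i\pi\alpha}$, and that in iii) the limit reassembles into $(2\P-1)\diag(i,-i)$. This requires carefully tracking the $\alpha$-dependent constants $\Gamma(\alpha)$, $\Gamma(1-\alpha)$, the powers of $2$, and the phases $e^{-i\pi\alpha/2}$ through the limit; the reflection formula $\Gamma(\alpha)\Gamma(1-\alpha)=\pi/\sin(\pi\alpha)$ is what makes the constants cancel so that the answer is a \emph{unitary} diagonal (or projection-type) matrix with no surviving modulus. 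I would organise the proof as a case distinction mirroring the statement, doing the full computation once in case ii) and once in case iii), and then indicating that the remaining cases follow by the same manipulations with the obvious changes of which column/row and which power of $\kappa$ dominates.
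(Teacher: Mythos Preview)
The paper does not prove this proposition: it is quoted verbatim from \cite[Prop.~13]{PR}, so there is no in-paper argument to compare your proposal against. Your outline is a sound plan for a direct proof from the explicit formula for $\widetilde S_\alpha^{\CD}$, and the mechanism you identify --- which of the two diagonal powers $\kappa^{2\alpha}$, $\kappa^{2(1-\alpha)}$ dominates, governed by the rank structure of $D$ (resp.\ $C$) and the sign of $\alpha-\tfrac12$, with the reflection formula cleaning up the constants --- is correct and would carry through.

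One remark that may save you bookkeeping: later in this paper (see \eqref{eq-SL} and the paragraph on case IV in Section~\ref{secLev}) two alternative representations of $S_\alpha^{\CD}(\kappa)$ are quoted from \cite{PR}, namely a Cayley-type form $S(\kappa)=\Phi\,\dfrac{B^{-1}LB^{-1}+\cos(\pi\alpha)J+i\sin(\pi\alpha)}{B^{-1}LB^{-1}+\cos(\pi\alpha)J-i\sin(\pi\alpha)}\,\Phi J$ when $\det D\neq 0$, and an explicit $2\times 2$ form $S=\Phi(c(\kappa)+\ell)^{-1}M(\kappa)\Phi J$ when $\dim\ker D=1$. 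In these representations unitarity is manifest and the $\Gamma$-constants have already been absorbed, so the limits in cases ii)--v) and a)--e) can be read off from which entries of $B^{-1}LB^{-1}$ (resp.\ of $M(\kappa)$) survive, without tracking the raw $\Gamma(1-\alpha)$, $2^\alpha$, $e^{-i\pi\alpha/2}$ factors through the inverse. This is almost certainly how the computation is organised in \cite{PR}, and it makes the rank-one subcases iii)--v) considerably cleaner than inverting $D\Lambda(\kappa)+\tfrac{\pi}{2\sin\pi\alpha}C$ directly.
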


\section{The $\boldsymbol{0}$-degree Levinson's theorem, a pedestrian approach}\label{secLev}

In this section, we state a Levinson's type theorem adapted to our model. The proof is quite ad-hoc and will look like a recipe, but a much more conceptual one will be given subsequently.
The main interest in this pedestrian approach is that it shows the importance of the restriction of the wave operators at $0$-energy and at energy equal to $+\infty$.
Let us remind the reader interested in the algebraic approach that the present proof can be skipped without any lost of understanding in the following sections.

Let us start by considering again the expression \eqref{yoyo} for the operator $\Omega_-^{\CD}|_{\HH_\2}$. It follows from the explicit expressions for the functions $\varphi^-_m$, $\tilde{\varphi}_m$ and $\widetilde{S}^{\CD}_\alpha$ that $\Omega_-^{\CD}|_{\HH_\2}$ is a linear combination of product of functions of two non-commutating operators with functions that are respectively continuous on $[-\infty,\infty]$ and on $[0,\infty]$ and which take values in $M_2(\C)$. For a reason that will become limpid in the algebraic framework, we shall consider the restrictions of these products of functions on the asymptotic values of the closed intervals. Namely, let us set for $x \in \R$ and $\kappa\in \R_+$
\begin{eqnarray}
\label{Gma1}\Gamma_1(C,D,\alpha,x)&:=&
\Big(
\begin{smallmatrix}\varphi^-_0(x) & 0 \\ 0 & \varphi^-_{-1}(x) \end{smallmatrix}\Big) +
\Big(
\begin{smallmatrix}\tilde{\varphi}_0(x) & 0 \\ 0 & \tilde{\varphi}_{-1}(x) \end{smallmatrix}\Big)
\widetilde{S}^{\CD}_\alpha(0)\ ,\\
\label{Gma2}\Gamma_2(C,D,\alpha,\kappa)&:=&S^{\CD}_\alpha(\kappa)\ ,\\
\label{Gma3}\Gamma_3(C,D,\alpha,x)&:=&
\Big(
\begin{smallmatrix}\varphi^-_0(x) & 0 \\ 0 & \varphi^-_{-1}(x) \end{smallmatrix}\Big) +
\Big(
\begin{smallmatrix}\tilde{\varphi}_0(x) & 0 \\ 0 & \tilde{\varphi}_{-1}(x) \end{smallmatrix}\Big)
\widetilde{S}^{\CD}_\alpha(+\infty)\ ,\\
\label{Gma4}\Gamma_4(C,D,\alpha,\kappa)&:=& 1.
\end{eqnarray}
We also set
\begin{equation}\label{fctGamma}
\Gamma(C,D,\alpha,\cdot):= \big(\Gamma_1(C,D,\alpha,\cdot), \Gamma_2(C,D,\alpha,\cdot),\Gamma_3(C,D,\alpha,\cdot),
\Gamma_4(C,D,\alpha,\cdot)\big)\ .
\end{equation}

In fact, $\Gamma(C,D,\alpha,\cdot)$ is a continuous function on the edges $\square$ of the square $[0,\infty]\times[-\infty,\infty]$ and takes values in $U(2)$. Thus, since $\Gamma(C,D,\alpha,\cdot) \in C\big(\square,U(2)\big)$,
we can define the winding number $\wind\big[\Gamma(C,D,\alpha,\cdot)\big]$ of the map
\begin{equation*}
\square \ni \zeta \mapsto \det [\Gamma(C,D,\alpha,\zeta)]\in \T
\end{equation*}
with orientation of $\square$ chosen clockwise. Here $\T$ denotes the set of complex numbers of modulus $1$.
The following statement is our Levinson's type theorem.

\begin{theorem}\label{Lev0}
For any $\alpha\in (0,1)$ and any admissible pair $(C,D)$ one has
\begin{equation*}
\wind \big[\Gamma(C,D,\alpha,\cdot)\big] = - \# \sigma_p(H_\alpha^{\CD}) =
-\#\{\hbox{negative eigenvalues of }CD^*\}\ .
\end{equation*}
\end{theorem}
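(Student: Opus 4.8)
The plan is to compute the winding number $\wind[\Gamma(C,D,\alpha,\cdot)]$ explicitly by going around the four edges of the square $\square$ and adding up the contributions, then to check case by case (using the classification of asymptotic values of $S^{\CD}_\alpha$ in Proposition \ref{propSurS} and the counting of negative eigenvalues of $CD^*$) that the total matches $-\#\sigma_p(H^{\CD}_\alpha)$. First I would note that $\det\Gamma(C,D,\alpha,\cdot)$ is a continuous map on $\square$ into $\T$, so the winding number is the sum of four ``partial winding numbers'' along the edges, each measured by the continuous variation of $\arg\det$. Two opposite edges carry $\Gamma_1$ and $\Gamma_3$, which are built from $\varphi^-_m$, $\tilde\varphi_m$ and the value of $\widetilde S^{\CD}_\alpha$ at $0$ respectively $+\infty$; the other two carry $\Gamma_2(\cdot)=S^{\CD}_\alpha(\cdot)$ along $\R_+$ and $\Gamma_4\equiv 1$ (whose contribution is $0$). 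So the computation reduces to three genuine pieces.

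Next I would treat the $\Gamma_2$ edge: since $S^{\CD}_\alpha$ is diagonal in leading order and the relevant determinant is, up to the fixed phase $e^{\pm i\pi\alpha}$ cancelling in the product of the two diagonal entries, controlled by the factor $\det\big(DM(\kappa^2\text{-type argument})+\text{const}\,C\big)^{-1}\det(\cdots)$, I would express $\det S^{\CD}_\alpha(\kappa)$ in closed form and read off the variation of its argument from $\kappa=0$ to $\kappa=+\infty$. This is where the zeros of $\det(DM(z)-C)$ on the negative axis enter: by the argument principle, the number of such zeros (which by the recalled spectral result equals $\#\{\text{negative eigenvalues of }CD^*\}$) is exactly what is picked up, with sign, by the change of $\arg\det S^{\CD}_\alpha$ together with the matching changes along the $\Gamma_1$ and $\Gamma_3$ edges. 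For $\Gamma_1$ and $\Gamma_3$ I would use the explicit limits $\varphi^-_m(\mp\infty)\in\{1,e^{2i\delta^\alpha_m}\}$, $\tilde\varphi_m(-\infty)=0$, $\tilde\varphi_m(+\infty)=1$ to compute the endpoint values of $\det\Gamma_1$ and $\det\Gamma_3$, and the continuity/monotonicity-type properties of the Gamma-function phases to compute the variation of $\arg\det$ along each. The corners must be checked to agree, which is automatic from the definitions since e.g. $\Gamma_2(C,D,\alpha,0)$ is the bottom-left value of $\Gamma_1$ and $\Gamma_2(C,D,\alpha,+\infty)$ is a value of $\Gamma_3$, while $\Gamma_4\equiv1$ matches the $+\infty$-in-$x$ ends.

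The bookkeeping then splits along exactly the dichotomies appearing in Proposition \ref{propSurS}: $D=0$ versus $\det D\neq0$ versus $\dim\ker D=1$ (sub-split by $\alpha$ and by which coordinate axis $\ker D$ is), and similarly for $C$ at the $\kappa=0$ end. In each of these finitely many cases I would simply add the three contributions, obtain an integer, and compare with $\#\{\text{negative eigenvalues of }CD^*\}$, which is $0$, $1$ or $2$ and is itself determined by the same case data (e.g. $CD^*$ self-adjoint with $D=0$ has no negative eigenvalue, etc.). The identity $\wind[\Gamma]=-\#\sigma_p(H^{\CD}_\alpha)$ drops out of this table.

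I expect the main obstacle to be the $\Gamma_2$ edge, i.e. controlling the variation of $\arg\det S^{\CD}_\alpha(\kappa)$ as $\kappa$ runs over $\R_+$ when $\det D=0$ or $\det C=0$: there the resolvent-type matrix $\big(DM+\text{const}\,C\big)^{-1}$ degenerates at one or both endpoints, the power-law exponents $\alpha$ and $1-\alpha$ compete, and one has to be careful that no spurious winding is produced by the non-principal-value cancellations in the product of the two diagonal channels. Handling the threshold behaviour here — and making sure the phases $e^{\pm i\pi\alpha}$, $e^{\pm i\pi(1-\alpha)}$ that appear in $\widetilde S^{\CD}_\alpha$ combine correctly with those in $\Gamma_1$, $\Gamma_3$ — is the delicate part; everything else is a finite, if tedious, enumeration. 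The payoff, as the introduction stresses, is that the three pieces are individually meaningful: $\Gamma_2$ is the genuine scattering contribution, while $\Gamma_1$ and $\Gamma_3$ are the $0$-energy and $+\infty$-energy corrections that a naive Levinson count would miss.
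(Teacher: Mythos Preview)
Your proposal is correct and follows essentially the same approach as the paper: decompose the winding number into the contributions $\phi_1,\phi_2,\phi_3$ along the three non-trivial edges, compute each one explicitly (using a Gamma-function variation lemma for $\Gamma_1,\Gamma_3$ and a closed-form expression $\det S^{\CD}_\alpha=\bar f/f$ for $\Gamma_2$), split into the finitely many cases dictated by Proposition~\ref{propSurS}, and verify by tabulation that the sum equals $-2\pi\,\#\sigma_p(H^{\CD}_\alpha)$. One remark: the paper does not actually invoke an argument-principle link between the winding and the zeros of $\det(DM(z)-C)$ on the negative axis---it proceeds purely by direct computation of $\arg f$ in each case and then compares tables---so if you were hoping that step would shortcut the enumeration, it does not in the paper's treatment.
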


\begin{proof}
The first equality is proved below by a case-by-case study.
The equality between the cardinality of $\sigma_p(H_\alpha^{\CD})$ and the number of
negative eigenvalues of the matrix $CD^*$ has been shown in \cite[Lem.~4]{PR}.
\end{proof}

We shall now calculate separately the contribution to the winding number from the functions $\Gamma_1(C,D,\alpha,\cdot)$, $\Gamma_2(C,D,\alpha,\cdot)$ and $\Gamma_3(C,D,\alpha,\cdot)$. The contribution due to the scattering operator is the one given by $\Gamma_2(C,D,\alpha,\cdot)$. It will be rather clear that a naive approach of Levinson's theorem involving only the contribution of the scattering operator would lead to a completely wrong result. The final results are presented in Section \ref{Sectionfinal}.

\subsection{Contributions of $\Gamma_1(C,D,\alpha,\cdot)$ and $\Gamma_3(C,D,\alpha,\cdot)$}

In this section we calculate the contributions due to $\Gamma_1(C,D,\alpha,\cdot)$ and $\Gamma_3(C,D,\alpha,\cdot)$ which were introduced in \eqref{Gma1} and \eqref{Gma3}. For that purpose, recall first the relation
\begin{equation*}
S_\alpha^{\CD}(\kappa):=\left(\begin{smallmatrix}
e^{-i\pi\alpha} & 0 \\
0 & e^{i\pi\alpha}
\end{smallmatrix}\right)
+ \widetilde{S}_\alpha^{\CD}(\kappa)\ .
\end{equation*}
Since $S^{\CD}_\alpha(0)$ and $S^{\CD}_\alpha(+\infty)$ are diagonal
in most of the situations, as easily observed in Proposition \ref{propSurS}, let us define for $a \in \CC$ and $m \in \{0,-1\}$ the following functions:
\begin{equation*}
\varphi_m(\cdot,a):= \varphi_m^-(\cdot)+ a\;\! \tilde{\varphi}_m(\cdot)\ .
\end{equation*}
Then, by a simple computation one obtains
\begin{eqnarray*}
\varphi_m(x,a) &=&
\frac{\Gamma\big(\frac{1}{2}(|m|+1+ix)\big)}{\Gamma\big(\frac{1}{2}(|m|+1-ix)\big)}
\;\!
\frac{\Gamma\big(\frac{1}{2}(|m+\alpha|+1-ix)\big)}{\Gamma\big(\frac{1}{2}(|m+\alpha|+1+ix)\big)}
\ \cdot \\
&& \cdot \ \Big[
e^{i\delta_m^\alpha} + a \;\!e^{-i\pi|m|/2} \;\!
\frac{e^{\pi x/2}}{2\sin\big(\frac{\pi}{2}(1+|m+\alpha|+ix)\big)}\;\! \Big].
\end{eqnarray*}
Let us mention that the equality
\begin{equation}\label{relGamma}
\Gamma(z) \;\!\Gamma\big(1-z)=\frac{\pi}{\sin(\pi z)}
\end{equation}
for $z=\ud(1+|m+\alpha|+ix)$ has been used for this calculation.
In the case $a=0$, the function $\varphi_m(\cdot,0)$ clearly takes its values in $\T$.
We shall now consider the other two special cases $\varphi_0(\cdot,e^{i\pi\alpha}-e^{-i\pi\alpha})$
and $\varphi_{-1}(\cdot,e^{-i\pi\alpha}-e^{i\pi\alpha})$ which will appear naturally subsequently.
Few more calculations involving some trigonometric relations and
the same relation \eqref{relGamma} lead to
\begin{eqnarray*}
\varphi_0(x,e^{i\pi\alpha}-e^{-i\pi\alpha})&=&e^{i\pi\alpha/2}\;\!
\frac{\Gamma\big(\frac{1}{2}(1+ix)\big)}{\Gamma\big(\frac{1}{2}(1-ix)\big)}
\;\!
\frac{\Gamma\big(\frac{1}{2}(1+\alpha-ix)\big)}{\Gamma\big(\frac{1}{2}(1+\alpha+ix)\big)}
\;\!\frac{\sin\big(\frac{\pi}{2}(1+\alpha-ix)\big)}{\sin\big(\frac{\pi}{2}(1+\alpha+ix)\big)}\\
&=&
e^{i\pi\alpha/2}\;\!
\frac{\Gamma\big(\frac{1}{2}(1+ix)\big)}{\Gamma\big(\frac{1}{2}(1-ix)\big)}
\;\!
\frac{\Gamma\big(\frac{1}{2}(1-\alpha-ix)\big)}{\Gamma\big(\frac{1}{2}(1-\alpha+ix)\big)}
\end{eqnarray*}
and to
\begin{eqnarray*}
\varphi_{-1}(x,e^{-i\pi\alpha}-e^{i\pi\alpha}) &=&-e^{-i\pi\alpha/2}\;\!
\frac{\Gamma\big(1+\frac{1}{2}ix\big)}{\Gamma\big(1-\frac{1}{2}ix\big)}
\;\!
\frac{\Gamma\big(1-\frac{1}{2}(\alpha+ix)\big)}{\Gamma\big(1-\frac{1}{2}(\alpha-ix)\big)}
\;\!
\frac{\sin\big(\frac{\pi}{2}(\alpha+ix)\big)}{\sin\big(\frac{\pi}{2}(\alpha-ix)\big)} \\
&=&-e^{-i\pi\alpha/2}\;\!
\frac{\Gamma\big(1+\frac{1}{2}ix\big)}{\Gamma\big(1-\frac{1}{2}ix\big)}
\;\!
\frac{\Gamma\big(\frac{1}{2}(\alpha-ix)\big)}{\Gamma\big(\frac{1}{2}(\alpha+ix)\big)}\ .
\end{eqnarray*}
Clearly, both functions are continuous and take values in $\T$. Furthermore, since
$\varphi^-_m$ and $\tilde{\varphi}_m$ have limits at $\pm \infty$, so does the functions
$\varphi_m(\cdot,a)$. It follows that the variation of the arguments of the previous functions
can be defined. More generally, for any continuously differentiable function $\varphi:[-\infty,\infty]\to \T$ we set
\begin{equation*}
\Var[\varphi]:=\frac{1}{i}\int_{-\infty}^\infty \varphi(x)^{-1}\;\!\varphi'(x)\;\! \dd x\ .
\end{equation*}

Let us first state a convenient formula. Its proof is given in the Appendix~\ref{appb}.

\begin{lemma}\label{variationarg}
Let $a,b>0$. For $\varphi_{a,b}(x):=\frac{\Gamma(a+ix)}{\Gamma(a-ix)}\frac{\Gamma(b-ix)}{
\Gamma(b+ix)}$ one has $\Var[\varphi_{a,b}]=2\pi(a-b)$.
\end{lemma}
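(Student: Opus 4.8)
The plan is to reduce everything to a computation with the Beta-function integral. First I would observe that $\varphi_{a,b}(x)^{-1}\varphi_{a,b}'(x)$ can be written explicitly using the logarithmic derivative of the Gamma function: since $\frac{\dd}{\dd x}\log\Gamma(a+ix) = i\,\psi(a+ix)$ where $\psi=\Gamma'/\Gamma$ is the digamma function, one gets
\begin{equation*}
\frac{1}{i}\,\varphi_{a,b}(x)^{-1}\varphi_{a,b}'(x) = \psi(a+ix)+\psi(a-ix)-\psi(b+ix)-\psi(b-ix).
\end{equation*}
So $\Var[\varphi_{a,b}]$ is the integral over $\R$ of this real-valued expression. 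Note that as $x\to\pm\infty$ the integrand decays like $O(x^{-2})$ (because $\psi(a+ix)+\psi(a-ix) = 2\log|x| + O(x^{-2})$ and the $\log|x|$ terms cancel between the $a$- and $b$-contributions), so the integral converges absolutely; this also makes clear that the whole expression is well-defined.

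Next I would introduce the parameter dependence. For $t\ge 0$ set $f(t):=\Var[\varphi_{a+t,\,b+t}]$, or alternatively differentiate under the integral sign with respect to $a$: writing $g(a,b)$ for the integral above, one has $\partial_a g = \int_\R [\psi'(a+ix)+\psi'(a-ix)]\,\dd x$, and the substitution $x\mapsto x/1$ together with $\psi'(a\pm ix) = \pm\frac{1}{i}\frac{\dd}{\dd x}\psi(a\pm ix)$ collapses this to a boundary term: $\int_\R \psi'(a+ix)\dd x = \frac{1}{i}[\psi(a+ix)]_{-\infty}^{\infty}$. Using the asymptotics $\psi(a+ix)\sim \log(ix) = \log|x| + i\,\tfrac{\pi}{2}\sign(x)$ as $x\to+\infty$ and the analogous behavior at $-\infty$, this boundary term evaluates to a constant. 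Carrying this out gives $\partial_a g(a,b) = -\pi$ and symmetrically $\partial_b g(a,b) = +\pi$, so $g(a,b) = -\pi a + \pi b + \text{const} = \pi(b-a)+\text{const}$ — wait, that has the wrong sign and is off by a factor, so I would instead compute $g$ directly rather than trust signs from asymptotics.

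The cleanest route, which I would actually use, is the integral representation
\begin{equation*}
\frac{\Gamma(a+ix)}{\Gamma(b+ix)} = \frac{1}{\Gamma(b-a)}\int_0^1 s^{a+ix-1}(1-s)^{b-a-1}\,\dd s
\end{equation*}
valid when $b>a>0$; combined with its conjugate for $\Gamma(a-ix)/\Gamma(b-ix)$, and then recognizing $\varphi_{a,b}(x)$ as a ratio of two such Beta integrals. Actually the most transparent argument: write $\varphi_{a,b}(x) = h(x)/\overline{h(x)}$ with $h(x) = \Gamma(a+ix)/\Gamma(b+ix) = \frac{1}{\Gamma(b-a)}\int_0^1 s^{a-1}(1-s)^{b-a-1} e^{ix\log s}\,\dd s$. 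Then $\varphi_{a,b}(x) = e^{2i\arg h(x)}$, so $\Var[\varphi_{a,b}] = 2\big(\arg h(+\infty) - \arg h(-\infty)\big)$, and I just need to track how the phase of the Fourier-type integral $h$ winds. Since the measure $s^{a-1}(1-s)^{b-a-1}\dd s$ on $(0,1)$ is positive and supported on $\log s \in (-\infty,0)$, the behavior of $h$ as $x\to\pm\infty$ is governed by the endpoint $s\to 1$ (i.e.\ $\log s\to 0^-$) when $b-a-1 > -1$, giving $h(x)\sim C\,\Gamma(b-a)(-ix)^{-(b-a)}$ up to positive constants, whence $\arg h(x)\sim -(b-a)\arg(-ix) = (b-a)\tfrac{\pi}{2}\sign(x)$ at $\pm\infty$. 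Therefore $\arg h(+\infty)-\arg h(-\infty) = (b-a)\pi$, i.e.\ $\Var[\varphi_{a,b}] = 2\pi(b-a) = -2\pi(a-b)$.

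The sign discrepancy with the claimed $2\pi(a-b)$ I would resolve by being careful about the orientation convention in $\Var$ and about which of $\Gamma(a\pm ix)$ sits in the numerator: the stated lemma has $\Gamma(a+ix)$ over $\Gamma(a-ix)$ times $\Gamma(b-ix)$ over $\Gamma(b+ix)$, so $\varphi_{a,b}(x) = \frac{\Gamma(a+ix)\Gamma(b-ix)}{\Gamma(a-ix)\Gamma(b+ix)} = \frac{h(x)}{\overline{h(x)}}$ with $h(x)=\Gamma(a+ix)/\Gamma(b+ix)$ exactly as above, so redoing the phase count with this $h$ gives $\Var = 2\pi(b-a)$; to land on $2\pi(a-b)$ one presumably needs $a<b$ is not assumed and the roles are reversed, or the Appendix uses the opposite integration endpoint. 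I expect the main obstacle to be precisely this bookkeeping of signs and branch choices — establishing the $\pm\frac{\pi}{2}\sign(x)$ contributions to $\arg h$ rigorously from the asymptotics of $\Gamma$ along vertical lines (Stirling), and making sure no extra multiple of $2\pi$ is lost because $h$ never vanishes (so $\arg h$ lifts to a genuine continuous function on $\R$). The analytic content — convergence of the integral, the Beta representation, the continuity — is routine; the delicate part is getting the constant exactly right, and I would double-check it on the special case $a=b$ (where $\varphi\equiv 1$ and $\Var=0$, consistent) and on a known case such as relating $\varphi_{1/2,1}$ to an explicit elementary phase.
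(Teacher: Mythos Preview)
Your proposal is not a proof but a sketch of three approaches, none brought to a conclusion, and the one you settle on ends with an unresolved sign discrepancy. The sign issue is a genuine computational slip: the endpoint analysis of
\[
h(x)=\frac{1}{\Gamma(b-a)}\int_0^1 s^{a-1}(1-s)^{b-a-1}e^{ix\log s}\,\dd s
\]
near $s=1$ (set $s=1-t$, $\log s\sim -t$) gives $\int_0^\infty t^{b-a-1}e^{-ixt}\,\dd t=\Gamma(b-a)(ix)^{-(b-a)}$, i.e.\ $h(x)\sim (ix)^{a-b}$, not $(-ix)^{-(b-a)}$. With the correct power one obtains $\arg h(\pm\infty)=\pm(a-b)\tfrac{\pi}{2}$ and hence $\Var[\varphi_{a,b}]=2\pi(a-b)$, matching the lemma.

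There is, however, a more serious gap in this route than sign bookkeeping. The Beta-integral (or stationary-phase) asymptotic only gives you $\arg h(\pm\infty)$ modulo $2\pi$; it does not by itself rule out additional winding of $h$ on the way from $-\infty$ to $+\infty$. So your argument determines $\Var[\varphi_{a,b}]$ only modulo $2\pi$. To close this you would need either a continuity argument in $(a,b)$ together with the trivial case $a=b$, or---more directly---to work with a continuous branch of $\log h$ rather than with $h$ itself. Also note that the Beta representation requires $b>a$, so the case $a>b$ needs a separate treatment (or the symmetry $\varphi_{a,b}=\overline{\varphi_{b,a}}$).

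The paper's proof does precisely this: it writes $\Var[\varphi_{a,b}]=\tfrac{2}{i}\lim_{x\to\infty}I(x,a,b)$ with
\[
I(x,a,b)=\log\Gamma(a+ix)-\log\Gamma(a-ix)+\log\Gamma(b-ix)-\log\Gamma(b+ix),
\]
using a fixed continuous determination of $\log\Gamma$, and then applies Stirling's expansion $\log\Gamma(z)\sim (z-\tfrac12)\ln z - z +\tfrac12\ln(2\pi)$ termwise. The real parts and the $x\ln|x|$ terms cancel, and what remains is $(a-\tfrac12)[\theta(a+ix)-\theta(a-ix)]+(b-\tfrac12)[\theta(b-ix)-\theta(b+ix)]\to i\pi(a-b)$. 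Because one tracks $\log\Gamma$ rather than $\Gamma$, the continuous branch is built in and no $2\pi$-ambiguity arises; no case distinction between $a>b$ and $a<b$ is needed either. Your digamma formulation in the first paragraph is exactly the derivative of this $I$, and carrying out the limit via Stirling for $\log\Gamma$ is the shortest way to finish.
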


As an easy corollary one obtains

\begin{corol}
The following equalities hold:
\begin{enumerate}
\item[i)] $\Var[\varphi_m(\cdot,0)]=2\delta_m^\alpha$ for $m \in \{0,-1\}$,
\item[ii)] $\Var[\varphi_0(\cdot,e^{i\pi\alpha}-e^{-i\pi\alpha})]=\pi\alpha$,
\item[iii)] $\Var[\varphi_{-1}(\cdot,e^{-i\pi\alpha}-e^{i\pi\alpha})]=\pi(2-\alpha)$.
\end{enumerate}
\end{corol}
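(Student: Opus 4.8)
The plan is to derive all three identities as direct applications of Lemma~\ref{variationarg}, after rewriting each function $\varphi_m(\cdot,a)$ in the canonical quotient-of-Gammas form $\varphi_{a,b}(x)=\frac{\Gamma(a+ix)}{\Gamma(a-ix)}\frac{\Gamma(b-ix)}{\Gamma(b+ix)}$ up to a constant phase, which contributes nothing to the variation of the argument.

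\medskip

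For item (i), recall that $\varphi_m(\cdot,0)=\varphi_m^-(\cdot)$, and from the explicit formula in Section~\ref{recall} we have
\[
\varphi_m^-(x)=e^{i\delta_m^\alpha}\;\frac{\Gamma\big(\tfrac12(|m|+1+ix)\big)}{\Gamma\big(\tfrac12(|m|+1-ix)\big)}\;\frac{\Gamma\big(\tfrac12(|m+\alpha|+1-ix)\big)}{\Gamma\big(\tfrac12(|m+\alpha|+1+ix)\big)}.
\]
Absorbing the factor $\tfrac12$ in the argument by the substitution $x\mapsto 2x$ (which rescales the integral in $\Var$ but not its value, since $\Var$ is a homotopy-type integral invariant under reparametrisation) and applying Lemma~\ref{variationarg} with $a=\tfrac12(|m|+1)$ and $b=\tfrac12(|m+\alpha|+1)$, one gets $\Var[\varphi_m^-]=2\pi(a-b)\cdot\tfrac12$, wait --- more carefully: writing $y=x/2$ one has $\varphi_m^-(x)=e^{i\delta_m^\alpha}\varphi_{\tilde a,\tilde b}(y)$ with $\tilde a=\tfrac12(|m|+1)$, $\tilde b=\tfrac12(|m+\alpha|+1)$, and since $\dd x=2\,\dd y$ while $\varphi^{-1}\varphi'$ in the $x$-variable equals half that in the $y$-variable by the chain rule, $\Var$ is unchanged, giving $\Var[\varphi_m^-]=2\pi(\tilde a-\tilde b)=\pi(|m|-|m+\alpha|)=2\delta_m^\alpha$, which is exactly the claim.

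\medskip

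For items (ii) and (iii) one uses the simplified closed forms for $\varphi_0(x,e^{i\pi\alpha}-e^{-i\pi\alpha})$ and $\varphi_{-1}(x,e^{-i\pi\alpha}-e^{i\pi\alpha})$ that were just computed in the excerpt. For (ii), $\varphi_0(x,e^{i\pi\alpha}-e^{-i\pi\alpha})=e^{i\pi\alpha/2}\,\frac{\Gamma(\tfrac12(1+ix))}{\Gamma(\tfrac12(1-ix))}\,\frac{\Gamma(\tfrac12(1-\alpha-ix))}{\Gamma(\tfrac12(1-\alpha+ix))}$; again rescaling $x\mapsto 2x$ and applying Lemma~\ref{variationarg} with $a=\tfrac12$ and $b=\tfrac12(1-\alpha)$ yields $\Var=2\pi\big(\tfrac12-\tfrac12(1-\alpha)\big)=\pi\alpha$. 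For (iii), $\varphi_{-1}(x,e^{-i\pi\alpha}-e^{i\pi\alpha})=-e^{-i\pi\alpha/2}\,\frac{\Gamma(1+\tfrac12 ix)}{\Gamma(1-\tfrac12 ix)}\,\frac{\Gamma(\tfrac12(\alpha-ix))}{\Gamma(\tfrac12(\alpha+ix))}$; note $\frac{\Gamma(1+\tfrac12 ix)}{\Gamma(1-\tfrac12 ix)}=\frac{\Gamma(1+\tfrac12 ix)}{\Gamma(1-\tfrac12 ix)}$ is of the form $\frac{\Gamma(a+iy)}{\Gamma(a-iy)}$ with $a=1$, $y=x/2$, and $\frac{\Gamma(\tfrac12(\alpha-ix))}{\Gamma(\tfrac12(\alpha+ix))}=\frac{\Gamma(b-iy)}{\Gamma(b+iy)}$ with $b=\tfrac{\alpha}{2}$, $y=x/2$; so after the harmless rescaling Lemma~\ref{variationarg} gives $\Var=2\pi\big(1-\tfrac{\alpha}{2}\big)=\pi(2-\alpha)$, the overall sign $-1$ and the phase $e^{-i\pi\alpha/2}$ being constants and thus irrelevant to $\Var$.

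\medskip

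The only genuinely delicate point is the bookkeeping in the rescaling argument: one must confirm that $\Var[\varphi]$ is invariant under the reparametrisation $x\mapsto cx$ for $c>0$ (so that replacing $ix$ by $i(x/2)$ inside all Gamma arguments does not change the answer), and that a constant unimodular prefactor drops out. Both are immediate from the definition $\Var[\varphi]=\tfrac1i\int_{-\infty}^\infty \varphi(x)^{-1}\varphi'(x)\,\dd x$ --- the substitution changes neither the total change of argument along $[-\infty,\infty]$ nor the endpoints --- but they should be stated explicitly so the application of Lemma~\ref{variationarg}, which is phrased with argument $a\pm ix$ rather than $a\pm\tfrac{i}{2}x$, is rigorous. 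Everything else is the trigonometric simplification already carried out in the body, so the corollary follows in a few lines.
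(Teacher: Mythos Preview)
Your argument is correct and is exactly the intended approach: the paper presents this result as an immediate corollary of Lemma~\ref{variationarg}, and your reduction via the rescaling $y=x/2$ together with the observation that constant unimodular prefactors do not affect $\Var$ is precisely what is needed. The only cosmetic issue is the stream-of-consciousness ``wait --- more carefully'' in item~(i); the final computation there is right, so simply delete the false start and keep the clean version.
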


Let us now set
$$
\phi_1(C,D,\alpha):=\Var\big[\det\big(\Gamma_1(C,D,\alpha,\cdot)\big)\big]
$$
and
$$\phi_3(C,D,\alpha):=-\Var\big[\det\big(\Gamma_3(C,D,\alpha,\cdot)\big)\big].$$
The sign $"-"$ in the second definition comes from the sense of the computation of the winding
number: from $+\infty$ to $-\infty$.
By taking into account the above information and the expression $S^{\CD}_\alpha(0)$ and $S^{\CD}_\alpha(+\infty)$ recalled in Proposition \ref{propSurS}
one can prove:

\begin{prop}
One has
\begin{enumerate}
\item[i)] If $D=0$, then $\phi_3(C,D,\alpha)=0$,

\item[ii)] If $\det(D)\neq 0$, then $\phi_3(C,D,\alpha)=-2\pi$,

\item[iii)] If $\ker(D)= \left(\begin{smallmatrix}
\CC\\ 0 \end{smallmatrix}\right)$ or if
$\dim[\ker(D)]=1$, $\alpha < 1/2$ and $\ker(D)\neq \left(\begin{smallmatrix}
0\\ \CC
\end{smallmatrix}\right)$,
then $\phi_3(C,D,\alpha)=-2\pi(1-\alpha)$,

\item[iv)] If $\ker(D)= \left(\begin{smallmatrix}
0\\ \CC
\end{smallmatrix}\right)$ or if
$\dim[\ker(D)]=1$, $\alpha > 1/2$ and $\ker(D)\neq \left(\begin{smallmatrix}
\CC\\ 0
\end{smallmatrix}\right)$,
then $\phi_3(C,D,\alpha)=-2\pi\alpha$,

\item[v)] If
$\dim[\ker(D)]=1$ and $\alpha =1/2$, then $\phi_3(C,D,\alpha)=-\pi$.
\end{enumerate}
Furthermore,
\begin{enumerate}
\item[a)] If  $C=0$, then $\phi_1(C,D,\alpha)=2\pi$,

\item[b)] If  $\det(C)\ne 0$, then $\phi_1(C,D,\alpha)=0$,

\item[c)] If $\ker (C)=\left(\begin{smallmatrix}0\\ \CC \end{smallmatrix}\right)$ or if  $\dim[\ker(C)]=1$,
$\alpha>1/2$ and
$\ker (C)\ne \left(\begin{smallmatrix}\CC \\ 0 \end{smallmatrix}\right)$,
then  $\phi_1(C,D,\alpha)=2\pi(1-\alpha)$,

\item[d)] If $\ker (C)=\left(\begin{smallmatrix} \CC \\ 0\end{smallmatrix}\right)$ or if
$\dim[\ker (C)]=1$, $\alpha<1/2$ and $\ker (C)\ne \left(\begin{smallmatrix}0 \\ \CC \end{smallmatrix}\right)$,
then $\phi_1(C,D,\alpha)=2\pi\alpha$,

\item[e)] If $\dim[\ker (C)]=1$ and $\alpha=1/2$, then
$\phi_1(C,D,\alpha)=\pi$.
\end{enumerate}
\end{prop}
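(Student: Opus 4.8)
The plan is to reduce everything to the corollary just proved, by expressing $\det\bigl(\Gamma_1(C,D,\alpha,\cdot)\bigr)$ and $\det\bigl(\Gamma_3(C,D,\alpha,\cdot)\bigr)$ as explicit products of the $\T$-valued functions whose variations we already know. The key observation is that in cases i), ii) of the $\Gamma_3$-part and a), b) of the $\Gamma_1$-part, the relevant matrix $\widetilde S^{\CD}_\alpha(0)$, resp.\ $\widetilde S^{\CD}_\alpha(+\infty)$, is \emph{diagonal}: indeed by \eqref{eq-stilde} and Proposition \ref{propSurS} one has $S^{\CD}_\alpha(+\infty)=\left(\begin{smallmatrix} e^{i\pi\alpha}&0\\0&e^{-i\pi\alpha}\end{smallmatrix}\right)$ when $\det D\neq 0$ (so $\widetilde S^{\CD}_\alpha(+\infty)=\mathrm{diag}(e^{i\pi\alpha}-e^{-i\pi\alpha},\,e^{-i\pi\alpha}-e^{i\pi\alpha})$), while $S^{\CD}_\alpha(+\infty)=\left(\begin{smallmatrix}e^{-i\pi\alpha}&0\\0&e^{i\pi\alpha}\end{smallmatrix}\right)$ when $D=0$, so that $\Gamma_3=\mathrm{diag}\bigl(\varphi_0^-,\varphi_{-1}^-\bigr)$. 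In the diagonal case $\Gamma_3(C,D,\alpha,x)=\mathrm{diag}\bigl(\varphi_0(x,a_0),\varphi_{-1}(x,a_{-1})\bigr)$ with $a_m$ read off from the diagonal of $\widetilde S^{\CD}_\alpha(+\infty)$, hence $\det\Gamma_3(x)=\varphi_0(x,a_0)\,\varphi_{-1}(x,a_{-1})$ and $\Var[\det\Gamma_3]=\Var[\varphi_0(\cdot,a_0)]+\Var[\varphi_{-1}(\cdot,a_{-1})]$; one then substitutes the three values from the corollary ($2\delta_0^\alpha=-\pi\alpha$, $2\delta_{-1}^\alpha=\pi\alpha$, $\pi\alpha$, $\pi(2-\alpha)$) to get $0$ for $D=0$ (giving $\phi_3=0$) and $\pi\alpha+\pi(2-\alpha)=2\pi$ for $\det D\neq 0$ (giving $\phi_3=-2\pi$). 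The $\Gamma_1$-cases a), b) are handled identically using $S^{\CD}_\alpha(0)$ from the "Furthermore" part of Proposition \ref{propSurS}, with the overall sign convention built into the definitions of $\phi_1,\phi_3$.

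For the remaining cases, where $\dim\ker(D)=1$ (resp.\ $\dim\ker(C)=1$), the matrices $S^{\CD}_\alpha(+\infty)$ and $S^{\CD}_\alpha(0)$ are generically \emph{not} diagonal (case iii) of Proposition \ref{propSurS} has $S^{\CD}_\alpha(+\infty)=(2\P-1)\,\mathrm{diag}(i,-i)$ with $\P$ a rank-one projection not necessarily coordinate-aligned), so the factorisation of the determinant into the known $\varphi_m(\cdot,a)$ is not immediate. The point to exploit is that the winding number, and more precisely the quantity $\Var[\det(\cdot)]$, depends only on the determinant. I would compute $\det\bigl(\Gamma_3(C,D,\alpha,x)\bigr)$ directly: writing $\Gamma_3(x)=\mathrm{diag}(\varphi_0^-(x),\varphi_{-1}^-(x))+\mathrm{diag}(\tilde\varphi_0(x),\tilde\varphi_{-1}(x))\,W$ with $W=\widetilde S^{\CD}_\alpha(+\infty)$, one expands the $2\times2$ determinant. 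Because at $x=+\infty$ one has $\varphi_m^-\to e^{2i\delta_m^\alpha}$, $\tilde\varphi_m\to 1$ and at $x=-\infty$ one has $\varphi_m^-\to1$, $\tilde\varphi_m\to0$, the function $\det\Gamma_3$ interpolates between $\det\bigl(\mathrm{diag}(1,1)\bigr)=1$ and $\det\bigl(\mathrm{diag}(e^{2i\delta_0^\alpha},e^{2i\delta_{-1}^\alpha})+W\bigr)=\det\bigl(S^{\CD}_\alpha(+\infty)\bigr)$ up to the coordinate-basis factor $e^{2i\delta^\alpha_0}e^{2i\delta^\alpha_{-1}}=1$; so the total change of argument is determined by a continuous path and its endpoints together with the intermediate structure. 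Concretely, in case iii) ($\alpha=1/2$) one checks $\det S^{\CD}_\alpha(+\infty)=\det(2\P-1)\cdot\det\,\mathrm{diag}(i,-i)=(-1)\cdot 1=-1$, consistent with $\phi_3=-\pi$, i.e.\ a half-turn; in cases iv), v) the matrix $S^{\CD}_\alpha(+\infty)$ is scalar ($e^{-i\pi\alpha}I$ or $e^{i\pi\alpha}I$), so $\det\Gamma_3$ factors through the scalar case and one gets $\Var[\det\Gamma_3]=2\pi(1-\alpha)$ resp.\ $2\pi\alpha$ by a single application of Lemma \ref{variationarg}-type reasoning, yielding $\phi_3=-2\pi(1-\alpha)$ resp.\ $-2\pi\alpha$.

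The main obstacle I anticipate is precisely case iii) with its non-aligned rank-one projection, and more generally verifying that in the $\dim\ker(D)=1$ situations the \emph{explicit} formula for $\widetilde S^{\CD}_\alpha(\kappa)$ really does produce, in the limit $\kappa\to+\infty$, a matrix of the stated form and that the off-diagonal terms of $\Gamma_3(x)$ for finite $x$ do not contribute an extra full turn to the winding of the determinant. The clean way around this is to avoid finite-$x$ bookkeeping of the $2\times2$ matrix entirely: since $\square\ni x\mapsto\Gamma_3(C,D,\alpha,x)$ is a continuous path in $U(2)$ (continuity of $\varphi_m^-$, $\tilde\varphi_m$ on $[-\infty,\infty]$ and of the fixed matrix $\widetilde S^{\CD}_\alpha(+\infty)$), the quantity $\Var[\det\Gamma_3]$ is a homotopy invariant rel endpoints, and one may homotope $\widetilde S^{\CD}_\alpha(+\infty)$ within the set of unitaries with the same determinant to a convenient diagonal representative — a rotation $\P\rightsquigarrow$ a coordinate projection in case iii), the identity matrix already being diagonal in cases iv),v) — while keeping the endpoints $x=\pm\infty$ of $\det\Gamma_3$ fixed (they depend on $W$ only through $\det W$). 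This turns every case into the diagonal computation of the first paragraph, and then the corollary finishes each line. I would carry out the five $\phi_3$-cases first, then note that the $\phi_1$-cases are obtained by the identical argument with $(\kappa=0,\ S^{\CD}_\alpha(0))$ in place of $(\kappa=+\infty,\ S^{\CD}_\alpha(+\infty))$ and the opposite orientation sign, reading the values $\Var[\varphi_0(\cdot,e^{i\pi\alpha}-e^{-i\pi\alpha})]=\pi\alpha$ and $\Var[\varphi_{-1}(\cdot,e^{-i\pi\alpha}-e^{i\pi\alpha})]=\pi(2-\alpha)$ off the corollary for the $\det C\neq0$ and $C=0$ endpoints respectively.
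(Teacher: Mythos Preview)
Your treatment of the diagonal cases i)--iv) and a)--d) is exactly the paper's: read off the (diagonal) value of $S^{\CD}_\alpha(+\infty)$ or $S^{\CD}_\alpha(0)$ from Proposition~\ref{propSurS}, factor $\det\Gamma_j$ as $\varphi_0(\cdot,a_0)\varphi_{-1}(\cdot,a_{-1})$, and add the two variations from the corollary.

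For the genuinely non-diagonal cases v) and e) (the $\alpha=1/2$, $\dim\ker=1$ situation), your homotopy idea is different from what the paper does. The paper computes $g(x)=\det\Gamma_3(x)$ by brute force: it expands the $2\times2$ determinant, uses $\Gamma(z)\Gamma(1-z)=\pi/\sin(\pi z)$ and a tangent identity to simplify, and finds that $g(x)$ is \emph{literally independent of $\P$}; then $\Var[g]$ is read off from Lemma~\ref{variationarg} plus one extra M\"obius-type factor. Your route---rotate $\P$ to a coordinate projection and reduce to the diagonal computation---is more conceptual and avoids the Gamma-function gymnastics, but it comes at the cost of an input the paper's direct computation does not need.

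The gap in your argument is precisely that input. You write ``homotope $\widetilde S^{\CD}_\alpha(+\infty)$ within the set of unitaries with the same determinant'', but $\widetilde S$ is not unitary (it is $S-\mathrm{diag}(e^{-i\pi\alpha},e^{i\pi\alpha})$), and more importantly, for an \emph{arbitrary} unitary $S_t$ of determinant $-1$ there is no reason the resulting path $x\mapsto\mathrm{diag}(\varphi_0^-,\varphi_{-1}^-)+\mathrm{diag}(\tilde\varphi_0,\tilde\varphi_{-1})\,(S_t-\mathrm{diag}(e^{-i\pi\alpha},e^{i\pi\alpha}))$ should stay in $U(2)$. Without this, $\det\Gamma_3^{(t)}$ need not take values in $\T$ and the homotopy invariance of $\Var$ is lost. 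The fix is easy but must be said: restrict the homotopy to matrices of the specific form $S_t=(2\P_t-1)\,\mathrm{diag}(i,-i)$ with $\P_t$ a continuous path of rank-one projections. Each such $\P_t$ equals $\ker(D_t)^\perp$ for some admissible pair $(C_t,D_t)$ with $\dim\ker D_t=1$, so the corresponding $\Gamma_3^{(t)}$ is the restriction of an actual wave-operator symbol and hence $U(2)$-valued (this is stated just before Theorem~\ref{Lev0}). With that, $\det S_t\equiv-1$, the endpoints $\det\Gamma_3^{(t)}(\pm\infty)\in\{1,-1\}$ are fixed, and your reduction to the coordinate-aligned case goes through. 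Also note that your claim ``endpoints depend on $W$ only through $\det W$'' is not correct as written: $\det\Gamma_3(+\infty)=\det S$, not a function of $\det\widetilde S$.
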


\begin{proof}
Statements i) to iv) as well as statements a) to d) are easily obtained simply by
taking the asymptotic values of $S^{\CD}_\alpha(\cdot)$ into account. So let us concentrate
on the remaining statements.

Let $p=(p_1,p_2)\in \C^2$ with $\|p\|=1$, and let
\[
\P=\begin{pmatrix}
|p_2|^2 & - p_1 \Bar p_2\\
-\Bar p_1 p_2 & |p_1|^2
\end{pmatrix}
\]
be the orthogonal projection onto $p^\bot$. For $x \in \R$, let us also set
\begin{equation*}
\varphi(\P,x):=
\Big(
\begin{smallmatrix}\varphi^-_0(x) & 0 \\ 0 & \varphi^-_{-1}(x) \end{smallmatrix}\Big) +
\Big(
\begin{smallmatrix}\tilde{\varphi}_0(x) & 0 \\ 0 & \tilde{\varphi}_{-1}(x) \end{smallmatrix}\Big)
2\P
\Big(\begin{smallmatrix} i & 0 \\ 0 & -i \end{smallmatrix}\Big)
\end{equation*}
whose determinant is equal to
\begin{equation*}
g(x):= \varphi^-_0(x)\;\!\varphi^-_{-1}(x) + 2i\tilde{\varphi}_0(x)\;\!
\varphi^-_{-1}(x)\;\!|p_2|^2
-2i\varphi^-_0(x)\;\!\tilde{\varphi}_{-1}(x)|p_1|^2\ .
\end{equation*}
By taking the explicit expressions for these functions one obtains
\begin{eqnarray*}
g(x)&=& \frac{\Gamma\big(\frac{1}{2}(1+ix)\big)}{\Gamma\big(\frac{1}{2}(1-ix)\big)}
\;\!
\frac{\Gamma\big(\frac{1}{2}(\frac{3}{2}-ix)\big)}{\Gamma\big(\frac{1}{2}(\frac{3}{2}+ix)\big)}\;\!
\frac{\Gamma\big(\frac{1}{2}(2+ix)\big)}{\Gamma\big(\frac{1}{2}(2-ix)\big)}
\frac{\Gamma\big(\frac{1}{2}(\frac{3}{2}-ix)\big)}{\Gamma\big(\frac{1}{2}(\frac{3}{2}+ix)\big)}\;\!
\\
&&\cdot\, \Big(
1+ i\;\!e^{i\pi/4}\;\!\frac{e^{\pi x /2}}{\pi}\;\!
\Gamma\big(\ud(\ud-ix)\big)\Gamma\big(\ud({\textstyle\frac{3}{2}}+ix)\big)
\Big).
\end{eqnarray*}
Now, by setting $z=\frac{3}{4}+i\frac{x}{2}$ and by some algebraic computations one obtains
\begin{eqnarray*}
&&1+ie^{i\pi/4}\;\!\frac{e^{\pi x /2}}{\pi}\;\!
\Gamma\big(\ud(\ud-ix)\big)\;\!\Gamma\big(\ud({\textstyle\frac{3}{2}}+ix)\big)\\
&=& 1+\frac{i}{\pi}e^{-i\pi(z-1)}\;\!\Gamma(1-z)\;\!\Gamma(z)
=1-i\frac{e^{-i\pi z}}{\sin(\pi z)} \\
&=&-i\frac{\cos(\pi z)}{\sin(\pi z)}
= -i\frac{1}{\tan\big(\frac{3\pi}{4}+i\frac{\pi x}{2}\big)}\\
&=&-i\frac{\tanh(\frac{\pi x}{2})-i}{\tanh(\frac{\pi x)}{2})+i}\ .
\end{eqnarray*}
Thus, one finally obtains that
\begin{equation*}
g(x)=-i\frac{\Gamma\big(\frac{1}{2}(1+ix)\big)}{\Gamma\big(\frac{1}{2}(1-ix)\big)}
\;\!
\frac{\Gamma\big(\frac{1}{2}(\frac{3}{2}-ix)\big)}{\Gamma\big(\frac{1}{2}(\frac{3}{2}+ix)\big)}\;\!
\frac{\Gamma\big(\frac{1}{2}(2+ix)\big)}{\Gamma\big(\frac{1}{2}(2-ix)\big)}\;\!
\frac{\Gamma\big(\frac{1}{2}(\frac{3}{2}-ix)\big)}{\Gamma\big(\frac{1}{2}(\frac{3}{2}+ix)\big)}
\frac{\tanh(\frac{\pi x}{2})-i}{\tanh(\frac{\pi x}{2})+i}\ .
\end{equation*}
Note that this function does not depend on the projection $\P$ at all.

Clearly one has
\[
\Var[g]=\Var[\varphi_{\frac{1}{2},\frac{3}{4}}\,] +\Var[\varphi_{1,\frac{3}{4}}\,]
+\Var\Big[ \frac{\tanh(\frac{\pi \cdot}{2})-i}{\tanh(\frac{\pi \cdot}{2})+i}\,\Big]=
-\dfrac{\pi}{2}+\dfrac{\pi}{2}+\pi=\pi.
\]
Now, by observing that $\phi_3(C,D,\alpha)=-\Var[g]$ in the case v), one concludes
that in this special case $\phi_3(C,D,\alpha)=-\pi$.

For the case e), observe that by setting $\P:=1-\Pi$, one easily obtains that
in this special case $\Gamma_1(C,D,\alpha,\cdot)=\varphi(\P,\cdot)$. It follows
that $\phi_1(C,D,\alpha)=\Var[g]$ and then $\phi_1(C,D,\alpha)=\pi$.
\end{proof}

\subsection{Contribution of $\Gamma_2(C,D,\alpha,\cdot)$}

Recall first that $\Gamma_2(C,D,\alpha,\cdot)$ defined in \eqref{Gma2} is equal to $S^{\CD}_\alpha(\cdot)$.
We are interested here in the phase of $\det \big(S^{\CD}_\alpha(\kappa)\big)$ acquired
as $\kappa$ runs from $0$ to $+\infty$; we denote this phase by $\phi_2(C,D,\alpha)$.
Note that if $\det \big(S^{\CD}_\alpha(\kappa)\big)=\frac{\bar f(\kappa)}{f(\kappa)}$ for a non-vanishing continuous function $f:\R_+ \to \C^*$, then
\begin{equation*}
\phi_2(C,D,\alpha)=-2\big(\arg f(+\infty)-\arg f(0)\big)\ ,
\end{equation*}
where $\arg:\R_+ \to \R$ is a continuous function defined by the argument of $f$.
In the sequel, we shall also use the notation $\Arg:\C^*\to (-\pi,\pi]$ for the principal argument of a complex number different from $0$.

Now, let us consider $\kappa>0$ and set $S(\kappa):=S^{\CD}_\alpha(\kappa)$.
For shortness, we also set $L:=\frac{\pi}{2\sin(\pi\alpha)}\;\!C$ and
\begin{gather*}
B:=\left(\begin{smallmatrix}
b_1(\kappa) & 0 \\
0 & b_2(\kappa)
\end{smallmatrix}\right)=\left(\begin{smallmatrix}
\frac{\Gamma(1-\alpha)}{2^\alpha}\;\!\kappa^{\alpha} & 0 \\
0 & \frac{ \Gamma(\alpha)}{2^{1-\alpha}}\;\!\kappa^{(1-\alpha)}
\end{smallmatrix}\right),
\quad
\Phi:=\left(\begin{smallmatrix}
e^{-i\pi\alpha/2} & 0 \\
0 & e^{-i\pi(1-\alpha)/2}
\end{smallmatrix}\right),\quad
J:=\left(\begin{smallmatrix}
1 & 0 \\ 0 & -1
\end{smallmatrix}\right).
\end{gather*}
Note that the matrices $B$, $\Phi$ and $J$ commute with each other, that the matrix $B$
is self-adjoint and invertible, and that $J$ and $\Phi$ are unitary.

I) If $D=0$, then $S^{\CD}_\alpha$ is constant and $\phi_2(C,D,\alpha)=0$.

II) Let us assume $\det(D)\neq 0$, {\it i.e.}~$D$ is invertible.
Without loss of generality, we may assume that $D=1$, as explained in \cite[Sec.~3]{PR}, and that $C$ and hence $L$
are self-adjoint. We write $C=(c_{jk})$, $L=(l_{jk})$ and we then use the expression
\begin{equation}\label{eq-SL}
S(\kappa)=\Phi \;\!\frac{B^{-1}\;\! L\;\!B^{-1} +\cos(\pi\alpha)J
+i\sin(\pi\alpha)}{B^{-1}\;\! L\;\!B^{-1} +\cos(\pi\alpha)J -i\sin(\pi\alpha)}
\;\!\Phi\;\!J\ ,
\end{equation}
derived in \cite{PR}. By direct calculation one obtains
$\det \big(S(\kappa)\big)=  \frac{\bar f(\kappa)}{f(\kappa)}$
with
\begin{eqnarray}\label{f1}
\nonumber f(\kappa) &=&
\det\big(B^{-1} L B^{-1} +\cos(\pi\alpha)J -i\sin(\pi\alpha)\big) \\
\nonumber &=&\det(L)\;\! b_1^{-2}(\kappa) \;\!b_2^{-2}(\kappa) -1 +\cos(\pi\alpha) \big(l_{22}\;\!
b_2^{-2}(\kappa) - l_{11} \;\!b_1^{-2}(\kappa)\big)\\
&&-i\sin(\pi\alpha) \big(l_{11} \;\!b_1^{-2}(\kappa)+l_{22} \;\!b_2^{-2}(\kappa)\big)
\end{eqnarray}
and $f$ is non-vanishing as the determinant of an invertible matrix.

For the computation of $\phi_2(C,D,\alpha)$ we shall have to consider several cases. We first assume that $\det(C)\neq 0$, which is equivalent to $\det(L)\neq 0$. In that case one clearly has $\det\big(S^{\CD}_\alpha(0)\big) = \det\big(S^{\CD}_\alpha(+\infty)\big)$, and then $\phi_2(C,D,\alpha)$ will be a multiple of $2\pi$. Furthermore, note that $\Arg \big(f (+\infty)\big)= \pi$ and that $\Arg\big( f(0)\big)=0$ if $\det(L)>0$ and $\Arg\big( f(0)\big)=\pi$ if $\det(L)<0$.

Assuming that $l_{11}\;\!l_{22}\geq 0$ (which means that $\Im f$ is either non-negative or non-positive, and its sign is opposite to that of $\tr(L)$), one has the following cases:
\begin{enumerate}
\item[II.1)] If  $\tr (C)>0$ and $\det(C)>0$, then $\Im f<0$ and
$\phi_2(C,D,\alpha)=2\pi$,
\item[II.2)] If $\tr (C)>0$ and $\det(C)<0$, then $\Im f<0$ and
$\phi_2(C,D,\alpha)=0$,
\item[II.3)] If $\tr (C)<0$ and $\det(C)>0$, then $\Im f>0$ and
$\phi_2(C,D,\alpha)=-2\pi$,
\item[II.4)] If $\tr (C)<0$ and $\det(C)<0$, then $\Im f>0$ and
$\phi_2(C,D,\alpha)=0$,
\item[II.5)] If $c_{11}=c_{22}=0$ (automatically $\det(C)<0$), then $f$ is real and non-vanishing, hence $\phi_2(C,D,\alpha)=0$.
\end{enumerate}

Now, if $l_{11} l_{22}<0$ the main difference is that the parameter $\alpha$
has to be taken into account. On the other hand, one has $\det(L)<0$ which implies that $\arg f(+\infty)-\arg f(0)$ has to be a multiple of $2\pi$. For the computation of this difference, observe that the equation $\Im f(\kappa)=0$ (for $\kappa\ge 0$) is equivalent to
\begin{equation}\label{eq-kkk}
\dfrac{b_1^{-2}(\kappa)}{b_2^{-2}(\kappa)}=-\dfrac{l_{22}}{l_{11}}\Longleftrightarrow
\kappa^{2\alpha-1} = 2^{2\alpha-1} \dfrac{\Gamma(\alpha)}{\Gamma(1-\alpha)}\sqrt{-\dfrac{l_{11}}{l_{22}}}.
\end{equation}
For $\alpha\ne 1/2$ this equation has a unique solution $\kappa_0$,  and it follows that the sign of $\Im f(\kappa)$ will be different for $\kappa<\kappa_0$ and for $\kappa>\kappa_0$ (and will depend on $\alpha$ and on the relative sign of $l_{11}$ and $l_{22}$).

Let us now estimate $\Re f(\kappa_0)$. We have
\begin{eqnarray*}
\Re f(\kappa)&=&\det (L)\;\! b_1^{-2}(\kappa)\;\! b_2^{-2}(\kappa) -1 +\cos(\pi\alpha) \big(l_{22} \;\!b_2^{-2}(\kappa) - l_{11}
\;\!b_1^{-2}(\kappa)\big)\\
&\leq&  -|l_{11}\;\!l_{22}|\;\! b_1^{-2}(\kappa) \;\!b_2^{-2}(\kappa) -1 +|\cos(\pi\alpha)|\big( \big|l_{22}|\;\! b_2^{-2}(\kappa) + |l_{11}|\;\! b_1^{-2}(\kappa)\big)\\
&=& -\Big(|l_{11}\;\!l_{22}|\;\! b_1^{-2}(\kappa) \;\!b_2^{-2}(\kappa) +1 -|\cos(\pi\alpha)|\big( \big|l_{22}| \;\!b_2^{-2}(\kappa) + |l_{11}| \;\! b_1^{-2}(\kappa)\big)\Big)\\
&=&-\big(1-|\cos(\pi\alpha)|\big)\big(|l_{11}\;\!l_{22}|\;\! b_1^{-2}(\kappa)\;\! b_2^{-2}(\kappa) +1\big)\\
&&-|\cos(\pi\alpha)| \big( |l_{11}|\;\!b_1^{-2}(\kappa)-1\big)\big( |l_{22}\;\!|b_2^{-2}(\kappa)-1\big).
\end{eqnarray*}
Hence using \eqref{eq-kkk} and the equality $-\frac{l_{22}}{l_{11}}= \frac{|l_{22}|}{|l_{11}|}$ one obtains
\[
\Re f(\kappa_0)\le -\big(1-|\cos(\pi\alpha)|\big)\big(|l_{11}l_{22}|\, b_1^{-2}(\kappa_0) b_2^{-2}(\kappa_0) +1\big)
-|\cos(\pi\alpha)| \big( |l_{22}|b_2^{-2}(\kappa_0)-1\big)^2<0.
\]
This estimate implies that $0$ is not contained in the interior of the curve $f(\R_+)$, which means that
$\arg f(+\infty)-\arg f(0)=0$ for all $\alpha \neq 1/2$.

For the special case $\alpha=1/2$, the equation \eqref{eq-kkk} has either no solution or holds for all $\kappa\in \R_+$. In the former situation, $\Im f$ has always the same sign, which means that the $\arg f(+\infty)-\arg f(0)=0$. In the latter situation, $f$ is real, and obviously $\arg f(+\infty)-\arg f(0)=0$.
In summary, one has obtained:
\begin{enumerate}
\item[II.6)] If $c_{11}\;\!c_{22}<0$, then $\phi_2(C,D,\alpha)=0$.
\end{enumerate}

Let us now assume that $\det (C)=0$ but $C\ne 0$, {\it i.e.}~$\det(L)=0$ but $L\neq 0$. In that case one simply has
\[
f(\kappa)= -1 +\cos(\pi\alpha) \big(l_{22}\;\! b_2^{-2}(\kappa) - l_{11}\;\! b_1^{-2}(\kappa)\big)
-i\sin(\pi\alpha) \big(l_{11} \;\!b_1^{-2}(\kappa)+l_{22} \;\!b_2^{-2}(\kappa)\big).
\]
Furthermore, one always has $l_{11}l_{22}\ge 0$, which means
that $\Im f$ is either non-negative or non-positive.
Then, since $\Arg \big(f(+\infty)\big)=\pi$, it will be sufficient
to calculate the value $\Arg \big(f(0)\big)$.

i) Assume first that $l_{11}=0$, which automatically implies that $l_{22}\neq 0$ and $l_{12}=l_{21}=0$. Then one has
\[
f(\kappa)= -1 +\cos(\pi\alpha) \;\!l_{22}\;\! b_2^{-2}(\kappa) -i\sin(\pi\alpha)\;\! l_{22}\;\! b_2^{-2}(\kappa)
\]
and
\[
\Arg \big(f(0)\big)=\begin{cases}
-\pi\alpha & \text{ if } l_{22}>0\\
\pi(1-\alpha) & \text{ if } l_{22}<0
\end{cases}\ .
\]
By taking into account the sign of $\Im f$, one then obtains
\[
\arg f(+\infty)-\arg f(0)=
\begin{cases}
-\pi(1-\alpha) & \text{ if } l_{22}>0\\
\pi \alpha & \text{ if } l_{22}<0
\end{cases}\ .
\]

ii) Similarly, if we assume now that $l_{22}=0$, we then have $l_{11}\ne 0$, $l_{12}=l_{21}=0$ and
\[
f(\kappa)= -1 -\cos(\pi\alpha) \;\!l_{11} \;\!b_1^{-2}(\kappa)
-i\sin(\pi\alpha) \;\!l_{11}\;\! b_1^{-2}(\kappa).
\]
It then follows that
\[
\Arg \big(f(0)\big)=\begin{cases}
\pi\alpha & \text{ if } l_{11}<0\\
-\pi(1-\alpha) & \text{ if } l_{11}>0
\end{cases}
\]
and
\[
\arg f(+\infty)-\arg f(0)=
\begin{cases}
\pi(1-\alpha) & \text{ if } l_{11}<0\\
-\pi \alpha & \text{ if } l_{11}>0
\end{cases}\ .
\]

iii) Assume now that $l_{11}\;\!l_{22}\ne 0$ (which means automatically $l_{11}l_{22}>0$) and that $\alpha=1/2$.
Since $b_1(\kappa)=b_2(\kappa)=:b(\kappa)$ one then easily observes that
$f(\kappa)=-1-i\tr (L)\;\! b^{-2}(\kappa)$,
$\Arg \big(f(0)\big)=-\frac{\pi}{2} \sign \big(\tr (L)\big)$ and
$\arg f(+\infty)-\arg f(0)=-\frac{\pi}{2} \sign \big(\tr (L)\big)$.

iv) Assume that $l_{11}\;\!l_{22}\ne 0$ and that $\alpha<1/2$.
In this case one can rewrite
\[
f(\kappa)= -1 +\cos(\pi\alpha)\;\! b_2^{-2}(\kappa)\Big(l_{22}  - l_{11} \frac{b_2^2(\kappa)}{b_1^2(\kappa)}\Big)
-i\sin(\pi\alpha)\;\!b_2^{-2}(\kappa) \Big(l_{22}+l_{11} \frac{b_2^2(\kappa)}{b_1^2(\kappa)}\Big).
\]
Since $b_2(\kappa)/b_1(\kappa)\to 0$ as $\kappa\to 0+$, one has the same limit values and phases as in i).

v) Similarly, if $l_{11}\;\!l_{22}\ne 0$ and  $\alpha>1/2$, we have the same limit and phases as in ii).

In summary, if $\det(C)=0$ and $C\neq 0$ one has obtained:
\begin{enumerate}
\item[II.7)] If $c_{11}=0$ and $\tr(C)>0$, or if $c_{11}\;\!c_{22}\neq 0$, $\tr(C)>0$ and $\alpha<1/2$, then $\phi_2(C,D,\alpha)=2\pi(1-\alpha)$,
\item[II.8)] If $c_{11}=0$ and $\tr(C)<0$, or if $c_{11}\;\!c_{22}\neq 0$, $\tr(C)<0$ and $\alpha<1/2$, then $\phi_2(C,D,\alpha)=-2\pi\alpha$,
\item[II.9)] If $c_{22}=0$ and $\tr(C)>0$, or if $c_{11}\;\!c_{22}\neq 0$, $\tr(C)>0$ and $\alpha>1/2$, then $\phi_2(C,D,\alpha)=2\pi\alpha$,
\item[II.10)] If $c_{22}=0$ and $\tr(C)<0$, or if $c_{11}\;\!c_{22}\neq 0$, $\tr(C)<0$ and $\alpha>1/2$, then $\phi_2(C,D,\alpha)=-2\pi(1-\alpha)$,
\item[II.11)] If $c_{11}\;\!c_{22}\neq 0$, $\tr(C)>0$ and $\alpha = 1/2$, then $\phi_2(C,D,\alpha)=\pi$,
\item[II.12)] If $c_{11}\;\!c_{22}\neq 0$, $\tr(C)<0$ and $\alpha = 1/2$, then $\phi_2(C,D,\alpha)=-\pi$.
\end{enumerate}

III) If $C=0$, then $S^{\CD}_\alpha$ is constant and $\phi_2(C,D,\alpha)=0$.

IV) We shall now consider the situation $\det(D)=0$ but $D\ne 0$.
Obviously, $\ker(D)$ is of dimension $1$. So let $p=(p_1,p_2)$ be a vector
in $\ker(D)$ with $\|p\|=1$.
Let us also introduce
\begin{equation*}
c(\kappa)=b_1^2(\kappa)\;\! |p_2|^2\;\!e^{-i\pi\alpha} - b_2^2(\kappa)\;\! |p_1|^2\;\!e^{i\pi\alpha}
\end{equation*}
and
\begin{equation*}
X_-:= \big(b_1^2(\kappa)\;\!|p_2|^2 - b_2^2(\kappa)\;\!|p_1|^2\big),\qquad
X_+:= \big(b_1^2(\kappa)\;\!|p_2|^2 + b_2^2(\kappa)\;\!|p_1|^2\big)\ .
\end{equation*}
In that case it has been shown in \cite{PR} that
\begin{equation*}
S= \Phi\;\!\big(c(\kappa)+\ell\big)^{-1} M(\kappa)\;\!\Phi\;\!J\ ,
\end{equation*}
where
\[
M(\kappa):=
\begin{pmatrix}
e^{i\pi\alpha}\;\!X_- + \ell & -2\;\!i\;\!\sin(\pi\alpha)\;\!b_1(\kappa)\;\!b_2(\kappa)\;\!p_1\;\!\bar p_2 \\
-2\;\!i\;\!\sin(\pi\alpha)\;\!b_1(\kappa)\;\!b_2(\kappa)\;\!\bar p_1\;\! p_2 &
e^{-i\pi\alpha}\;\!X_- + \ell
\end{pmatrix}
\]
and $\ell$ is a real number which will be specified below.
Note that $\det \big(M(\kappa)\big)=|c(\kappa)+\ell|^2$ which ensures that $S$ is a unitary operator.
Therefore, by setting
\[
g(\kappa):=c(\kappa)+\ell\\
=\cos(\pi\alpha)\Big(
b_1^2(\kappa)\;\!|p_2|^2-b_2^2(\kappa)\;\!|p_1|^2
\Big)+\ell
-i\sin(\pi\alpha)
\Big(
b_1^2(\kappa)\;\!|p_2|^2+b_2^2(\kappa)\;\!|p_1|^2
\Big),
\]
one has
\[
\phi_2(C,D,\alpha)=-2\big(\arg g(+\infty)- \arg g(0)\big),
\]
where $\arg:\R_+ \to \R$ is a continuous function defined by the argument of $g$.
Note already that we always have $\Im g< 0$.

We first consider the special case $\alpha=1/2$. In that case we  have $b_1(\kappa)=b_2(\kappa)=:b(\kappa)$, and then
\[
g(\kappa)=\ell-i b^2(\kappa).
\]
If $\ell\ne 0$, we have $\Arg \big(g(0)\big)=\Arg(\ell)$ and $\Arg \big(g(+\infty)\big)=-\pi/2$.
Therefore
\[
\arg g(+\infty)-\arg g(0)=\begin{cases}
-\pi/2 & \text{if } \ell>0\\
\pi/2 & \text{if } \ell<0
\end{cases} \ .
\]
If $\ell=0$, then $g$ is pure imaginary, hence $\arg g(+\infty)-\arg g(0)=0$.
In summary, for $\det(D)=0$ but $D \neq 0$, one has already obtained:
\begin{enumerate}
\item[IV.1)] If $\ell>0$ and $\alpha=1/2$, then $\phi_2(C,D,\alpha)=\pi$,
\item[IV.2)] If $\ell=0$ and $\alpha=1/2$, then $\phi_2(C,D,\alpha)=0$,
\item[IV.3)] If $\ell<0$ and $\alpha=1/2$, then $\phi_2(C,D,\alpha)=-\pi$.
\end{enumerate}

Let us now consider the case $\alpha<1/2$, and assume first that  $\ell \ne 0$. It follows that $\Arg \big(g(0)\big)=\Arg (\ell)$. To calculate $\Arg \big(g(+\infty)\big)$ one has to consider two subcases.
So, on the one hand let us assume in addition that $p_1\ne 0$. Then one has
\[
g(\kappa)= \ell -
\cos(\pi\alpha)\;\!b_2^2(\kappa)\Big(|p_1|^2-\dfrac{b_1^2(\kappa)}{b_2^2(\kappa)}\;\!|p_2|^2\Big)
-i\sin(\pi\alpha)\;\!b_2^2(\kappa)\Big(|p_1|^2+\dfrac{b_1^2(\kappa)}{b_2^2(\kappa)}\;\!|p_2|^2\Big)
.
\]
Since $b_1(\kappa)/b_2(\kappa)\to 0$ as $\kappa\to +\infty$,
one obtains $\Arg \big(g(+\infty)\big)=-\pi(1-\alpha)$
and
\[
\arg g(+\infty)-\arg g(0)=\begin{cases}
-\pi(1-\alpha), & \text{if } \ell>0\\
\pi\alpha, & \text{if } \ell<0
\end{cases} \ .
\]
On the other hand, if $p_1=0$, then one has
\[
g(\kappa)=\ell + b_1^2(\kappa)\big( \cos(\pi\alpha) -i\sin(\pi\alpha)\big),
\]
which implies that $\Arg \big(g(+\infty)\big)=-\pi\alpha$ and that
\[
\arg g(+\infty)-\arg g(0)=\begin{cases}
-\pi\alpha, & \text{if } \ell>0\\
\pi(1-\alpha), & \text{if } \ell<0
\end{cases}\ .
\]

Now, let us assume that $\ell=0$. In this case the above limits for $\kappa \to + \infty$ still hold, so we only need
to calculate $\Arg \big(g(0)\big)$. Firstly, if $p_2\ne 0$, we have
\[
g(\kappa)= \cos(\pi\alpha)\;\!b_1^2(\kappa)\Big( |p_2|^2-\dfrac{b_2^2(\kappa)}{b_1^2(\kappa)}\;\!|p_1|^2\Big)
-i\sin(\pi\alpha)\;\!b_1^2(\kappa) \Big(|p_2|^2+\dfrac{b_2^2(\kappa)}{b_1^2(\kappa)}\;\!|p_1|^2 \Big),
\]
and since $b_2(\kappa)/b_1(\kappa)\to 0$ as $\kappa \to 0+$ it follows that $\Arg \big(g(0)\big)=-\pi\alpha$. Secondly, if $p_2=0$, then
\[
g(\kappa)=-b_2^2(\kappa)\Big(\cos(\pi\alpha) +i\sin(\pi\alpha)\big)
\Big),
\]
and we get $\Arg \big(g(0)\big)=-\pi(1-\alpha)$.

In summary, for $\det(D)=0$, $D\neq 0$ and $\alpha<1/2$, we have
obtained
\begin{enumerate}
\item[IV.4)] if $\ell <0$ and $p_1\neq 0$, then $\phi_2(C,D,\alpha)=-2\pi\alpha$,
\item[IV.5)] if $\ell<0 $ and $p_1= 0$, then $\phi_2(C,D,\alpha)=-2\pi(1-\alpha)$,
\item[IV.6)] if $\ell>0 $ and $p_1\neq 0$, then $\phi_2(C,D,\alpha)=2\pi(1-\alpha)$,
\item[IV.7)] if $\ell >0$ and $p_1= 0$, then $\phi_2(C,D,\alpha)=2\pi\alpha$,
\item[IV.8)] if $\ell=0 $, $p_1\neq 0$ and $p_2\neq 0$ then $\phi_2(C,D,\alpha)=2\pi(1-2\alpha)$,
\item[IV.9)] if $\ell=0$ and $p_1= 0$ or if
$\ell=0$ and $p_2=0$, then $\phi_2(C,D,\alpha)=0$.
\end{enumerate}

The case $\det(D)=0$, $D\neq 0$ and $\alpha>1/2$ can be treated analogously. We simply state the results:
\begin{enumerate}
\item[IV.10)] if $\ell <0$ and $p_2\neq 0$, then $\phi_2(C,D,\alpha)=-2\pi(1-\alpha)$,
\item[IV.11)] if $\ell<0 $ and $p_2= 0$, then $\phi_2(C,D,\alpha)=-2\pi\alpha$,
\item[IV.12)] if $\ell>0 $ and $p_2\neq 0$, then $\phi_2(C,D,\alpha)=2\pi\alpha$,
\item[IV.13)] if $\ell >0$ and $p_2= 0$, then $\phi_2(C,D,\alpha)=2\pi(1-\alpha)$,
\item[IV.14)] if $\ell=0 $, $p_1\neq 0$ and $p_2\neq 0$ then $\phi_2(C,D,\alpha)=-2\pi(1-2\alpha)$,
\item[IV.15)] if $\ell=0$ and $p_1= 0$ or if
$\ell=0$ and $p_2=0$, then $\phi_2(C,D,\alpha)=0$.
\end{enumerate}

Let us finally recall some relationship between the constant $\ell$ and the matrices $C$ and $D$ in the case IV).
As explained before, we can always assume that $C=(1-U)/2$ and $D=i(1+U)/2$ for some $U\in U(2)$. Recall that in deriving the equalities (IV.1)--(IV.15)
we assumed $\dim[\ker(D)]=1$, {\it i.e.}~$-1$ is an eigenvalue of $U$ of multiplicity $1$.
Let $e^{i\theta}$, $\theta\in(-\pi,\pi)$ be the other eigenvalue of $U$.
Then by the construction explained in \cite{PR}, one has
\[
\ell = \dfrac{\pi}{2\sin(\pi\alpha)} \,\dfrac{1-e^{i\theta}}{i(1+e^{i\theta})}=
- \dfrac{\pi}{2\sin(\pi\alpha)} \frac{\sin \big(\frac{\theta}{2}\big)}{\cos \big(\frac{\theta}{2}\big)}.
\]
On the other hand, the eigenvalues of the matrix $CD^*=i(U-U^*)/4$
are $\lambda_1=0$ and
\[
\lambda_2=i(e^{i\theta}-e^{-i\theta})/4=-\ud \sin(\theta)=
- \sin\big({\textstyle \frac{\theta}{2}}\big)\;\!\cos\big( {\textstyle \frac{\theta}{2}}\big)\ .
\]
It follows that $\lambda_2$ and $\ell$ have the same sign.
Therefore, in (IV.1)--(IV.15) one has:
$\ell <0$ if $CD^*$ has one zero eigenvalue and one negative eigenvalue,
$\ell =0$ if $CD^*=0$ and $\ell>0$ if $CD^*$ has one zero eigenvalue and one positive eigenvalue.

\subsection{Case-by-case results}\label{Sectionfinal}

In this section we finally collect all previous results and prove the case-by-case version of Levinson's theorem. The interest of this analysis is that the contribution of the $0$-energy operator $\Gamma_1(C,D,\alpha,\cdot)$ and the contribution of the operator $\Gamma_3(C,D,\alpha,\cdot)$ at $+\infty$-energy are explicit.
Here, Levinson's theorem corresponds to the equality between the number of bound states of $H_\alpha^{\CD}$ and $-\frac{1}{2\pi} \sum_{j=1}^4\phi_j(C,D,\alpha)$. This is proved again by comparing the column $3$ with the column $7$ (the contribution of $\Gamma_4(C,D,\alpha,\cdot)$ defined in \eqref{Gma4} is always trivial).

For simplicity, we shall write $H$ for $H_\alpha^{\CD}$ and $\phi_j$ for $\phi_j(C,D,\alpha)$. We also recall that the number $\#\sigma_p(H)$ of eigenvalues of $H$ is equal to the number of negative eigenvalues of the matrix $CD^*$ \cite[Lem.~4]{PR}.

We consider first the very special situations:
\begin{center}
\begin{tabular}{|c|c|c|c|c|c|c|}
\hline
 No & Conditions & $\#\sigma_p(H)$ & $\phi_1$ & $\phi_2$ & $\phi_3$  & $\sum_j\phi_j$
 \\ \hline\hline
I &$ D=0 $&$ 0 $&$ 0 $&$ 0 $&$ 0 $&$ 0 $\\\hline
III & $C=0 $&$ 0 $&$ 2\pi $&$ 0 $&$
 -2\pi $&$ 0$ \\ \hline
\end{tabular}
\end{center}

Now, if $\det(D)\neq 0$ and $\det(C)\ne 0$, we set $E:=D^{-1}C=:(e_{jk})$ and obtains:
\begin{center}
\begin{tabular}{|c|c|c|c|c|c|c|}
\hline
 No & Conditions & $\#\sigma_p(H)$ & $\phi_1$ & $\phi_2$ & $\phi_3$  & $\sum_j\phi_j$
 \\ \hline\hline
II.1 &$ e_{11} e_{22}\ge 0$, $\tr(E)>0$, $\det(E)>0$ &$ 0 $&$ 0 $&$ 2\pi $&$ -2\pi $&$ 0$ \\\hline
II.2 &$ e_{11} e_{22}\ge 0$, $\tr(E)>0$, $\det(E)<0$ &$ 1$ &$ 0 $&$ 0 $&$ -2\pi $&$ -2\pi $\\\hline
II.3 &$ e_{11} e_{22}\ge 0$, $\tr(E)<0$, $\det(E)>0$ &$ 2 $&$ 0 $&$ -2\pi $&$ -2\pi $&$ -4\pi$ \\\hline
II.4 &$ e_{11} e_{22}\ge 0$, $\tr(E)<0$, $\det(E)<0$ &$ 1 $&$ 0 $&$ 0 $&$ -2\pi $&$ -2\pi$ \\\hline
II.5 &$ e_{11}=e_{22}=0,\det(E)< 0$&$ 1 $&$ 0 $&$ 0 $&$ -2\pi $&$ -2\pi$ \\\hline
II.6 &$ e_{11}\;\! e_{22}<0 $&$ 1 $&$ 0 $&$ 0 $&$ -2\pi $&$ -2\pi$ \\\hline
\end{tabular}
\end{center}

If $\det(D)\neq 0$, $\det(C)=0$ and if we still set $E:=D^{-1}C$ one has:
\begin{center}
\begin{tabular}{|c|c|c|c|c|c|c|}
\hline
 No & Conditions & $\#\sigma_p(H)$ & $\phi_1$ & $\phi_2$ & $\phi_3$  & $\sum_j\phi_j$
 \\ \hline\hline
II.7.a &$ e_{11}=0, \tr(E)>0 $&$ 0 $&$ 2\pi\alpha $&
$ 2\pi(1-\alpha) $&$ -2\pi $&$ 0 $\\\hline
II.7.b &$e_{11}\;\!e_{22}\neq 0,\tr(E)>0,\alpha<1/2 $&$ 0 $&$ 2\pi\alpha $&
$ 2\pi(1-\alpha) $&$ -2\pi $&$ 0 $\\\hline
II.8.a &$ e_{11}>0, \tr(E)<0 $&$ 1 $&$ 2\pi\alpha $&$ -2\pi\alpha $&$ -2\pi $&$ -2\pi$ \\\hline
II.8.b &$ e_{11}\;\!e_{22}\neq0, \tr(E)<0,\alpha<1/2 $&$ 1 $&$ 2\pi\alpha $&$ -2\pi\alpha $&$ -2\pi $&
$ -2\pi$ \\\hline
II.9.a &$ e_{22}=0, \tr(E)>0 $&$ 0 $&$ 2\pi(1-\alpha)$&$ 2\pi\alpha $&$ -2\pi $&$ 0$ \\\hline
II.9.b &$ e_{11}\;\!e_{22}\neq0, \tr(E)>0,\alpha>1/2 $&$ 0 $&$ 2\pi(1-\alpha)$&$ 2\pi\alpha $&
$ -2\pi $&$ 0$ \\\hline
II.10.a &$ e_{22}=0, \tr(E)<0 $&$ 1 $&$ 2\pi(1-\alpha) $&$ -2\pi(1-\alpha) $&$ -2\pi $&$ -2\pi $\\\hline
II.10.b &$ e_{11}\;\!e_{22}\neq0, \tr(E)<0,\alpha>1/2 $&$ 1 $&$ 2\pi(1-\alpha) $&$ -2\pi(1-\alpha) $&
$ -2\pi $&$ -2\pi $\\\hline
II.11 &$ e_{11}\;\!e_{22}\neq0,\tr(E)> 0 ,\alpha=1/2$&$ 0 $&$ \pi $&$ \pi $&$ -2\pi $&$ 0$ \\\hline
II.12 &$ e_{11}\;\! e_{22}\neq 0,\tr(E)<0,\alpha=1/2 $&$ 1 $&$ \pi $&$ -\pi $&$ -2\pi $&$ -2\pi$ \\\hline
\end{tabular}
\end{center}

On the other hand, if $\dim[\ker(D)]=1$ and $\alpha=1/2$ one has:
\begin{center}
\begin{tabular}{|c|c|c|c|c|c|c|}
\hline
 No & Conditions & $\#\sigma_p(H)$ & $\phi_1$ & $\phi_2$ & $\phi_3$  & $\sum_j\phi_j$
 \\ \hline\hline
IV.1 &$ \ell>0 $&$ 0 $&$ 0 $&$ \pi $&$ -\pi $&$ 0$ \\\hline
IV.2 &$ \ell=0 $&$ 0 $&$ \pi $&$ 0 $&$ -\pi $&$ 0$ \\\hline
IV.3 &$ \ell<0 $&$ 1 $&$ 0 $&$ -\pi $&$ -\pi$&$ -2\pi$ \\\hline
\end{tabular}
\end{center}

If $\dim[\ker(D)]=1$, $\alpha<1/2$ and if $(p_1,p_2)\in \ker(D)$ one obtains:
\begin{center}
\begin{tabular}{|c|c|c|c|c|c|c|}
\hline
 No & Conditions & $\#\sigma_p(H)$ & $\phi_1$ & $\phi_2$ & $\phi_3$  & $\sum_j\phi_j$
 \\ \hline\hline
IV.4 &$ \ell<0,p_1\neq 0 $&$ 1 $&$ 0 $&$ -2\pi\alpha $&$ -2\pi(1-\alpha) $&$ -2\pi$ \\\hline
IV.5 &$ \ell<0,p_1=0 $&$ 1 $&$ 0 $&$ -2\pi(1-\alpha) $&$ -2\pi\alpha $&$ -2\pi$ \\\hline
IV.6 &$ \ell>0,p_1\neq 0 $&$ 0 $&$ 0 $&$ 2\pi(1-\alpha) $&$ -2\pi(1-\alpha)$&$ 0$ \\\hline
IV.7 &$ \ell>0,p_1= 0 $&$ 0 $&$ 0 $&$ 2\pi\alpha $&$ -2\pi\alpha $&$ 0$ \\\hline
IV.8 &$ \ell=0,p_1\;\!p_2\neq 0 $&$ 0 $&$ 2\pi\alpha $&$ 2\pi(1-2\alpha) $&$ -2\pi(1-\alpha) $&$ 0$ \\\hline
IV.9.a &$ \ell=0,p_1= 0 $&$ 0 $&$ 2\pi\alpha $&$ 0 $&$ -2\pi\alpha$&$ 0$ \\\hline
IV.9.b &$ \ell=0,p_2= 0 $&$ 0 $&$ 2\pi(1-\alpha) $&$ 0 $&$ -2\pi(1-\alpha)$&$ 0$ \\\hline
\end{tabular}
\end{center}

Finally, if $\dim[\ker(D)]=1$, $\alpha>1/2$ and $(p_1,p_2)\in \ker(D)$ one has:
\begin{center}
\begin{tabular}{|c|c|c|c|c|c|c|}
\hline
 No & Conditions & $\#\sigma_p(H)$ & $\phi_1$ & $\phi_2$ & $\phi_3$  & $\sum_j\phi_j$
 \\ \hline\hline
IV.10 &$ \ell<0,p_2\neq 0 $&$ 1 $&$ 0 $&$ -2\pi(1-\alpha) $&$ -2\pi\alpha $&$ -2\pi$ \\\hline
IV.11 &$ \ell<0,p_2=0 $&$ 1 $&$ 0 $&$ -2\pi\alpha $&$ -2\pi(1-\alpha) $&$ -2\pi$ \\\hline
IV.12 &$ \ell>0,p_2\neq 0 $&$ 0 $&$ 0 $&$ 2\pi\alpha $&$ -2\pi\alpha$&$ 0$ \\\hline
IV.13 &$ \ell>0,p_2= 0 $&$ 0 $&$ 0 $&$ 2\pi(1-\alpha) $&$ -2\pi(1-\alpha) $&$ 0$ \\\hline
IV.14 &$ \ell=0,p_1\;\!p_2\neq 0 $&$ 0 $&$ 2\pi(1-\alpha) $&$ -2\pi(1-2\alpha) $&$ -2\pi\alpha $&$ 0$ \\\hline
IV.15.a &$ \ell=0,p_1= 0 $&$ 0 $&$ 2\pi\alpha $&$ 0 $&$ -2\pi\alpha$&$ 0$ \\\hline
IV.15.b &$ \ell=0,p_2= 0 $&$ 0 $&$ 2\pi(1-\alpha) $&$ 0 $&$ -2\pi(1-\alpha)$&$ 0$ \\\hline
\end{tabular}
\end{center}

\section{$\boldsymbol{K}$-groups, $\boldsymbol{n}$-traces and their pairings}\label{secK}

In this section, we give a very short account on the $K$-theory
for $C^*$-algebras and on various constructions related to it.
Our aim is not to present a thorough introduction to these subjects but
to recast the result obtained in the previous section in the most
suitable framework. For the first part, we refer to \cite{Roerdam} for an enjoyable
introduction to the subject.

\subsection{$K$-groups and boundary maps}

The $K_0$-group of a unital $C^*$-algebra $\E$ is constructed from
the homotopy classes
of projections in the set of square matrices with entries in $\E$.
Its addition is induced from the addition of two orthogonal projections:
if $p$ and $q$ are orthogonal projections, {\it i.e.}~$pq=0$, then also $p+q$ is a
projection. Thus, the sum of two homotopy classes $[p]_0+[q]_0$ is
defined as the class of the
sum of the block matrices $[p \oplus q]_0$ on the diagonal. This new class
does not depend on the choice of the representatives $p$ and $q$.
$K_0(\E)$ is defined as the Grothendieck group of
this set of homotopy classes of projections endowed with the mentioned addition.
In other words, the elements of the $K_0$-group are
given by formal differences:
$[p]_0-[q]_0$ is identified with $[p']_0-[q']_0$ if there exists a projection $r$
such that $[p]_0+[q']_0+[r]_0 = [p']_0+[q]_0+[r]_0$.
In the general non-unital case the
construction is a little bit more subtle.

The $K_1$-group of a $C^*$-algebra $\E$ is constructed from the homotopy classes
of unitaries in the set of square matrices with entries in the unitisation of $\E$.
Its addition is again defined by: $[u]_1+[v]_1 = [u\oplus v]_1$ as a block matrix on the
diagonal. The homotopy class of the added identity is the neutral element.

Now, let us consider three $C^*$-algebras $\J,\E$ and $\Q$ such that
$\J$ is an ideal of $\E$ and $\Q$ is isomorphic to the quotient
$\E/\J$. Another way of saying this is that $\J$ and $\Q$ are the
left and right part of an exact sequence of $C^*$-algebras
\begin{equation}\label{shortexact}
0\to \J\stackrel{\i}{\to} \E \stackrel{\q}{\to} \Q\to 0,
\end{equation}
$\i$ being an injective morphism and $\q$ a surjective morphism satisfying $\hbox{ker}\;\!
\q = \hbox{im}\;\! \i$.
There might not be any reasonable algebra morphism between $\J$
and $\Q$ but algebraic topology provides us with homomorphisms between their
$K$-groups: $\ind:K_1(\Q)\to K_0(\J)$ and $\exp:K_0(\Q)\to
K_1(\J)$, the index map and the exponential map. These maps are also referred to
as boundary maps.
For the sequel we shall be concerned only with the index map. It can be computed as follows: If $u$ is a unitary in $\Q$ then there exists a unitary
$w\in M_{2}(\E)$ such that
$\q(w)=\left( \begin{smallmatrix} u & 0 \\ 0 & u^*\end{smallmatrix}\right)$. It turns out that
$w\left(\begin{smallmatrix} 1 & 0 \\ 0 & 0 \end{smallmatrix}\right)w^*$ lies in the unitisation of $\i\big(M_2(\J)\big)$ so that
$\left(\left[w\left(\begin{smallmatrix} 1 & 0 \\ 0 & 0 \end{smallmatrix}
\right)w^*\right]_0
-\left[\left(\begin{smallmatrix} 1 & 0 \\ 0 & 0 \end{smallmatrix}
\right)\right]_0\right)$ defines an element of $K_0(\J)$. $\ind([u]_1)$ is that element.
 With a little luck there exists even a partial isometry $w\in \E$ such that $\q(w)=u$. Then  $(1-w^* w)$ and
$(1-w w^*)$ are projections in $\J$ and we have the simpler formula
\begin{equation}\label{eqformula}
\ind[u]_1 = \big[1-w^* w\big]_0-\big[1-w w^*\big]_0\ .
\end{equation}

\subsection{Cyclic cohomology, $n$-traces and Connes' pairing}\label{pourlesconstantes}

For this part, we refer to \cite[Sec.~III]{Connes} or to the short surveys presented in \cite[Sec.~5]{KRS} or in \cite[Sec.~4 \& 5]{KS}. For simplicity, we denote by $\N$ the set of natural number including $0$.

Given a complex algebra $\B$ and any $n \in \N$, let $C^n_\lambda(\B)$ be the set of $(n+1)$-linear functional on $\B$ which are cyclic in the sense that any $\eta \in C^n_\lambda(\B)$ satisfies for each $w_0,\dots,w_n\in \B$:
\begin{equation*}
\eta(w_1,\dots,w_n,w_0)=(-1)^n \eta(w_0,\dots,w_n)\ .
\end{equation*}
Then, let $\b: C^n_\lambda(\B) \to C^{n+1}_\lambda(\B)$ be the Hochschild coboundary map defined for $w_0,\dots,w_{n+1}\in \B$ by
\begin{equation*}
[\b \eta](w_0,\dots,w_{n+1}):=
\sum_{j=0}^n (-1)^j \eta(w_0,\dots,w_jw_{j+1},\dots ,w_{n+1}) +
(-1)^{n+1}\eta(w_{n+1}w_0,\dots,w_n)\ .
\end{equation*}
An element $\eta \in C^n_\lambda(\B)$ satisfying $\b\eta=0$ is called a cyclic $n$-cocyle, and the cyclic cohomology $HC(\B)$ of $\B$ is the cohomology of the complex
\begin{equation*}
0\to C^0_\lambda(\B)\to \dots \to C^n_\lambda(\B) \stackrel{\b}{\to}
C^{n+1}_\lambda(\B) \to \dots \ .
\end{equation*}

A convenient way of looking at cyclic $n$-cocycles is in terms of characters of a graded
differential algebra over $\B$. So, let us first recall that a graded differential
algebra $(\A,\d)$ is a graded algebra $\A$ together with a map $\d:\A\to \A$ of degree $+1$.
More precisely, $\A:=\oplus_{j=0}^\infty \A_j$ with each $\A_j$ an algebra over $\C$
satisfying the property $\A_j \;\!\A_k \subset \A_{j+k}$, and $\d$ is a graded
derivation satisfying $\d^2=0$.
In particular, the derivation satisfies
$\d(w_1w_2)=(\d w_1)w_2 + (-1)^{\deg(w_1)}w_1 (\d w_2)$, where $\deg(w_1)$
denotes the degree of the homogeneous element $w_1$.

A cycle $(\A,\d,\int)$ of dimension $n$ is a graded differential algebra
$(\A,\d)$, with $\A_j=0$ for $j>n$, endowed with a linear functional
$\int : \A \to \C$ satisfying $\int \d w=0$ if $w \in \A_{n-1}$ and for
$w_j\in \A_j$, $w_k \in \A_k$ :
$$
\int w_j w_k = (-1)^{jk}\int w_k w_j\ .
$$
Given an algebra $\B$, a cycle of dimension $n$ over $\B$ is a
cycle $(\A,\d,\int)$ of dimension $n$ together with a homomorphism $\rho:\B \to \A_0$.
In the sequel, we will assume that this map is injective and hence identify $\B$ with a subalgebra of $\A_0$ (and do not write $\rho$ anymore).
Now, if $w_0, \dots, w_n$ are $n+1$ elements of $\B$, one can define the character $\eta(w_0,\dots,w_n) \in \C$ by the formula:
\begin{equation}\label{defeta}
\eta(w_0,\dots,w_n):=\int w_0\;\!(\d w_1)\dots (\d w_n)\ .
\end{equation}
As shown in \cite[Prop.III.1.4]{Connes}, the map $\eta: \B^{n+1}\to \C$ is a cyclic $(n+1)$-linear functional on $\B$ satisfying
$\b\eta=0$, {\it i.e.} $\eta$ is a cyclic $n$-cocycle.
Conversely, any cyclic $n$-cocycle arises as the character of a cycle of dimension $n$ over $\B$. Let us also mention that a third description of any cyclic $n$-cocycle is presented in \cite[Sec.~III.1.$\alpha$]{Connes} in terms of the universal differential algebra associated with $\B$.

We can now introduce the precise definition of a $n$-trace over a Banach algebra. For an algebra $\B$ that is not necessarily unital, we denote by $\widetilde{\B}:= \B \oplus \C$ the algebra obtained by adding a unit to $\B$.
\begin{defin}\label{ntrace}
A $n$-trace on a Banach algebra $\B$ is the character of a cycle $(\A,\d,\int)$ of dimension $n$ over a dense subalgebra $\B'$ of $\B$ such that for all $w_1,\dots,w_n\in \B'$ and any $x_1,\dots,x_n \in \widetilde{\B}'$ there exists a constant $c= c(w_1,\dots,w_n)$ such that
$$
\left|\int (x_1 \d a_1)\dots (x_n \d w_n)\right| \leq c \|x_1\|\dots \|x_n\|\ .
$$
\end{defin}

\begin{rem}
Typically, the elements of $\B'$ are suitably smooth elements of $\B$
on which the derivation $\d$ is well defined and for which the r.h.s.~of \eqref{defeta}
is also well defined.
However, the action of the $n$-trace $\eta$ can sometines be extended to
more general elements $(w_0, \dots,w_n) \in \B^{n+1}$ by a suitable reinterpretation
of the l.h.s.~of \eqref{defeta}.
\end{rem}

The importance of $n$-traces relies on their duality relation with $K$-groups.
Recall first that $M_q(\B)\cong M_q(\C)\otimes \B$ and that $\tr$ denotes the standard
trace on matrices. Now, let $\B$ be a $C^*$-algebra and let $\eta_n$ be a $n$-trace
on $\B$ with $n \in \N$ even.
If $\B'$ is the dense subalgebra of $\B$ mentioned in Definition \ref{ntrace} and if
$p$ is a projection in $M_q(\B')$, then one sets
\begin{equation*}
\langle \eta_n,p\rangle := c_n\;\![\tr\otimes \eta_n](p,\dots,p) .
\end{equation*}
Similarly, if $\B$ is a unital $C^*$-algebra and if $\eta_n$ is a $n$-trace with $n \in \N$ odd,
then for any unitary $u$ in $M_q(\B')$  one sets
\begin{equation*}
\langle \eta_n,u\rangle := c_n \;\![\tr\otimes \eta_n](u^*,u,u^*,\dots,u)
\end{equation*}
the entries on the r.h.s.~alternating between $u$ and $u^*$. The constants $c_n$ are given by
$$
c_{2k}
\;=\;
\frac{1}{(2\pi i)^k}\,\frac{1}{k!}
\mbox{ , }
\qquad
c_{2k+1}
\;=\;
\frac{1}{(2\pi i)^{k+1}}\,
\frac{1}{2^{2k+1}}\,
\frac{1}{(k+\frac{1}{2})(k-\frac{1}{2})
\cdots\frac{1}{2}}
\mbox{ . }
$$

There relations are referred to as Connes' pairing between $K$-theory and cyclic cohomology
of $\B$ because of the following property, see \cite[Thm.~2.7]{Connes86} for a precise
statement and for its proof:
In the above framework, the values $\langle \eta_n,p\rangle$ and $\langle \eta_n,u\rangle$
depend only of the $K_0$-class $[p]_0$ of $p$ and of the $K_1$-class $[u]_1$ of $u$,
respectively.

We now illustrate these notions with two basic examples which will be of importance
in the sequel.

\begin{example}\label{exam1}
If $\B=\K(\HH)$, the algebra of compact operators on a Hilbert space $\HH$,
then the linear functional $\int$ on $\B$ is given by the usual trace $\Tr$
on the set $\K_1$ of trace class elements of $\K(\HH)$. Furthermore, since any projection
$p\in \K(\HH)$ is trace class, it follows that $\langle \eta_0,p\rangle\equiv
\langle \Tr,p\rangle$ is well defined
for any such $p$ and that this expression gives the dimension of the projection $p$.
\end{example}

For the next example, let us recall that $\det$ denotes the usual determinant of elements of $M_q(\C)$.

\begin{example}\label{exam2}
If $\B=C\big(\S^1, M_q(\C)\big)$ for some $q\geq 1$,
let us fix $\B':=C^1\big(\S^1,M_q(\C)\big)$. We parameterize $\S^1$ by the real numbers modulo $2\pi$ using $\theta$ as local coordinate.
As usual, for any $w \in \B'$
(which corresponds to an homogeneous element of degree $0$), one sets
$[\d w](\theta):=w'(\theta)\;\!\dd \theta$ (which is now an homogeneous element of degree $1$).
Furthermore, we define the graded trace $\int v \;\!\dd \theta :=\int_{-\pi}^{\pi} \tr[v(\theta)]\;\!\dd \theta$
 for an arbitrary element $v \;\!\dd \theta$ of degree $1$. This defines the $1$-trace $\eta_1$.
A  unitary element in $u\in C^1\big(\S^1,M_q(\C)\big)$ (or rather its class)  pairs as follows:
\begin{equation}\label{wn1}
\langle \eta_1,u \rangle =  c_1[\tr\otimes\eta_1](u^*,u) := \frac{1}{2\pi i}
\;\!\int_{-\pi}^\pi \tr[u(\theta)^*\;\!u'(\theta)]\;\! \dd \theta\ .
\end{equation}
For this example, the extension of this expression for any unitary $u \in
C\big(\S^1,M_q(\C)\big)$
is quite straightforward. Indeed, let us first rewrite $u=:e^{i\varphi}$ for some
$\varphi \in C^1\big(\S^1,M_q(\R)\big)$ and set $\beta(\theta):=\det[u(\theta)]$.
By using the equality $\det[e^{i\varphi}]=e^{i\tr[\varphi]}$, one then easily observed that
the quantity \eqref{wn1} is equal to
\begin{equation*}
\frac{1}{2\pi i}\int_{-\pi}^\pi \beta(\theta)^*\;\!\beta'(\theta)\;\!\dd \theta\ .
\end{equation*}
But this quantity is known to be equal to the winding number of the map
$\beta:\S^1 \to \T$, a quantity which is of topological nature and which only requires
that the map $\beta$ is continuous.
Altogether, one has thus obtained that the l.h.s.~of \eqref{wn1} is nothing but
the winding number of the map $\det[u]: \S^1 \to \T$, valid for any unitary $u \in
C\big(\S^1,M_q(\C)\big)$.
\end{example}

\subsection{Dual boundary maps}

We have seen that an $n$-trace $\eta$ over $\B$ gives rise to a functional on $K_i(\B)$ for $i=1$ or $i=2$, {\it i.e.}~the map $ \langle \eta,\cdot \rangle$ is an element of $Hom(K_i(\B),\C)$. In that sense $n$-traces are dual to the elements of the (complexified) $K$-groups. An important question is whether this dual relation is functorial in the sense that morphisms between the $K$-groups of different algebras yield dual morphisms on higher traces. Here we are in particular interested in a map on higher traces which is dual to the index map, {\it i.e.}~a map $\#$ which assigns to an even trace $\eta$ an odd trace $\#\eta$ such that
\begin{equation}\label{comp}
\langle \eta,\ind (\cdot) \rangle = \langle \#\eta,\cdot \rangle.
\end{equation}
This situation gives rise to
equalities between two numerical topological invariants.

Such an approach for relating two topological invariants has already been used at few occasions. For example, it has been recently shown that Levinson's theorem corresponds to a equality of the form \eqref{comp} for a $0$-trace and a $1$-trace \cite{KR1/2}.
In Section \ref{Sechigh} we shall develop such an equality for a $2$-trace and a $3$-trace.
On the other hand, let us mention that similar equalities have also been developed for the exponential map in \eqref{comp} instead of the index map. In this framework, an equality involving a $0$-trace and a $1$-trace has been put into evidence in \cite{Kel}. It gives rise to a relation
between the pressure on the boundary of a quantum system and the integrated density of states.
Similarly, a relation involving $2$-trace and a $1$-trace was involved in the proof of the equality between the bulk-Hall conductivity and the conductivity of the current along the edge of the sample, see \cite{KRS,KS}.

\section{Non-commutative topology and topological Levinson's theorems}\label{secAlgebra}

In this section we introduce the algebraic framework suitable for the Aharonov-Bohm model.
In fact, the following algebras were already introduced in
\cite{KRx} for the study of the wave operators in potential scattering on $\R$.
The similar form of the wave operators in the Aharonov-Bohm model and in the model studied
in that reference allows us to reuse part of the construction and the abstract results.
Let us stress that the following construction holds for fixed $\alpha$ and $(C,D)$.
These parameters will vary only at the end of the section.

\subsection{The algebraic framework}\label{lesalgebres}

For the construction of the $C^*$-algebras, let us introduce the operator
$B:=\frac{1}{2}\ln(H_0)$, where $H_0=-\Delta$ is the usual Laplace operator
on $\RR^2$. The crucial property of the operators $A$ and $B$ is that they satisfy the canonical commutation
relation $[A,B]=i$ so that $A$ generates translations in $B$ and vice versa,
\begin{equation*}
e^{iBt} A e^{-iBt} = A+t, \quad e^{iAs} B e^{-iAs} = B-s.
\end{equation*}
Furthermore, both operators leave the subspaces $\HH_m$ invariant.
More precisely, for any essentially bounded functions $\varphi$ and $\eta$ on $\R$,
the operator $\varphi(A)\eta(B)$ leaves each of these subspaces invariant.
Since all the interesting features of the Aharonov-Bohm model take place in the subspace
$\HH_\2\cong L^2(\R_+,r\;\!\dd r)\otimes \C^2$, we shall subsequently restrict our
attention to this subspace and consider functions $\varphi, \eta$ defined on $\R$
and taking values in $M_2(\C)$.

Now, let $\E$ be the closure in $\B(\HH_\2)$ of the algebra generated by elements of the form
$\varphi(A)\psi(H_0)$, where $\varphi$ is a continuous function on $\R$ with values in
$M_2(\C)$ which converges at $\pm \infty$, and $\psi$ is a continuous function $\R_+$
with values in $M_2(\C)$ which converges at $0$ and at $+\infty$.
Stated differently, $\varphi\in C\big(\bR,M_2(\C)\big)$ with $\bR=[-\infty,+\infty]$,
and $\psi \in C\big(\bRp,M_2(\C)\big)$ with $\bRp=[0,+\infty]$.
Let $\J$ be the norm closed algebra generated by $\varphi(A)\psi(H_0)$ with functions
$\varphi$ and $\psi$ for which the above limits vanish. Obviously, $\J$ is an ideal in $\E$,
and the same algebras are obtained if $\psi(H_0)$ is replaced by $\eta(B)$ with
$\eta \in C\big(\bR,M_2(\C)\big)$ or $\eta \in C_0\big(\R,M_2(\C)\big)$, respectively.
Furthermore, the ideal $\J$ is equal to the algebra of compact operators $\K(\HH_\2)$,
as shown in \cite[Sec.~4]{KRx}.

Let us already mention the reason of our interest in defining the above algebra $\E$.
Since for $m \in \{0,-1\}$ the functions $\varphi^-_m$ and $\tilde{\varphi}_{m}$ have limits
at $\pm \infty$, and since $\widetilde{S}^{\CD}_\alpha$ also has limits at $0$ and $+\infty$,
it follows from \eqref{yoyo} that the operator $\Omega_-^{\CD}|_{\HH_\2}$ belongs to $\E$.
Since $\J = \K(\HH_\2)$, the image $\q\big(\Omega_-^{\CD}|_{\HH_\2}\big)$ in $\E/\J$
corresponds to the image of $\Omega_-^{\CD}|_{\HH_\2}$ in the Calkin algebra.
This motivates the following computation of the quotient $\E/\J$.

To describe the quotient $\E/\J$ we consider the square
$\blacksquare:=\bRp\times \bR$ whose boundary $\square$ is the
union of four parts: $\square =B_1\cup B_2\cup B_3\cup
B_4$, with $B_1 = \{0\}\times \bR$, $B_2 = \bRp \times \{+\infty\}$,
$B_3 = \{+\infty\}\times \bR$ and $B_4 = \bRp\times \{-\infty\}$. We
can then view $\Q:=C\big(\square,M_2(\C)\big)$ as the subalgebra
of
\begin{equation*}
C\big(\bR,M_2(\C)\big)\oplus C\big(\bRp,M_2(\C)\big)
\oplus C\big(\bR,M_2(\C)\big)\oplus C\big(\bRp,M_2(\C)\big)
\end{equation*}
given by elements
$(\Gamma_1,\Gamma_2,\Gamma_3,\Gamma_4)$ which coincide at the
corresponding end points, that is,
$\Gamma_1(+\infty)=\Gamma_2(0)$,
$\Gamma_2(+\infty)=\Gamma_3(+\infty)$,
$\Gamma_3(-\infty)=\Gamma_4(+\infty)$,
$\Gamma_4(0)=\Gamma_1(-\infty)$.
The following lemma corresponds to results obtained in
\cite[Sec.~3.5]{Georgescu} rewritten in our framework.

\begin{lemma}\label{image}
$\E/\J$ is isomorphic to $\Q$.
Furthermore, for any $\varphi\in C\big(\bR,M_2(\C)\big)$ and for any
$\psi \in C\big(\bRp,M_2(\C)\big)$, the image of $\varphi(A)\psi(H_0)$
through the quotient map $\q: \E \to \Q$ is given by
$\Gamma_1(\cdot) = \varphi(\cdot)\psi(0)$, $\Gamma_{2}(\cdot) =
\varphi(+\infty)\psi(\cdot)$,
$\Gamma_{3}(\cdot) = \varphi(\cdot)\psi(+\infty)$ and $\Gamma_{4}(\cdot) =
\varphi(-\infty)\psi(\cdot)$.
\end{lemma}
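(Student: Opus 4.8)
The plan is to prove Lemma~\ref{image} by combining three ingredients: an abstract identification of $\E$ as a crossed-product-type algebra generated by two commuting one-parameter groups, the computation of its spectrum, and an explicit check of the restriction homomorphisms on the generators. Since the operators $A$ and $B=\frac12\ln(H_0)$ satisfy $[A,B]=i$ on $\HH_\2$, the unital $C^*$-algebra generated by $\varphi(A)\psi(B)$ with $\varphi\in C(\bR,M_2(\C))$ and $\psi\in C(\bRp,M_2(\C))$ — using that $\psi(H_0)=\tilde\psi(B)$ with $\tilde\psi\in C(\bR,M_2(\C))$ after the change of variable $\kappa\mapsto\frac12\ln\kappa^2$ — is unitarily equivalent to the algebra studied in \cite[Sec.~3.5]{Georgescu} and \cite[Sec.~4]{KRx}. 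First I would invoke that identification verbatim: there it is shown that the ideal generated by the elements with vanishing limits is exactly $\K(\HH_\2)$, and that the quotient is the algebra of continuous $M_2(\C)$-valued functions on the boundary of the square $\blacksquare=\bRp\times\bR$ with the matching conditions at the four corners. Thus $\J=\K(\HH_\2)$ and $\E/\J\cong\Q$ are both consequences of loc.~cit., once the dictionary between the two sets of generators is made explicit.

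Second, I would verify the formula for $\q$ on the generators, which is the concrete content of the lemma. The idea is that $\q$ records the four "boundary values" of the symbol of $\varphi(A)\psi(H_0)$: along $B_1$ (where $H_0\to 0$, i.e.~$B\to-\infty$) the operator looks like $\varphi(A)\psi(0)$, hence $\Gamma_1(\cdot)=\varphi(\cdot)\psi(0)$; along $B_3$ (where $H_0\to+\infty$, $B\to+\infty$) it looks like $\varphi(A)\psi(+\infty)$, hence $\Gamma_3(\cdot)=\varphi(\cdot)\psi(+\infty)$; along $B_2$ and $B_4$ (where $A\to+\infty$ and $A\to-\infty$ respectively) the roles are exchanged and one gets $\Gamma_2(\cdot)=\varphi(+\infty)\psi(\cdot)$ and $\Gamma_4(\cdot)=\varphi(-\infty)\psi(\cdot)$. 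To justify this rigorously I would argue that, modulo compacts, $\varphi(A)$ and $\psi(H_0)$ "asymptotically commute" and that for a function vanishing at $\pm\infty$ the operator $\varphi(A)$ is a norm-limit of operators $\varphi(A)\chi(H_0)$ with $\chi$ compactly supported away from $0$ and $\infty$ — this is the standard commutator estimate (e.g.~via the Cordes-type lemma used in \cite{KRx}) saying that $[\varphi(A),\psi(H_0)]\in\K(\HH_\2)$ whenever one of $\varphi,\psi$ vanishes at its endpoints. Given this, the four edge-evaluations are well-defined $*$-homomorphisms $\E\to C(B_j,M_2(\C))$, they kill $\J$, they satisfy the corner-matching relations, and on generators they take the stated values; since the generators are dense in $\E$ and the combined map $\E\to\Q$ has kernel exactly $\J$, it descends to the asserted isomorphism.

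The main obstacle — really the only non-bookkeeping point — is the commutator/approximation estimate that underlies the asymptotic commutativity of $\varphi(A)$ and $\psi(H_0)$ modulo compacts, i.e.~the statement that $\J=\K(\HH_\2)$ and not something larger. This is exactly what is proved in \cite[Sec.~4]{KRx} and \cite[Sec.~3.5]{Georgescu}, so in the paper's economy the honest move is to cite those references rather than reprove them; the remaining verification that $\q$ acts on $\varphi(A)\psi(H_0)$ by the four listed formulas is then a direct computation of boundary values that I would present in a couple of lines. I would therefore structure the written proof as: (i) recall the unitary equivalence with the setting of \cite{KRx}; (ii) quote $\J=\K(\HH_\2)$ and $\E/\J\cong\Q$ from there; (iii) identify $\Q$ with the edge algebra $C(\square,M_2(\C))$ as in \cite[Sec.~3.5]{Georgescu}; (iv) compute $\q\big(\varphi(A)\psi(H_0)\big)$ on each edge by taking the relevant limits of $\varphi$ and $\psi$, and observe the corner conditions are automatically satisfied. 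This keeps the proof to the few-line length that the surrounding text promises.
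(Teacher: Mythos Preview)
Your proposal is correct and follows essentially the same approach as the paper: the paper does not give a self-contained proof but simply states that the lemma ``corresponds to results obtained in \cite[Sec.~3.5]{Georgescu} rewritten in our framework,'' together with the earlier remark that $\J=\K(\HH_\2)$ is shown in \cite[Sec.~4]{KRx}. Your write-up is a faithful (and somewhat more explicit) unpacking of exactly that citation strategy, including the identification of the four edge-evaluations on generators.
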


Stated differently, the algebras $\J, \E$ and $\Q$ are part of the short exact sequence of $C^*$-algebras \eqref{shortexact}.
And as already mentioned, the operator $\Omega_-^{\CD}|_{\HH_\2}$ clearly belongs to $\E$. Furthermore, its image through the quotient map $\q$ can be easily computed, and in fact has already been computed. Indeed, the function $\Gamma(C,D,\alpha,\cdot)$ presented in \eqref{fctGamma} is precisely $\q\big(\Omega_-^{\CD}|_{\HH_\2}\big)$, as we shall see it in the following section.

\begin{rem}\label{remautre}
We still would like to provide an alternative description of the above algebras and of the corresponding short exact sequence.
Recall that
$\Q$ is isomorphic to $C\big(\T,M_2(\C)\big)$ and that $C(\T)$ is as a
$C^*$-algebra generated by a single
continuous bijective function $u:\T\to \T\subset\C$ with winding number $1$.
There are, up to a natural equivalence, not so many $C^*$-algebras
$\B$ which fit into an exact sequence of the form
$0\to \K\stackrel{}\to\B\stackrel{}\to C(\T)\to 0$, with $\K$ the algebra of compact operators.
In fact, it turns out that they are classified by the Fredholm-index of a
lift $\hat u$  of $u$ \cite[Thm.~IX.3.3]{Davidson}.
In the present case we can use an exactly solvable model to find out
that $\hat u$ can be taken to be an isometry of co-rank $1$ and hence
this index is $-1$ \cite{KRx}. Our extension is thus what one refers
to as the Toeplitz extension.
This means that $\E$ is the tensor product of $M_2(\C)$ with
the $C^*$-algebra generated by an element $\hat u$
satisfying  $\hat u^*\hat u = 1$ and $\hat u\hat u^* = 1 - e_{00}$
where $e_{00}$ is a rank $1$ projection. The surjection $\q$ is
uniquely defined by $\q(\hat u) = u$. Our exact sequence is thus the tensor
product with $M_2(\C)$ of the exact sequence
\begin{equation}\label{eqautre}
0 \to \K \stackrel{\i}\to C^*(\hat u) \stackrel{\q }{\to}C^*(u) \to 0.
\end{equation}
\end{rem}

\subsection{The $0$-degree Levinson's theorem, the topological approach}

We can now state the topological version of Levinson's theorem.

\begin{theorem}\label{Ktheo}
For each $\alpha \in (0,1)$ and each admissible pair $(C,D)$, one has
$\Omega_-^{\CD}|_{\HH_\2}\in \E$. Furthermore,
$\q\big(\Omega_-^{\CD}|_{\HH_\2}\big)=\Gamma(C,D,\alpha,\cdot) \in \Q$
and the following equality holds
\begin{equation*}
\ind[\Gamma(C,D,\alpha,\cdot)]_1 = -[P^{\CD}_\alpha]_0\ ,
\end{equation*}
where $P^{\CD}_\alpha$ is the orthogonal projection on the space spanned by the bound states of $H^{\CD}_\alpha$.
\end{theorem}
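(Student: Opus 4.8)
The plan is to identify all three objects appearing in the statement as living in the short exact sequence $0\to\K(\HH_\2)\to\E\xrightarrow{\q}\Q\to 0$ of Lemma~\ref{image}, and then to reduce the claimed equality to the definition of the index map together with the spectral description of the bound states recalled in Section~\ref{ssec21}. First I would check that $\Omega_-^{\CD}|_{\HH_\2}\in\E$: this is immediate from \eqref{yoyo}, since $\varphi^-_0,\varphi^-_{-1},\tilde\varphi_0,\tilde\varphi_{-1}$ are continuous on $\bR$ with limits at $\pm\infty$, and $\widetilde S^{\CD}_\alpha$ is continuous on $\bRp$ with limits at $0$ and $+\infty$ (all recalled before Proposition~\ref{propSurS}); hence $\Omega_-^{\CD}|_{\HH_\2}$ is a finite sum of products $\varphi(A)\psi(H_0)$ of the required type, up to the identification $B=\frac12\ln(H_0)$ so that $\widetilde S^{\CD}_\alpha(\sqrt{H_0})$ is a function of $H_0$. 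Next I would compute $\q\big(\Omega_-^{\CD}|_{\HH_\2}\big)$ termwise using the explicit formulas for $\q(\varphi(A)\psi(H_0))$ in Lemma~\ref{image}: evaluating $\psi=\widetilde S^{\CD}_\alpha(\sqrt{\cdot})$ at $0$, at $+\infty$, and keeping $\varphi$ at $\pm\infty$, one recovers exactly $\Gamma_1,\Gamma_2,\Gamma_3,\Gamma_4$ of \eqref{Gma1}--\eqref{Gma4}, since $\Gamma_2=\varphi^-(+\infty)\cdot\big(\mathrm{const}+\widetilde S^{\CD}_\alpha(\kappa)\big)$ collapses to $S^{\CD}_\alpha(\kappa)$ via \eqref{eq-stilde} (here one uses $\varphi^-_m(+\infty)=e^{2i\delta^\alpha_m}$ and the diagonal prefactor $\mathrm{diag}(e^{-i\pi\alpha},e^{i\pi\alpha})$), and $\Gamma_4$ collapses to the identity because $\varphi^-_m(-\infty)=1$, $\tilde\varphi_m(-\infty)=0$. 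This gives the second assertion, $\q\big(\Omega_-^{\CD}|_{\HH_\2}\big)=\Gamma(C,D,\alpha,\cdot)$, and in particular confirms that $\Gamma(C,D,\alpha,\cdot)$ is a unitary in $\Q$, so $[\Gamma(C,D,\alpha,\cdot)]_1\in K_1(\Q)$ makes sense.

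For the main equality I would invoke the formula \eqref{eqformula} for the index map. The point is that $\Omega_-^{\CD}$ is itself an \emph{isometry} on $\HH_\2$ (the wave operator of an asymptotically complete scattering system, whose cokernel is the span of the bound states of $H^{\CD}_\alpha$ inside $\HH_\2$), and its image under $\q$ is precisely the unitary $\Gamma(C,D,\alpha,\cdot)$. Thus $w:=\Omega_-^{\CD}|_{\HH_\2}$ is a partial isometry in $\E$ lifting $u:=\Gamma(C,D,\alpha,\cdot)$, with $1-w^*w=0$ and $1-ww^*=P^{\CD}_\alpha$, the orthogonal projection onto the span of the bound states. Plugging into \eqref{eqformula} gives directly
\begin{equation*}
\ind[\Gamma(C,D,\alpha,\cdot)]_1=[1-w^*w]_0-[1-ww^*]_0=[0]_0-[P^{\CD}_\alpha]_0=-[P^{\CD}_\alpha]_0,
\end{equation*}
which is exactly the asserted identity. (That the bound states of $H^{\CD}_\alpha$ all live in $\HH_\2$, and that $P^{\CD}_\alpha$ is finite rank — in fact rank $\#\sigma_p(H^{\CD}_\alpha)=\#\{\text{negative eigenvalues of }CD^*\}\le 2$ — was recorded in Section~\ref{ssec21}, via $\ker(H^{\CD}_\alpha-z)=\gamma(z)\ker(DM(z)-C)$ and the reduction $\Omega_-^{\CD}|_{\HH_\2^\bot}=\Omega_-^{\AB}|_{\HH_\2^\bot}$, the latter being unitary since $H^{\AB}_\alpha$ has no bound states.)

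The step needing the most care is the verification that $w=\Omega_-^{\CD}|_{\HH_\2}$ really is a partial isometry in $\E$ with $w^*w=1_{\HH_\2}$ and $ww^*=1-P^{\CD}_\alpha$, i.e.\ that asymptotic completeness holds on the channel $\HH_\2$ and that the orthogonal complement of $\ran\Omega_-^{\CD}|_{\HH_\2}$ is spanned by genuine $L^2$ eigenfunctions and nothing else (no singular continuous spectrum, no zero-energy or threshold obstruction contributing to the range deficiency). This is essentially the content of the spectral analysis of \cite{PR} restricted to $\HH_\2$, so I would cite it; the remaining subtlety is purely bookkeeping — checking that the prefactor matrices in \eqref{yoyo} combine with \eqref{eq-stilde} to yield \emph{exactly} the four edge functions $\Gamma_j$ of \eqref{Gma1}--\eqref{Gma4} rather than merely conjugates of them, which is where the specific normalisations $\delta^\alpha_m$ and the limits $\varphi^-_m(\pm\infty)$, $\tilde\varphi_m(\pm\infty)$ are used. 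Once that is in place, the $K$-theoretic conclusion is the one-line computation above.
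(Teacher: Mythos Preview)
Your proposal is correct and follows essentially the same approach as the paper's own proof: verify membership in $\E$ from the explicit form \eqref{yoyo}, compute $\q$ via Lemma~\ref{image} to recover the four edge functions, and then apply formula \eqref{eqformula} with the isometric wave operator as lift to obtain $[0]_0-[P^{\CD}_\alpha]_0$. Your write-up is in fact considerably more detailed than the paper's terse proof; the only slip is the intermediate expression for $\Gamma_2$, which should read $\mathrm{diag}(\varphi^-_m(+\infty))+\mathrm{diag}(\tilde\varphi_m(+\infty))\,\widetilde S^{\CD}_\alpha(\kappa)=\mathrm{diag}(e^{-i\pi\alpha},e^{i\pi\alpha})+\widetilde S^{\CD}_\alpha(\kappa)$ rather than a product, but the conclusion and the cited ingredients are correct.
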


\begin{rem}
Recall that by Atkinson's theorem the image of
any Fredholm operator $F\in\B(\HH_\2)$ in the Calkin algebra $\B(\HH_\2)/ \K(\HH_\2)$
is invertible. Then, since the wave operators $\Omega_-^{\CD}|_{\HH_\2}$ is an isometry
and a Fredholm operator,
it follows that each function $\Gamma_j(C,D,\alpha,\cdot)$ takes values in $U(2)$.
In fact, this was already mentioned when the functions $\Gamma_j(C,D,\alpha,\cdot)$ were introduced.
\end{rem}

\begin{proof}[Proof of Theorem~\ref{Ktheo}]
The image of $\Omega_-^{\CD}|_{\HH_\2}$ through the quotient map $\q$ is easily obtained by taking
the formulae recalled in Lemma \ref{image} into account.
Then, since $\Omega_-^{\CD}|_{\HH_\2}$ is a lift for $\Gamma(C,D,\alpha,\cdot)$,
the image of $[\Gamma(C,D,\alpha,\cdot)]_1$ though the index map is obtained by the formula \eqref{eqformula}:
\begin{eqnarray*}
\ind[\Gamma(C,D,\alpha,\cdot)]_1 &=& \big[1-\big(\Omega_-^{\CD}|_{\HH_\2}\big)^* \;\!\Omega_-^{\CD}|_{\HH_\2}\big]_0-\big[1-\Omega_-^{\CD}|_{\HH_\2}\;\! \big(\Omega_-^{\CD}|_{\HH_\2}\big)^*\big]_0\\
&=& [0]_0-\big[P^{\CD}_\alpha\big]_0\ .
\end{eqnarray*}
\end{proof}

Theorem~\ref{Ktheo} covers the $K$-theoretic part of Levinson's theorem. In order to get a genuine Levinson's theorem, by which we mean an equality between topological numbers, we need to add the dual description, {\it i.e.}~identify higher traces on $\J$ and $\Q$ and a dual boundary map. As a matter of fact, the algebras considered so far are too simple to allow for non-trivial results in higher degree
and so we must content ourselves here to identify a suitable $0$-trace and $1$-trace which can be
applied to $P^{\CD}_\alpha$ and $\Gamma(C,D,\alpha,\cdot)$, respectively.
Clearly, only the usual trace $\Tr$ can be applied on the former term, {\it cf.}~Example \ref{exam1}
of Section \ref{secK}.
On the other hand, since $\Gamma(C,D,\alpha,\cdot) \in C\big(\square,U(2)\big)$,
we can define the winding number $\wind\big[\Gamma(C,D,\alpha,\cdot)\big]$ of the map
\begin{equation*}
\square \ni \zeta \mapsto \det [\Gamma(C,D,\alpha,\zeta)]\in \T
\end{equation*}
with orientation of $\square$ chosen clockwise, {\it cf.}~Example \ref{exam2}
of Section \ref{secK}.
Then, the already stated Theorem \ref{Lev0} essentially reformulates the fact that the $0$-trace is mapped to the $1$-trace
by the dual of the index map. The first equality of  Theorem \ref{Lev0} can then be found in Proposition 7 of \cite{KRx} and the equality between the cardinality of $\sigma_p(H_\alpha^{\CD})$ and the number of
negative eigenvalues of the matrix $CD^*$ has been shown in \cite[Lem.~4]{PR}.

\subsection{Higher degree Levinson's theorem}\label{Sechigh}

The previous theorem is a pointwise $0$-degree Levinson's theorem. More precisely, it was obtained for fixed $C,D$ and $\alpha$. However, it clearly calls for making these parameters
degrees of freedom and thus to include them into the description of the algebras. In the context of our physical model this amounts to considering
families of self-adjoint extensions of $H_\alpha$.
For that purpose we use the one-to-one parametrization of these extensions with elements $U \in U(2)$
introduced in Remark \ref{1to1}. We denote the self-adjoint extension corresponding to $U \in U(2)$ by $H_\alpha^U$.

So, let us consider a smooth and compact orientable $n$-dimensional manifold $X$ without boundary.
Subsequently, we will choose for $X$ a two-dimensional submanifold of $U(2)\times (0,1)$.
Taking continuous functions over $X$ we get a new short exact sequence
\begin{equation}\label{degresup}
0 \to C(X,\J)\stackrel{}\to C(X,\E) \stackrel{}\to C(X,\Q)\to 0\ .
\end{equation}
Furthermore, recall that $\J$ is endowed with a $0$-trace and the algebra $\Q$ with
a $1$-trace.
There is a standard construction in cyclic cohomology, the cup
product, which provides us with
a suitable $n$-trace on the algebra $C(X,\J)$ and a corresponding $n+1$-trace
on the algebra $C(X,\Q)$, see \cite[Sec.~III.1.$\alpha$]{Connes}. We describe it here in terms of cycles.

Recall that any smooth and compact manifold $Y$ of dimension $d$
naturally defines a structure of a graded differential algebra $(\A_\Y,\d_\Y)$,
the algebra of its smooth differential $k$-forms.
If we assume in addition that $Y$ is orientable so that we can choose
a global volume form,
then the linear form $\int_\Y$ can be defined by integrating the
$d$-forms over $Y$.
In that case, the algebra $C(Y)$ is naturally endowed with the $d$-trace defined
by the character of the cycle $(\A_\Y,\d_\Y,\int_\Y)$ of dimension $d$
over the dense subalgebra $C^\infty(Y)$.

For the algebra $C(X,\J)$, let us recall that $\J$ is equal to the
algebra $\K(\HH_\2)$
and that the $0$-trace on $\J$ was simply the usual trace $\Tr$.
So, let $\K_1$ denote the trace class elements of $\K(\HH_\2)$.
Then, the natural graded differential algebra associated with
$C^\infty(X,\K_1)$ is given by $(\A_\X\otimes \K_1,\d_\X)$.
The resulting $n$-trace on $C(X,\J)$ is then defined by the character
of the cycle $(\A_\X\otimes \K_1,\d_\X,\int_\X\otimes \Tr)$ over the dense
subalgebra $C^\infty(X,\K_1)$ of $C(X,\J)$. We denote it by $\eta_X$.

For the algebra $C(X,\Q)$, let us recall that $\Q =C\big(
\square,M_2(\C)\big)$
with $ \square \cong \S^1$, and thus
$C(X,\Q) \cong C\big(X\times\S^1,M_2(\C)\big) \cong C(X \times \S^1)
\otimes M_2(\C)$.
Since $X\times \S^1$ is a compact orientable manifold without
boundary, the above construction
applies also to $C\big(X\times\S^1,M_2(\C)\big)$. More precisely,
the exterior derivation on $X\times \S^1$ is the sum of  $\d_\X$ and
$\d_{\S^1}$ (the latter was denoted simply by $\d$ in Example
\ref{exam2}). Furthermore, we consider the natural volume form on $X\times \S^1$.
Note because of the factor $M_2(\C)$ the graded trace of the cycle
involves the usual matrix  trace $\tr$.
Thus the resulting $n+1$-trace is the character of the cycle
$(\A_{\X\times \S^1}\otimes M_2(\C),\d_\X+\d_{\S^1},\int_{\X\times
  \S^1}\otimes \tr)$. We denote it by $\#\eta_X$.

Having these constructions at our disposal we can now state the main result of this section.
For the statement, we use the one-to-one parametrization of the extensions of $H_\alpha$
introduced in Remark \ref{1to1} and let $\alpha\in(0,1)$.
We  consider a family $\{\Omega_-(H_\alpha^U,H_0)\}_{(U,\alpha)\in X} \in \B(\HH_\2)$,
parameterized by some compact orientable and boundaryless submanifold $X$ of $U(2)\times (0,1)$. This family defines a map $\bOmega:X\to \E$,
$\bOmega(U,\alpha) = \Omega_-(H_\alpha^U,H_0)$, a map $\bGamma:X\to \Q$,
$\bGamma(U,\alpha,\cdot) = \Gamma\big(C(U),D(U),\alpha,\cdot\big)= \q(\Omega_-(H_\alpha^U,H_0))$, and
a map $\bP:X\to \J$,
$\bP(U,\alpha) = P_\alpha^U$  the orthogonal
projection of the subspace of $\HH_\2$ spanned by the bound states of $H_\alpha^U$.

\begin{theorem}\label{thm-ENN}
Let $X$ be a smooth, compact and orientable $n$-dimensional submanifold of $U(2)\times (0,1)$ without boundary.
Let us assume that the map $\bOmega:X\to \E$
is continuous. Then the following equality holds:
\begin{equation*}
\ind[\bGamma]_1 = -[\bP]_0
\end{equation*}
where $\ind$ is the index map from $K_1\big(C(X,\Q)\big)$ to $K_0\big(C(X,\J)\big)$.
Furthermore, the numerical equality
\begin{equation}\label{eq23}
\big\langle \#\eta_{X},[\bGamma]_1 \big\rangle
=-
\big \langle \eta_{X},[\bP]_0 \big\rangle
\end{equation}
also holds.
\end{theorem}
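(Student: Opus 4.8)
\textbf{Proof proposal for Theorem \ref{thm-ENN}.}
The plan is to derive the $K$-theoretic equality exactly as in the pointwise case (Theorem \ref{Ktheo}) but now working over the algebra $C(X,\E)$, and then to deduce the numerical equality \eqref{eq23} from the general duality $\langle\eta,\ind(\cdot)\rangle=\langle\#\eta,\cdot\rangle$ between the cup-product traces $\eta_X$ and $\#\eta_X$. First I would observe that continuity of $\bOmega:X\to\E$ means $\bOmega\in C(X,\E)$, and that since each $\Omega_-(H_\alpha^U,H_0)|_{\HH_\2}$ is an isometry with $\bOmega(U,\alpha)^*\bOmega(U,\alpha)=1$ and $\bOmega(U,\alpha)\bOmega(U,\alpha)^*=1-P_\alpha^U$, the element $\bOmega$ is a partial isometry in $C(X,\E)$ whose image under the quotient map $C(X,\E)\to C(X,\Q)$ is precisely $\bGamma$. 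This is a genuine \emph{partial isometry lift} of the unitary $\bGamma\in C(X,\Q)$, so the simple index formula \eqref{eqformula} applies verbatim: $\ind[\bGamma]_1=[1-\bOmega^*\bOmega]_0-[1-\bOmega\bOmega^*]_0=[0]_0-[\bP]_0=-[\bP]_0$. For this step I must check that $\bP$ (i.e.\ $(U,\alpha)\mapsto P_\alpha^U$) is indeed a continuous family of finite-rank projections, hence an element of $C(X,\J)=C(X,\K(\HH_\2))$; this follows from $\bP=1-\bOmega\bOmega^*$ together with the hypothesis that $\bOmega$ is continuous, and one should note $\J=\K(\HH_\2)$ as recalled in Section \ref{lesalgebres}.

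Next I would invoke the functoriality of the cup product in cyclic cohomology. The short exact sequence \eqref{degresup} is obtained from \eqref{shortexact} by tensoring (over $\C$, in the sense of $C(X,\cdot)$) with $C(X)$, and the $n$-trace $\eta_X$ on $C(X,\J)$, respectively the $(n+1)$-trace $\#\eta_X$ on $C(X,\Q)$, are by construction the cup products of the $d$-trace of the cycle $(\A_X,\d_X,\int_X)$ with the $0$-trace $\Tr$ on $\J$, respectively with the $1$-trace $\eta_1$ on $\Q$. The key general fact — this is the content of \cite[Sec.~III.1.$\alpha$]{Connes} on the compatibility of the cup product with the boundary maps of the six-term exact sequence — is that cupping with a fixed trace on $C(X)$ intertwines the index map of \eqref{shortexact} with the index map of \eqref{degresup}, and correspondingly the dual boundary map $\#$ on traces. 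Combined with the pointwise duality relation $\langle\eta_1,\ind_{\E/\J}(\cdot)\rangle=\langle\#\eta_1,\cdot\rangle$ already used to prove Theorem \ref{Lev0} (the dual-of-the-index statement recorded after Theorem \ref{Ktheo}), this gives
\begin{equation*}
\langle\#\eta_X,[\bGamma]_1\rangle=\langle\#\eta_X,[\bGamma]_1\rangle, \qquad \langle\eta_X,\ind[\bGamma]_1\rangle=\langle\#\eta_X,[\bGamma]_1\rangle.
\end{equation*}
Then I would substitute $\ind[\bGamma]_1=-[\bP]_0$ from the first part and conclude $\langle\#\eta_X,[\bGamma]_1\rangle=-\langle\eta_X,[\bP]_0\rangle$, which is exactly \eqref{eq23}. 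Here one uses that Connes' pairing depends only on the $K$-classes, so that the equality $\ind[\bGamma]_1=-[\bP]_0$ in $K_0(C(X,\J))$ transports directly to an equality of the numerical pairings with $\eta_X$.

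The main obstacle I expect is the bookkeeping around the $n$-trace $\#\eta_X$: one must verify that $\#\eta_X$, as constructed from the cycle $(\A_{X\times\S^1}\otimes M_2(\C),\d_X+\d_{\S^1},\int_{X\times\S^1}\otimes\tr)$, really is the image of $\eta_X$ under the dual boundary map for the sequence \eqref{degresup}, and not merely ``a'' natural $(n+1)$-trace on $C(X,\Q)$. Concretely this requires (i) identifying $\#\eta_X$ with the cup product $\tau_X\#\eta_1$ where $\tau_X$ is the fundamental-class $d$-trace on $C(X)$ and $\eta_1$ is the winding-number $1$-trace of Example \ref{exam2}, using that $X\times\S^1$ carries the product de Rham complex; and (ii) citing/checking the compatibility $\#(\tau_X\#\eta)=\tau_X\#(\#\eta)$ of the dual boundary map with the cup product, together with the already-established $\#\eta_1 = $ (the $0$-trace $\Tr$)'s dual partner. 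All of these are standard in cyclic cohomology but deserve at least a careful pointer to \cite{Connes} (and to \cite{KRS,KS} where analogous cup-product arguments are carried out), and one should also remark that the orientation conventions on $\square\cong\S^1$ and on $X$ must be fixed consistently so that no spurious sign enters \eqref{eq23}. Beyond that, the analytic input — continuity of $\bOmega$ and hence of $\bP$, and the boundedness estimates of Definition \ref{ntrace} for $\eta_X$ and $\#\eta_X$ on the relevant smooth dense subalgebras — is routine given the explicit smooth dependence of the wave operators on $(C,D,\alpha)$ recorded in Section \ref{recall}.
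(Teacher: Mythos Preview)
Your argument for the first equality is exactly the paper's: use $\bOmega$ as a continuous partial isometry lift of $\bGamma$ and apply \eqref{eqformula} pointwise, noting that continuity of $\bOmega$ forces continuity of $\bP=1-\bOmega\bOmega^*$.

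For the numerical equality \eqref{eq23} your route differs from the paper's. The paper does not argue via the cup-product decomposition $\eta_X=\tau_X\#\Tr$, $\#\eta_X=\tau_X\#\eta_1$ together with a compatibility $\#(\tau_X\#\eta)=\tau_X\#(\#\eta)$; instead it invokes directly the Elliott--Natsume--Nest result \cite{ENN88} that $\eta_X\mapsto\#\eta_X$ is dual to the index map, and separately sketches in Appendix~\ref{appa} a concrete verification based on the Toeplitz description of $\E$ from Remark~\ref{remautre}: one builds an explicit right inverse $P\mapsto U_P:=U\jj(P)+(1-\jj(P))$ to $\ind$ and checks the identity $\langle\#\eta_X,[U_P]_1\rangle=\langle\eta_X,[P]_0\rangle$ by the computation of \cite[Thm.~A.10]{KRS}. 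Your approach is more structural and would, if the cup-product/boundary compatibility is granted, yield the statement for arbitrary base-algebra traces in one stroke; the paper's Appendix route is more hands-on and avoids appealing to that compatibility, at the price of redoing a KRS-style calculation tailored to the Toeplitz extension. Both lead to the same conclusion. Two small points on your write-up: the displayed line $\langle\#\eta_X,[\bGamma]_1\rangle=\langle\#\eta_X,[\bGamma]_1\rangle$ is a tautology and presumably a slip; and the reference to \cite[Sec.~III.1.$\alpha$]{Connes} concerns the construction of the cup product, not its compatibility with boundary maps---for the latter the paper points to \cite{ENN88} (or the explicit KRS computation), and you should do likewise rather than leave the ``citing/checking'' as an exercise.
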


\begin{proof}
For the first equality we can simply repeat pointwise the proof of Theorem~\ref{Ktheo}. Since we required $\bOmega$ to be continuous, its kernel projection $\bP$ is continuous as well.
The second equality follows from of a more general formula stating that the map $\eta_X\mapsto \#\eta_X$ is dual to the boundary maps \cite{ENN88}.  We also mention that another proof can be obtained by mimicking the calculation given in the Appendix of \cite{KRS}. For the convenience of the reader, we
sketch it in the Appendix~\ref{appa} and refer to \cite{KRS} for details.
\end{proof}

Let us point out that r.h.s.~of \eqref{eq23} is the Chern number of the vector bundle given by the eigenstates of $H_\alpha^U$. The next section is devoted to a computation of this number for a special choice of manifold $X$.

\subsection{An example of a non trivial Chern number}

We shall now choose a $2$-dimensional manifold $X$ and show that the above relation between the corresponding $2$-trace and $3$-trace is not trivial. More precisely, we shall choose a manifold $X$ such that the r.h.s. of \eqref{eq23} is not equal to $0$.

For that purpose, let us fix two complex numbers  $\lambda_1,\lambda_2$ of modulus $1$
with $\Im \lambda_1<0< \Im \lambda_2$ and consider the set $X \subset U(2)$ defined by :
\begin{equation*}
X = \left\{ V
\left(\begin{smallmatrix}
\lambda_1 & 0 \\
0 & \lambda_2
\end{smallmatrix}\right)
 V^* \mid V\in U(2)\right\}.
\end{equation*}
Clearly, $X$ is a two-dimensional smooth and compact manifold without boundary, which can be parameterized by
\begin{equation}\label{param}
X= \left\{
\left(\begin{matrix}
\rho^2 \lambda_1 + (1-\rho^2) \lambda_2 &
 \rho(1-\rho^2)^{1/2} \;\!e^{i\phi}(\lambda_1-\lambda_2) \\
\rho(1-\rho^2)^{1/2} \;\!e^{-i\phi}(\lambda_1-\lambda_2) &
(1-\rho^2) \lambda_1 + \rho^2 \lambda_2
\end{matrix}\right)
\mid \rho \in [0,1] \hbox{ and } \phi \in [0,2\pi)\right\}.
\end{equation}
Note that the $(\theta,\phi)$-parametrization of $X$ is complete in the sense that it covers all the manifold injectively away from a subset of codimension $1$,
but it has coordinate singularities at $\rho\in \{0,1\}$.

By \cite[Lem.~15]{PR}, for each $U\equiv U(\rho,\phi)\in X$ the operator $H^U_\alpha$ has a single negative eigenvalue $z\equiv z(U)$
defined by the equality $\det \big( (1+U) M(z) +i (1-U)\big)=0$, and one has
\begin{equation}\label{eq-khu}
\ker (H^U_\alpha -z)=\gamma(z) \ker \big( (1+U) M(z) +i (1-U)\big).
\end{equation}
Here, $M(z)$ is the Weyl function which is a $2\times 2$ diagonal matrix
and $\gamma(z):\C^2\to \HH$ an injective linear map (see subsection \ref{ssec21}).
The orthogonal projection onto $\ker (H^U_\alpha -z)$ is denoted by $P^U_\alpha$ and we shall consider $E=\{\im P^U_\alpha\mid U \in X\}$ which is a subbundle of the trivial bundle $X\times \HH$.
Our next aim is to calculate its Chern number $\ch(E)$, first in terms of the Chern number of a simpler bundle. In view of \eqref{eq-khu}  $X\times\C^2 \ni (U,\xi) \mapsto (U,\gamma(z(U))\xi)\in X\times\HH$ defines a continuous isomorphism between the subbundle $F$ of the trivial bundle $X\times \CC^2$ defined by
\begin{equation*}
F=\big\{\ker \big( (1+U) M(z) +i (1-U)\big)\mid U \in X\big\}.
\end{equation*}
and $E$, and hence
 $\ch(E)=\ch(F)$.
Now, the assumptions on $\lambda_1$ and $\lambda_2$ imply that for any $U \in X$ the matrix $(1-U)$ is invertible and one can then consider the self-adjoint operator
\[
T(U)=i\,\dfrac{1-U}{1+U}\ .
\]
By setting $\lambda_j =: e^{i\varphi_j}$ with $\varphi_1 \in (-\pi,0)$ and $\varphi_2 \in (0,\pi)$, and then $r_i = \tan \frac{\varphi_i}{2}$ we get
\[ T(U)
= \left(\begin{matrix}
\rho^2r_1 +(1-\rho^2)r_2 & \rho (1-\rho^2)^{1/2}e^{i\phi}(r_1-r_2) \\ \rho (1-\rho^2)^{1/2}e^{-i\phi}(r_1-r_2)  &
(1-\rho^2)r_1 +\rho^2 r_2
\end{matrix}\right)
\]
for some $\rho \in [0,1]$ and $\phi \in [0,2\pi)$ given by \eqref{param}.
Thus, by using the parametrization of $U$ and $z$ in terms of $(\rho,\phi)$ one obtains that the bundle $E$ is
isomorphic to the bundle $G$ defined by
\begin{equation*}
G=\big\{\ker \big( G(\rho,\phi)\big)
\mid \rho \in [0,1] \hbox{ and } \phi \in [0,2\pi)\big\}.
\end{equation*}
with
\[G(\rho,\phi)
:= \left(\begin{matrix}
M_{11}\big(z(\rho,\phi)\big)+\rho^2r_1 +(1-\rho^2)r_2 & \rho (1-\rho^2)^{1/2}e^{i\phi}(r_1-r_2) \\ \rho (1-\rho^2)^{1/2}e^{-i\phi}(r_1-r_2)  &
M_{22}\left(z(\rho,\phi)\right)+ (1-\rho^2)r_1 +\rho^2 r_2
\end{matrix}\right)\ .
\]
Recall that $z(\rho,\phi)$ is defined by the condition $\det \big(G(\rho,\phi)\big)=0$, {\it i.e.}
\begin{equation}\label{eq-annul}
\big(
M_{11}\big(z(\rho,\phi)\big)+\rho^2r_1 +(1-\rho^2)r_2
\big)\cdot
\big(
M_{22}\left(z(\rho,\phi)\right)+ (1-\rho^2)r_1 +\rho^2 r_2
\big)=(r_1-r_2)^2 (1-\rho^2)\rho^2.
\end{equation}
Finally, since $M(z)$ is self-adjoint for $z \in \R_-$, the matrix  $G(\rho,\phi)$ is  self-adjoint, and hence
$\ker G(\rho,\phi) = \big(\im G(\rho,\phi)\big)^\perp$.
In particular, if one defines the bundle
\begin{equation}\label{eq-bf}
H=\big\{\im G(\rho,\phi)\mid \rho \in [0,1] \hbox{ and } \phi \in [0,2\pi)\big\}
\end{equation}
one obviously has $G+H=X\times \CC^2$, and then
$\ch(G)=-\ch(H)$ as the Chern number of the trivial bundle $X\times \CC^2$ is zero.
In summary, $\ch(E) = -\ch(H)$, which we are going to calculate after the following remark.

\begin{rem}\label{exchern}
Let $A:X\to M_2(\CC)$ be a continuously differentiable map with $A(x)$ of rank $1$ for all $x \in X$. Let us recall how to
calculate the Chern number of the bundle $B=\{\im A(x)\mid x\in X\}$.
Assume that the first column $A_1$ of $A$ vanishes only on a finite set $Y$.
If $Y$ is empty, the bundle is trivial and $\ch (B)=0$.
So let us assume that $Y$ is non-empty.
Let $P(x)$ be the matrix of the orthogonal projection onto $\im A(x)$ in $\CC^2$.
By definition, one has
\[
\ch(B)=\dfrac{1}{2\pi i}\int_X \tr \big(P \;\d_X P\wedge \d_X P\big).
\]
Now, for $\epsilon>0$ consider an open set $V_\epsilon\subset X$ with $Y\subset V_\epsilon$, having a $C^1$ boundary
and such that $\vol_X V_\epsilon\to 0$ as $\epsilon\to 0$.
By continuity and compactness, the differential form $\tr \big(P \; \d_X P \wedge\d_X P\big)$ is bounded, and then
\[
\ch(B)=\dfrac{1}{2\pi i}\lim_{\epsilon\to 0}\int_{X \setminus V_\epsilon} \tr \big(P \;\d_X P\wedge \d_X P\big).
\]
For $x\in X \setminus V_\epsilon$ one can consider the vector
\[
\psi(x)=\dfrac{A_1(x)}{\|A_1(x)\|}
\]
and by a direct calculation, one obtains $\tr \big(P \; \d_X P\wedge \d_X P\big)= \d_X\Bar \psi \wedge \d_X\psi$.
Since the differential form $\d_X\Bar \psi \wedge \d_X\psi$ is exact, then $\d_X \Bar \psi \wedge \d_X\psi=\d_X(\Bar \psi \;\d_X\psi)$ and by Stokes' theorem,
one obtains
\[
\ch(B)=
\dfrac{1}{2\pi i}\lim_{\epsilon\to 0}
\int_{\partial V_\epsilon} \Bar \psi\;\d_X\psi.
\]
\end{rem}

Let us apply the above constructions to the bundle \eqref{eq-bf}.
Since $(r_1 -r_2)\neq 0$ the first column $G_1(\rho,\phi)$ of the matrix $G(\rho,\phi)$
can potentially vanish only for $\rho=0$ or for $\rho=1$.
As already mentioned, these two points are the coordinate singularities of the parametrization. But by a local change of parametrization, one easily get rid of this pathology.
Thus, we first consider $\rho=1$ and let $(\theta_1, \theta_2) \in (-1,1)^2$ be a local parametrization of a neighbourhood of the point $\rho=1$ which coincides with $(\theta_1, \theta_2)=(0,0)$.
Let $\widetilde{G}$ be the expression of the function $G$ in the coordinates $(\theta_1, \theta_2)$ and in a neighbourhood of the point $\rho=1$. For this function one has
\[
\widetilde{G}_1(0,0) = \begin{pmatrix}
M_{11} \big(z(0,0)\big) + r_1\\
0
\end{pmatrix}
\]
Now, note that under our assumptions one has $r_1<0$ and $r_2>0$.
As seen from the explicit expressions for $M$,
the entries of $M(z)$ are negative for $z<0$.
Then the term $M_{11} \big(z(0,0)\big) + r_1$ can not be equal to
$0$ and this also holds for the first coefficient of $\widetilde{G}_1(0,0)$.

For $\rho =0$ let $(\vartheta_1, \vartheta_2) \in (-1,1)^2$ be a local parametrization of a neighbourhood of the point $\rho=0$ which coincides with $(\vartheta_1, \vartheta_2)=(0,0)$.
Let again $\widehat{G}$ be the expression of the function $G$ in the coordinates $(\vartheta_1, \vartheta_2)$ and in a neighbourhood of the point $\rho=0$. Then one has
\[
\widehat{G}_1(0,0) = \begin{pmatrix}
M_{11} \big(z(0,0)\big) + r_2\\
0
\end{pmatrix}\ .
\]
In that case, since $M_{22}(z)+ r_1$ is strictly negative for any $z\in \R_-$ one must have $M_{11} \big(z(0,0)\big) + r_2=0$ in order to satisfy Equation \eqref{eq-annul}.
Therefore, the corresponding point $\rho=0$ belongs to $Y$, as introduced in Remark \ref{exchern}.
Therefore, in our case $Y$ consists in a single point $y$ corresponding to $\rho=0$.

Now, for $\epsilon>0$ consider the set
\begin{equation*}
V_\epsilon= \left\{
\left(\begin{matrix}
\rho^2 \lambda_1 + (1-\rho^2) \lambda_2 &
 \rho(1-\rho^2)^{1/2} \;\!e^{i\phi}(\lambda_1-\lambda_2) \\
\rho(1-\rho^2)^{1/2} \;\!e^{-i\phi}(\lambda_1-\lambda_2) &
(1-\rho^2) \lambda_1 + \rho^2 \lambda_2
\end{matrix}\right)
\mid \rho \in [0,\epsilon) \hbox{ and } \phi \in [0,2\pi)\right\}.
\end{equation*}
Obviously, this set satisfies the conditions of Remark \ref{exchern}. We can then represent
\[
G_1(\rho,\phi) = \begin{pmatrix}
M_{11}\big(z(\rho,\phi)\big)+\rho^2r_1 +(1-\rho^2)r_2 \\
\rho (1-\rho^2)^{1/2}e^{-i\phi}(r_1-r_2)
\end{pmatrix}
=:
\begin{pmatrix}
g(\rho,\phi)\\
f(\rho) e^{-i\phi}
\end{pmatrix}
\]
with $f,g$ real, and set
\[
\psi(\rho,\phi):=\dfrac{G_1(\rho,\phi)}{\|G_1(\rho,\phi)\|}=
\begin{pmatrix}
\dfrac{g(\rho,\phi)}{\sqrt{f^2(\rho)+g^2(\rho,\phi)}}\\
\dfrac{f(\rho) e^{-i\phi}}{\sqrt{f^2(\rho)+g^2(\rho,\phi)}}
\end{pmatrix}.
\]
Then one has
\begin{eqnarray*}
\int_{\partial V_\epsilon(y)} \Bar\psi\,\d_X \psi
&=&
\int_0^{2\pi} \Big[ \dfrac{g}{\sqrt{f^2+g^2}}\;\!\partial_\phi \Big(\dfrac{g}{\sqrt{f^2+g^2}}\Big)
+ \dfrac{f e^{i\cdot}}{\sqrt{f^2+g^2}}\;\!
\partial_\phi\Big(\dfrac{f e^{-i\cdot}}{\sqrt{f^2+g^2}}\Big)
\Big](\epsilon,\phi) \;\!\dd\phi \\
&=&
-i\int_0^{2\pi} \Big[\dfrac{f^2}{f^2+g^2}\Big](\epsilon,\phi) \dd\phi
\end{eqnarray*}

Thus, one has obtained that
\begin{equation}\label{eq-cf}
\ch(H)=-\dfrac{1}{2\pi}\lim_{\epsilon\to 0} \int_0^{2\pi} \dfrac{f^2(\epsilon)}{f^2(\epsilon)+g^2(\epsilon,\phi)} \,\dd\phi.
\end{equation}
Furthermore, note that Equation \eqref{eq-annul}
can be rewritten as
$g(\rho,\phi) h(\rho,\phi)=f^2(\rho)$,
where $h(\rho,\phi)=\big(M_{22}\left(z(\rho,\phi)\right)+ (1-\rho^2)r_1 +\rho^2 r_2\big)$ does not vanish in a sufficiently small neighbourhood of the point $\rho=0$.
Then one  has  $g(\rho,\phi)=o\big(f(\rho)\big)$ uniformly in $\phi$ as $r$ tends to $0$. By
substituting this observation into \eqref{eq-cf}
one obtains
\[
\ch(H)=-\dfrac{1}{2\pi} \int_0^{2\pi} \lim_{\epsilon\to 0}\dfrac{f^2(\epsilon)}{f^2(\epsilon)+g^2(\epsilon,\phi)} \,\dd\phi
=-\dfrac{1}{2\pi} \int_0^{2\pi}  \,\dd\phi=-1.
\]
As a consequence, by returning to the original bundle $E$, one has obtained  $\ch(E)=-\ch(H)=1$.

As a corollary, one easily prove:

\begin{prop}\label{propfinal}
Let $\lambda_1,\lambda_2$ be two complex numbers of modulus $1$
with $\Im \lambda_1<0< \Im \lambda_2$ and consider the set $X \subset U(2)$ defined by \eqref{param}.
Then the map $\bOmega: X \to \E$
is continuous and the following equality holds:
\begin{equation*}
\frac{1}{24\pi^2} \int_{X\times \square}\tr\big[\bGamma^* \; \d_{X\times\square}\bGamma\wedge
\d_{X\times\square}\bGamma^* \wedge\d_{X\times\square}\bGamma \big]
= 1
\end{equation*}
\end{prop}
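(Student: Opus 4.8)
The plan is to recognize the left-hand side as (minus) the Connes pairing $\langle \#\eta_X,[\bGamma]_1\rangle$, to run it through the numerical identity of Theorem~\ref{thm-ENN} down to the pairing $\langle \eta_X,[\bP]_0\rangle$, and finally to invoke the identification of the latter with the Chern number $\ch(E)$, which has just been computed to be $1$.

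First I would verify the hypothesis of Theorem~\ref{thm-ENN}, namely that $\bOmega:X\to\E$ is continuous. The key point is that since $\Im\lambda_1<0<\Im\lambda_2$, neither eigenvalue of any $U\in X$ equals $-1$, so $D(U)=\frac{i}{2}(1+U)$ is invertible throughout $X$; hence on the whole of $X$ the scattering matrices are given by the explicit formula \eqref{eq-SL}, and together with \eqref{yoyo}, the explicit expressions for $\varphi^-_m,\tilde\varphi_m$, and the continuous dependence of the boundary-triple data on $U$ established in \cite{PR}, the map $U\mapsto\bOmega(U)$ is norm-continuous into $\E$ (the matrix functions entering \eqref{yoyo} converge at the endpoints uniformly in $U\in X$). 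The same regularity makes $z(U)$ and the rank-one projection $\bP(U)=P_\alpha^U$ depend smoothly on $U\in X$, so all the pairings below are evaluated on smooth representatives and are well defined.

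Next I would apply the numerical equality \eqref{eq23} of Theorem~\ref{thm-ENN}, namely $\langle\#\eta_X,[\bGamma]_1\rangle=-\langle\eta_X,[\bP]_0\rangle$, and unfold each side. For the right-hand side, using that $\eta_X$ is the character of the cycle $(\A_X\otimes\K_1,\d_X,\int_X\otimes\Tr)$ and the even pairing constant $c_2=\frac{1}{2\pi i}$, one gets
\[
\langle\eta_X,[\bP]_0\rangle=\frac{1}{2\pi i}\int_X\tr\big(\bP\,\d_X\bP\wedge\d_X\bP\big)=\ch(E),
\]
which is precisely the Chern number of the bound-state bundle in the sense of Remark~\ref{exchern}, and this was computed above to equal $-\ch(H)=1$. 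For the left-hand side, $\#\eta_X$ is the character of the cycle $(\A_{X\times\square}\otimes M_2(\C),\d_X+\d_{\S^1},\int_{X\times\square}\otimes\tr)$ (with $\square\cong\S^1$), and the odd pairing formula with constant $c_3=\frac{1}{(2\pi i)^2}\cdot\frac{1}{6}=-\frac{1}{24\pi^2}$ gives
\[
\langle\#\eta_X,[\bGamma]_1\rangle=-\frac{1}{24\pi^2}\int_{X\times\square}\tr\big[\bGamma^*\,\d_{X\times\square}\bGamma\wedge\d_{X\times\square}\bGamma^*\wedge\d_{X\times\square}\bGamma\big].
\]
Combining the three displays yields
\[
\frac{1}{24\pi^2}\int_{X\times\square}\tr\big[\bGamma^*\,\d_{X\times\square}\bGamma\wedge\d_{X\times\square}\bGamma^*\wedge\d_{X\times\square}\bGamma\big]=-\langle\#\eta_X,[\bGamma]_1\rangle=\langle\eta_X,[\bP]_0\rangle=\ch(E)=1,
\]
which is the assertion.

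I expect the only genuinely delicate part to be the bookkeeping of constants, signs and orientations: one must check that the universal constant $c_3$, the alternating pattern of $\bGamma$ and $\bGamma^*$ in Connes' pairing, and the chosen orientation of $X\times\square$ conspire to produce exactly the integral as written — in particular that, after expanding $\d_{X\times\square}=\d_X+\d_{\S^1}$ in $\d_{X\times\square}\bGamma\wedge\d_{X\times\square}\bGamma^*\wedge\d_{X\times\square}\bGamma$, only the component of top degree on $X\times\square$ survives, and that the resulting combinatorial factor is the one absorbed in $c_3$. Everything else is a direct application of Theorem~\ref{thm-ENN} together with the already-completed evaluation $\ch(E)=1$.
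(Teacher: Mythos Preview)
Your argument is correct and follows essentially the same route as the paper: verify continuity of $\bOmega$ via the explicit formula \eqref{eq-SL} (noting that $1+U$ is invertible on $X$), then apply Theorem~\ref{thm-ENN} with $n=2$, identifying $\langle\eta_X,[\bP]_0\rangle$ with the Chern number $\ch(E)=1$ just computed. You supply more detail than the paper on the constants and signs (in particular the evaluation $c_3=-\tfrac{1}{24\pi^2}$ and the resulting chain of equalities), and this bookkeeping is carried out correctly.
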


\begin{proof}
Continuity of $X \ni U \mapsto \Omega_-(H_\alpha^U,H_0) \in \E$ is proved in Appendix~\ref{appc}.
The equation is an application of Theorem~\ref{thm-ENN} with $n=2$ with $\eta_X$ defined by the first Chern character over $X$:
$\langle\eta_X, [\bP]_0 \rangle = \frac{1}{2\pi i} \int_{X}\Tr\big[ \bP \;\d_X\bP \wedge\d_X\bP\big] = \ch(E)$.
\end{proof}

\appendix

\newcommand{\jj}{\mathfrak j}

\section{Proof of Theorem \ref{thm-ENN}}\label{appa}

As already mentioned, we simply sketch the proof of the second equality of Theorem \ref{thm-ENN} mimicking the approach of the Appendix of \cite{KRS}.
Note that this proof is based on the alternative description of the $C^*$-algebras provided in Remark \ref{remautre}.

\begin{proof}[Proof of the second statement of Theorem \ref{thm-ENN}]
1) Let us first observe that the short exact sequence \eqref{eqautre} illuminates better the $K$-theory associated with the relevant algebras.
Indeed, the relations for $\hat u$ tell us immediately that
$1-e_{00}$ and $1$ are Murray-von Neumann equivalent and hence define the same $K_0$-element in $\E$.
It follows that the two maps $K_0(\i):K_0(\K)\to K_0\big(C^*(\hat u)\big)$ and $K_1(\i):K_1(\K)\to K_1\big(C^*(\hat u)\big)$ are the zero maps, so that the six-term exact
sequence splits into two short exact sequences, see \cite[Chap.~12]{Roerdam} for more information on the six-term exact
sequence.
>From this one may conclude that the inclusion $\j:\C\ni 1\mapsto 1\in C^*(\hat u)$ induces an isomorphism in $K$-theory.
The two exact sequences in $K$-theory therefore become for $i=0,1$ mod $2$:
$$ 0\to K_i(\C) \stackrel{K_i(\j)}{\longrightarrow} K_i(C^*(u))
\stackrel{\delta_i}\to K_{i-1}(\K)\to 0,$$
where $\delta_i$ are the boundary maps, and in particular $\delta_1=\ind$.

Let us now consider a smooth and compact orientable $n$-dimensional manifold $X$ without boundary and the associated short exact sequence introduced in Section \ref{Sechigh}.
The above description has the following
generalisation:
$C(X,\E)\cong C\big(X,M_2(\C)\big)\otimes C^*(\hat u)$ and
the map $\j': C\big(X,M_2(\C)\big) \to C\big(X,M_2(\C)\big)\otimes C^*(\hat u)$, $f \mapsto \j'(f)\equiv f\otimes 1$, induces an isomorphism in $K$-theory.
Furthermore, the short exact sequence \eqref{degresup} is isomorphic to the following one:
\begin{equation}\label{seqKRS}
0 \to C\big(X,M_2(\C)\big)\otimes \K\big(L^2(\R_+)\big)\stackrel{}\to C\big(X,M_2(\C)\big)\otimes C^*(\hat u) \stackrel{}\to C\big(X,M_2(\C)\big)\otimes C^*(u)\to 0\ .
\end{equation}
This exact sequence is the Toeplitz extension of the crossed
product of the algebra $C\big(X,M_2(\C)\big)$ by the trivial action of $\Z$. Note that
Pimsner and Voiculescu have considered the general case of an action
of $\Z$ on a $C^*$-algebra \cite{PV80}.
Our interest in \eqref{seqKRS} relies on the study of a more general short exact sequence performed in the Appendix of \cite{KRS}
(in that reference, the action of $\Z$ is general) and on the corresponding dual boundary maps.

2) Once this framework is settled, the next part of the proof consists in constructing a right inverse for $\ind$. The map $\jj: C\big(X,M_2(\C)\big) \to C\big(X,M_2(\C)\big)\otimes \K\big(L^2(\R_+)\big)$, $\jj(f) = f\otimes e_{00}$ induces an isomorphism in $K$-theory \cite{Roerdam}. It is hence sufficient to construct a pre-image under $\ind$ of an element of the form $[\jj(P)]_0$ where $P$ is a projection in $C\big(X,M_2(\C)\big)\otimes M_k(\C)$. Here $k$ is arbitrary and in principle higher $k$ are needed, but for simplicity of the notation we shall set $k=1$, the more general case being a simple adaptation.
Let $U\in C\big(X,M_2(\C)\big)\otimes C^*(u)$ be given by $U = 1\otimes u$, and set $U_P:=U \jj(P)+\big(1-\jj(P)\big)$. Then one has to show that $U_P$ is a unitary in $C\big(X,M_2(\C)\big)\otimes C^*(u)$ and that $\ind[U_P]_1 = [\jj(P)]_0$.
However, this calculation is well-known and in particular is performed in \cite[Prop.~A.1]{KRS}) to which we refer. Note that since the action of $\Z$ is trivial, the expression of $U_P$ introduced here is even simpler than the formula presented in that reference.

3) The last step consists in checking that the numerical equality
\begin{equation*}
\big\langle \#\eta_{X},[U_P]_1\big\rangle=
\big \langle \eta_{X},[P]_0 \big\rangle
\end{equation*}
holds. Again this direct computation has already been performed in \cite[Thm.~A.10]{KRS} to which we refer for details. Note that the constants $c_{2k}$ and $c_{2k+1}$ introduced in Section \ref{pourlesconstantes} follow from this computation.
Since the above equality has been proved for arbitrary elements of the corresponding algebras, we can then apply the result to $\bP\in C(\X,\J)$ and recall that $\bGamma\in C(X,\Q)$ is a right inverse to $-\bP$ for the map $\ind$, {\it i.e.~}$\ind[\bGamma]_1=-[\bP]_0$.
\end{proof}

\section{Proof of Lemma \ref{variationarg}}\label{appb}

Denote for brevity $\varphi=\varphi_{a,b}$.
We first observe that
\begin{equation*}
\varphi(x)^{-1} \varphi'(x)=i\Big(\frac{\Gamma'(a+ix)}{\Gamma(a+ix)}
+\frac{\Gamma'(a-ix)}{\Gamma(a-ix)}
-\frac{\Gamma'(b-ix)}{\Gamma(b-ix)}
-\frac{\Gamma'(b+ix)}{\Gamma(b+ix)}\Big).
\end{equation*}
Since the function $\Gamma$ is real on the real positive axis, let us choose a continuous
determination of the logarithm, denoted by $\log$,
such that $\log\big(\Gamma(y+ix)\big)|_{x=0}\in \R$ for any $y \in \R_+^*$.
Then, one observes that
\begin{equation*}
\varphi(x)^{-1} \varphi'(x)=\frac{\dd}{\dd x}\;\!I(x,a,b)
\end{equation*}
with
\begin{equation*}
I(x,a,b):=\log\big(\Gamma(a+ix)\big)-\log\big(\Gamma(a-ix)\big)
+\log\big(\Gamma(b-ix)\big)-\log\big(\Gamma(b+ix)\big)\ .
\end{equation*}
It follows that
\begin{equation*}
\Var[\varphi]=\frac{1}{i} \big[\lim_{x\to \infty} I(x,a,b) - \lim_{x\to -\infty}I(x,a,b)\big]
=\frac{2}{i} \lim_{x\to \infty} I(x,a,b)\ .
\end{equation*}

Now, let us denote by $\ln$ the principal determination of the logarithm,
{\it i.e.}~$\ln(z)=\ln(|z|)+i\theta(z)$, where $\theta:\C^*\to (-\pi,\pi]$ is the principal argument
of $z$.
We recall from
\cite[Eq.~6.1.37]{AS} that for $z\to \infty$ with $|\theta(z)|<\pi$:
\begin{equation*}
\Gamma(z)\cong e^{-z}\;\!e^{(z-1/2)\ln(z)}\;\!(2\pi)^{1/2} \big(1+O(z^{-1})\big)\ .
\end{equation*}
For $z = y+ix$, the term $e^{-z}\;\!e^{(z-1/2)\ln(z)}$ can be rewritten as
\begin{equation*}
e^{-y}\;\!e^{(y-1/2)\ln(\sqrt{x^2+y^2})}\;\!e^{-x\theta(y+ix)}
\exp\big\{-i\big(x-x\ln\big(\sqrt{x^2+y^2}\big)-(y-1/2)\theta(y+ix)\big)\big\}\ .
\end{equation*}
It follows that for $|x|$ large enough, one has
\begin{eqnarray*}
\log\big(\Gamma(y+ix)\big)&\cong&-y+ (y-1/2)\ln\big(\sqrt{x^2+y^2}\big)-x\theta(y+ix)+\ud \ln(2\pi) \\
&&-i\big(x-x\ln\big(\sqrt{x^2+y^2}\big)-(y-1/2)\theta(y+ix)\big)\ .
\end{eqnarray*}

By taking this asymptotic development into account, one obtains:
\begin{eqnarray*}
I(x,a,b)&\cong&
-x\big[\theta(a+ix)+\theta(a-ix)-\theta(b-ix)-\theta(b+ix)\big]\\
&&+ix\big[2\ln\big(\sqrt{a^2+x^2}\big)-2\ln\big(\sqrt{b^2+x^2}\big)\big] \\
&&+i(a-\ud)\big[\theta(a+ix)-\theta(a-ix)\big] + i(b-\ud)\big[\theta(b-ix)-\theta(b+ix)\big]\ .
\end{eqnarray*}
Clearly, for any $x$ one has
\begin{equation*}
\theta(a+ix)+\theta(a-ix)-\theta(b-ix)-\theta(b+ix)=0\ .
\end{equation*}
Furthermore, some calculations of asymptotic developments show that
\begin{equation*}
\lim_{|x|\to \infty}x\big[2\ln\big(\sqrt{a^2+x^2}\big)-2\ln\big(\sqrt{b^2+x^2}\big)\big]=0\ .
\end{equation*}
It thus follows that
\begin{eqnarray*}
&&\lim_{x\to \infty}I(x,a,b)\\
&=&i\lim_{x\to\infty}\Big\{(a-\ud)\big[\theta(a+ix)-\theta(a-ix)\big] +
(b-\ud)\big[\theta(b-ix)-\theta(b+ix)\big]\Big\}\\
&=&i(a-\ud)\big[\textstyle{\frac{\pi}{2}-\big(-\frac{\pi}{2}\big)}\big]+
i(b-\ud)\big[\textstyle{\big(-\frac{\pi}{2})-\frac{\pi}{2}}\big]\\
&=&i\pi(a-b)\ .
\end{eqnarray*}

\section{Continuity of the wave operator}\label{appc}

In this section we show that the map $X \ni U \mapsto \Omega_-(H_\alpha^U,H_0) \in \E$ is continuous under the assumptions
of Proposition~\ref{propfinal}. In view of the representations \eqref{yoyo} and \eqref{eq-stilde} for the wave operators
it is sufficient to show the continuity of the map $X \ni U \mapsto S^U\in \B$, where $\B$ is the space of bounded continuous matrix-valued functions  $S:[0,+\infty]\to M_2(\C)$ endowed with the norm
\[
\left\|
\begin{pmatrix}
s_{11}(\cdot) & s_{12}(\cdot) \\
s_{21}(\cdot) & s_{22}(\cdot)
\end{pmatrix}
\right\|=\max_{1\le j,k\le 2} \sup_{\kappa\geq 0} \big|s_{jk}(\kappa)\big|.
\]
Note that we use the notation $S^U$ for $S^{\CD}_\alpha$ with $C=C(U)$ and $D=D(U)$ defined in Remark \ref{1to1}.
Let us set
\[
L=L(U)=\dfrac{\pi}{2\sin(\pi\alpha)}\,\dfrac{1-U}{i(1+U)}=:(l_{jk})
\]
and use again the representation \eqref{eq-SL}:
\begin{equation*}
S^U(\kappa)=\Phi \;\!\frac{B^{-1}\;\! L\;\!B^{-1} +\cos(\pi\alpha)J
+i\sin(\pi\alpha)}{B^{-1}\;\! L\;\!B^{-1} +\cos(\pi\alpha)J -i\sin(\pi\alpha)}
\;\!\Phi\;\!J\ ,
\end{equation*}
with
\begin{gather*}
B\equiv B(\kappa):=\left(\begin{smallmatrix}
\frac{\Gamma(1-\alpha)}{2^\alpha}\;\!\kappa^{\alpha} & 0 \\
0 & \frac{ \Gamma(\alpha)}{2^{1-\alpha}}\;\!\kappa^{(1-\alpha)}
\end{smallmatrix}\right),
\quad
\Phi:=\left(\begin{smallmatrix}
e^{-i\pi\alpha/2} & 0 \\
0 & e^{-i\pi(1-\alpha)/2}
\end{smallmatrix}\right),\quad
J:=\left(\begin{smallmatrix}
1 & 0 \\ 0 & -1
\end{smallmatrix}\right).
\end{gather*}
Then by observing that the map $X\ni U\mapsto L(U) \in M_2(\C)$ is continuous in the usual matrix norm, it follows that the map
$$
L(X)\times [0,\infty] \ni (L,\kappa) \mapsto \frac{B(\kappa)^{-1}\;\! L\;\!B(\kappa)^{-1} +\cos(\pi\alpha)J
+i\sin(\pi\alpha)}{B(\kappa)^{-1}\;\! L\;\!B(\kappa)^{-1} +\cos(\pi\alpha)J -i\sin(\pi\alpha)} \in M_2(\C).
$$
is also continuous. This implies the required continuity of the map $X \ni U \mapsto S^U\in \B$.


\begin{thebibliography}{999999}

\bibitem{AS} M. Abramowitz, I. Stegun: \emph{Handbook of mathematical functions with formulas, graphs, and mathematical tables}, National Bureau of Standards Applied Mathematics Series {\bf 55}, U.S.~Government Printing Office, Washington D.C., 1964.

\bibitem{AT} R. Adami, A. Teta: \emph{On the Aharonov-Bohm Hamiltonian}, Lett. Math. Phys. {\bf 43} (1998), 43--54.

\bibitem{AB} Y. Aharonov, D. Bohm: \emph{Significance of electromagnetic potentials
in the quantum theory}, Phys. Rev. {\bf 115} (1959), 485--491.

\bibitem{Graf} G. Br\"aunlich, G.M. Graf, G. Ortelli, \emph{ Equivalence of topological and scattering approaches to quantum pumping}, Comm. Math. Phys. {\bf 295} (2010), 243--259.

\bibitem{BGP} J. Br\"uning, V. Geyler, K. Pankrashkin:
\emph{Spectra of self-adjoint extensions and applications to solvable Schr\"odinger operators},
Rev. Math. Phys. {\bf 20} (2008), 1--70.

\bibitem{Connes86} A. Connes: \emph{Cyclic cohomology and the transverse fundamental class of a foliation}, in Geometric methods in operator algebras (Kyoto, 1983), 52--144,
Pitman Res. Notes Math. Ser. {\bf 123}, Longman Sci. Tech., Harlow, 1986.

\bibitem{Connes} A. Connes: \emph{Noncommutative geometry}, Academic  Press, Inc., San Diego, CA, 1994.

\bibitem{Davidson} K. Davidson: \emph{$C^*$-algebras by example},
American Mathematical Society, 1996.

\bibitem{DS} L. D\c{a}browski, P. \v{S}\v{t}ovi\v{c}ek:
\emph{Aharonov-Bohm effect with $\delta$-type interaction},
J. Math. Phys. {\bf 39} (1998), 47--62.

\bibitem{ENN88} G. Elliott, T. Natsume, R. Nest: \emph{Cyclic cohomology
for one-parameter smooth crossed products}, Acta Math. {\bf 160} (1988), 285--305.

\bibitem{Georgescu} V. Georgescu, A. Iftimovici:
\emph{$C^*$-algebras of quantum Hamiltonians}, in Operator Algebras
and Mathematical Physics, Conference Proceedings: Constan\c{t}a (Romania) July 2001, 123--167, Theta Foundation, 2003.

\bibitem{Ha} M. Harmer: \emph{Hermitian symplectic geometry and extension theory},
J. Phys. A {\bf 33} (2000), 9193--9203.

\bibitem{Kel} J. Kellendonk: \emph{Gap labelling and the pressure on the boundary}, Commun. Math. Phys. {\bf 258} no. 3 (2005), 751--768.

\bibitem{KR1} J. Kellendonk, S. Richard: \emph{Levinson's theorem for Schr\"odinger operators with point interaction: a topological approach}  J. Phys. A {\bf 39} (2006), no. 46, 14397--14403.

\bibitem{KRx} J. Kellendonk, S. Richard: \emph{On the structure of the wave operators in
one dimensional potential scattering},
Mathematical Physics Electronic Journal {\bf 14} (2008), 1--21.

\bibitem{KR1/2} J. Kellendonk, S. Richard: \emph{The topological meaning of Levinson's theorem, half-bound states included},
J. Phys. A: Math. Theor. {\bf 41} (2008), 295207.

\bibitem{KR3D} J. Kellendonk, S. Richard: \emph{On the wave operators and Levinson's theorem for potential scattering in $\R^3$}, submitted.

\bibitem{KRS} J. Kellendonk, T. Richter, H. Schulz-Baldes: \emph{Edge current channels and Chern numbers in the integer quantum Hall effect}, Rev. Math. Phys. {\bf 14} no.1 (2002), 87--119.

\bibitem{KS} J. Kellendonk, H. Schulz-Baldes: \emph{Boundary maps for $C^*$-crossed products with $\mathbb R$ with an application to the quantum Hall effect},  Comm. Math. Phys. {\bf 249} no. 3 (2004), 611--637.

\bibitem{PR} K. Pankrashkin, S. Richard: \emph{Spectral and scattering theory for the Aharonov-Bohm operators}, to appear in Reviews in Mathematical Physics.

\bibitem{PV80}
M. Pimsner, D. Voiculescu:
\emph{Exact sequences for K-groups of certain cross-products of C$^*$-algebras},
J. Op. Theory {\bf 4} (1980),  93--118.

\bibitem{R} S. Richard: \emph{New formulae for the Aharonov-Bohm wave operators},
in Spectral and Scattering Theory for Quantum Magnetic Systems, 159--168,
Contemporary Mathematics {\bf 500}, AMS, Providence, Rhode Island, 2009.

\bibitem{RT} S. Richard, R. Tiedra de Aldecoa: \emph{New formulae for the wave operators for a rank one interaction},
Integral Equations and Operator Theory {\bf 66} (2010), 283--292.

\bibitem{Roerdam} M. R{\o}rdam, F. Larsen, N. Laustsen:
\emph{An introduction to $K$-theory for $C\sp *$-algebras}, London Mathematical Society Student Texts {\bf 49},
Cambridge University Press, Cambridge, 2000.

\bibitem{Rui} S.N.M. Ruijsenaars: \emph{The Aharonov-Bohm effect and scattering theory}, Ann. Physics {\bf 146} no. 1 (1983), 1--34.

\end{thebibliography}
\end{document}